\documentclass[12pt]{caltech_thesis}
\usepackage[hyphens]{url}
\usepackage{lipsum}
\usepackage{graphicx}
\usepackage{amsthm}
\usepackage{bbm}
\usepackage{thmtools}
\usepackage{thm-restate}

\usepackage{todonotes}

\usepackage[utf8]{inputenc}
\usepackage[T1]{fontenc}
\usepackage{newtxtext,newtxmath}
\DeclareUnicodeCharacter{2009}{\,}

\usepackage[numbers,sort&compress]{natbib}
\usepackage[sectionbib]{bibunits}
\defaultbibliographystyle{plainnat}
\usepackage{enumitem}

\usepackage{algorithm}
\usepackage{algorithmic}

\newtheorem{theorem}{Theorem}[section]
\newtheorem{lemma}{Lemma}[section]
\newtheorem{corollary}{Corollary}[section]
\newtheorem{proposition}{Proposition}[section]
\newtheorem{definition}{Definition}[section]
\newtheorem{remark}{Remark}[section]
\newtheorem{problem}{Problem}
\newtheorem{result}{Result}
\newtheorem{fact}{Fact}[section]

\usepackage{hyperref}
\hypersetup{
    colorlinks=true,
    linkcolor=blue,
    filecolor=magenta,      
    urlcolor=cyan,
    citecolor=blue,
}

\begin{document}

\title{Efficient and SPAM-Robust Ansatz-Free Lindbladian Learning}
\author{Savar Dayal Sinha}

\degreeaward{Bachelor of Science}                 
\university{California Institute of Technology}    
\address{Pasadena, California}                     
\unilogo{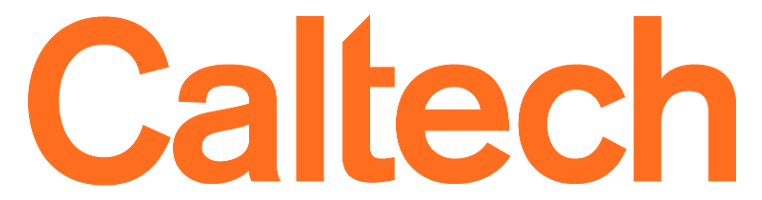}                                 
\copyyear{2026}  
\defenddate{May 29, 2026}          

\orcid{0000-0003-0155-0883}

\rightsstatement{All rights reserved}

\maketitle[logo]

\begin{acknowledgements}
    First and foremost, I must express my deepest gratitude to Professor Yu Tong and Professor John Preskill. Over the past two years, Professor Tong's guidance opened the door to the world of quantum learning theory; I would never have found my way into these exciting research problems without his encouragement and insights. I am equally grateful to John Preskill for the incredible opportunity to join his group at Caltech and for his continued mentorship throughout this process. Together with Professor Nat Tantivasadakarn, these mentors have provided invaluable tutelage during my undergraduate career.

    I would also like to thank the members of the IQIM community for providing such a vibrant and intellectually stimulating environment; the discussions and support from the group were invaluable throughout the development of this thesis.
    
    I am grateful to the faculty who first introduced me to the foundations of quantum computing through their coursework, including Alexei Kitaev, John Preskill, and Hsin-Yuan (Robert) Huang.
    
    On a personal note, I want to thank Ritvik Teegavarapu and Kieran Vlahakis for their friendship and for making this undergraduate experience truly memorable. Finally, I owe everything to my parents, Reena and Kislay Sinha, and my sister, Drishti, for their steadfast support.
\end{acknowledgements}

\begin{abstract}
    Describing the dynamics of open systems is essential for fault-tolerant quantum computation. Under Markovian assumptions, we can characterize dissipative dynamics via the Lindbladian. Using Bell sampling, we provide an efficient, ansatz-free Lindbladian learning algorithm with polynomial-time classical postprocessing. Motivated by the prevalence of state preparation and measurement (SPAM) noise on near-term devices, we also introduce the first efficient SPAM-robust protocol capable of learning the gauge-independent components of sparse Lindbladians to arbitrary precision in the presence of constant-order SPAM error. In doing so, we provide the first rigorous characterization of the gauge degrees of freedom in noisy Lindbladian learning, precisely identifying which components remain learnable under SPAM noise.
\end{abstract}








\tableofcontents

\mainmatter

\chapter{Introduction}




Practical-scale quantum computing requires large numbers of qubits. However, as quantum devices continue to scale to hundreds of qubits, they have become increasingly susceptible to noise, making quantum error correction essential. While substantial progress has been made in this area, achieving fault tolerance requires careful characterization of the underlying generators governing the dynamics of quantum hardware in order to engineer devices to mitigate the effects of this noise.

In the idealized setting of noiseless Hamiltonian evolution, this task has been extensively studied under the Hamiltonian learning problem \cite{holzapfel_scalable_2015, arunachalam_testing_2025, ma_learning_2024, hu_ansatz-free_2025, sinha_improved_2025, zubida_optimal_2021, caro_learning_2024, huang_learning_2023, dutkiewicz_advantage_2024, bakshi_structure_2024, qi_determining_2019, haah_optimal_2022, chen_learning_2025, anshu_sample-efficient_2021, evans_scalable_2019, li_hamiltonian_2020, haah_learning_2024, gu_practical_2024}. Existing Hamiltonian learning approaches fall into one of two categories: learning from unitary dynamics \cite{arunachalam_testing_2025, ma_learning_2024, hu_ansatz-free_2025, sinha_improved_2025, zubida_optimal_2021, caro_learning_2024, huang_learning_2023, dutkiewicz_advantage_2024, bakshi_structure_2024, haah_learning_2024, gu_practical_2024} or learning from specialized states (e.g. Gibbs states or steady states) based on the Hamiltonian \cite{qi_determining_2019, haah_optimal_2022, chen_learning_2025, anshu_sample-efficient_2021, evans_scalable_2019, li_hamiltonian_2020, haah_learning_2024, gu_practical_2024}. Under various sets of assumptions, several Hamiltonian learning algorithms have even achieved optimal scaling (e.g. shortest evolution time for learning from dynamics) \cite{huang_learning_2023, ma_learning_2024, hu_ansatz-free_2025, haah_optimal_2022, bakshi_structure_2024}. Nevertheless, since these algorithms operate under the assumption of noiseless evolution, we cannot use them to capture the dynamics of quantum devices in general, which are typically dissipative in nature.

This limitation motivates the study of Lindbladian learning and certification as a framework for characterizing open systems \cite{liu_robust_2025, onorati_fitting_2023, bairey_learning_2020, stilck_franca_efficient_2024, ivashkov_ansatz-free_2026, cai_optimal_2026}. A range of approaches for learning Lindbladians has been proposed, including convex optimization \cite{onorati_fitting_2023, liu_robust_2025}, steady-state methods \cite{bairey_learning_2020}, and derivative estimation \cite{stilck_franca_efficient_2024, ivashkov_ansatz-free_2026}. However, the exponential state space renders many of these algorithms intractable for systems with large numbers of qubits, requiring structural assumptions like locality to operate in polynomial evolution time \cite{stilck_franca_efficient_2024}. Recent work by Ivashkov et al. addresses the nonlocal setting with an ansatz-free algorithm achieving optimal sample complexity \cite{ivashkov_ansatz-free_2026}. Nevertheless, their derivative-based approach requires solving a large linear system, leading to classical post-processing costs that can scale exponentially in the number of qubits in the worst case. Naturally, this motivates the first question we aim to answer in this thesis: can we learn Lindbladians with both efficient evolution time and classical post-processing without any prior structural assumptions?

A further practical challenge arises from state preparation and measurement (SPAM) errors, which significantly impact near-term quantum devices. While prior work has incorporated robustness to SPAM noise, these results typically rely on restrictive assumptions, such as small or local error models \cite{stilck_franca_efficient_2024, ivashkov_ansatz-free_2026}. More general techniques, such as gate set tomography \cite{liu_robust_2025}, can handle arbitrary SPAM error but incur exponential time overhead, making them impractical for large systems.

Beyond these computational challenges, SPAM error introduces a more fundamental issue: it induces gauge degrees of freedom in the learning process, under which distinct Lindbladians become operationally indistinguishable \cite{liu_robust_2025}. As a result, certain components of the Lindbladian are inherently unlearnable in the presence of SPAM noise, regardless of the learning protocol. This raises a second central question: what aspects of a Lindbladian are identifiable under general SPAM error, and can the learnable components be extracted efficiently?

In this thesis, we address both questions. First, we develop an algorithm for learning sparse, potentially nonlocal Lindbladians with efficient evolution time and polynomial classical post-processing. Second, we provide a complete characterization of the gauge degrees of freedom induced by SPAM error, thereby identifying precisely which components of a Lindbladian are learnable and which are fundamentally unobservable. Building on this characterization, We present an efficient protocol for recovering all learnable parameters in the presence of unstructured SPAM noise.

In particular, we will begin by laying out the foundational concepts of open system evolution and the properties of quantum channels in Chapter 2. In Chapter 3, we will analyze Lindbladian learning in the noiseless setting to demonstrate an efficient Lindbladian learning algorithm with polynomial classical post-processing time. In Chapter 4, we will demonstrate the effect of SPAM error on the Lindbladian learning process, first characterizing the coefficients of the Lindbladian that are unlearnable in the presence of SPAM error before outlining an algorithm to learn the remaining components of the Lindbladian efficiently. Finally, in Chapter 5, we will discuss the overall conclusions of this thesis as well as the future directions and applications of these findings.

\section{Notation and Conventions}

Before introducing the theory, we will first establish the notation and conventions that will be used throughout the thesis.

\subsection{Pauli Group}

We define the $n$-qubit Pauli group (without phase) as $\mathcal P_n = \{I, X, Y, Z\}^{\otimes n}$, where
$$I = \begin{pmatrix}
    1 & 0 \\
    0 & 1
\end{pmatrix}, \quad X = \begin{pmatrix}
    0 & 1 \\
    1 & 0
\end{pmatrix}, \quad  Y = \begin{pmatrix}
    0 & -i \\
    i & 0
\end{pmatrix}, \quad  Z = \begin{pmatrix}
    1 & 0 \\
    0 & -1
\end{pmatrix}$$
We will enumerate Paulis $P_b \in \mathcal P_n$ via a $2n$-bit string $b = (b^x, b^z)$ where $b^x, b^z \in \mathbb F^n_2$, where
$$P_b = i^{b^x \cdot b^z} \bigotimes_{i = 1} ^n X^{b^x_i}Z^{b^z_i}$$
where $b^x \cdot b^z = \sum_{i = 1} ^n b^x_ib^z_i \pmod 2$ is the bitwise inner product modulo 2. For two Paulis $P_a, P_b$, we encode their commutation relations via the symplectic inner product $\langle \cdot, \cdot \rangle$ such that
$$P_{a}P_b = (-1)^{\langle a, b\rangle}P_bP_a,$$
where for two $2n$-bit strings $a = (a^x, a^z)$ and $b = (b^x, b^z)$, we define
$$\langle a, b \rangle = a^x \cdot b^z - a^z \cdot b^x \pmod 2$$

\subsection{Bell States}

For our Bell sampling procedures, we define the four two-qubit Bell states as follows:
\begin{align*}
    |\Phi^+\rangle &= \frac{|00\rangle + |11\rangle}{\sqrt 2} & |\Phi^-\rangle &= \frac{|00\rangle - |11\rangle}{\sqrt 2} \\
    |\Psi^+\rangle &= \frac{|01\rangle + |10\rangle}{\sqrt 2} & |\Psi^-\rangle &= \frac{|01\rangle - |10\rangle}{\sqrt 2}
\end{align*}
As a shorthand, we will typically write $|\Phi_n\rangle = |\Phi^+\rangle^{\otimes n}$.

\subsection{Norms}

We will utilize various norms to characterize superoperators. For a given superoperator $\Phi: \mathbb C^{n \times n} \to \mathbb C^{m \times m}$, we define its diamond norm $\|\cdot\|_\diamond$ as follows:
$$\|\Phi\|_\diamond = \max_{X : \|X\|_1 \le 1}\|(\Phi \otimes \mathcal I)X\|_1,$$
where
$$\|A\|_1 = \text{Tr}\left(\sqrt{A^\dag A}\right)$$
is the trace norm, $X \in \mathbb C^{n^2 \times n^2}$, and $\mathcal I: \mathbb C^{n \times n} \to \mathbb C^{n \times n}$ denotes the identity superoperator.









\chapter{Open Systems and Lindbladians}

In quantum mechanics, closed systems are evolved in unitary fashion according to the operation $U(t) = e^{-iHt}$, where $H$ is the Hamiltonian of the system. However, in practice, physical systems we deal with are rarely closed, motivating the need for a theory of open-systems. In this chapter, we will establish the foundations of open-systems by first introducing the notion of quantum channels before discussing Markovian time evolution under the Lindbladian.

\section{Quantum Channel Formalism}

While closed systems are transformed via unitary operations, open systems are evolved via quantum channels. Consequently, while the purity of states is preserved as a closed system evolves, the same does not hold true in general for open systems. For instance, a spin can interact with oncoming photons, performing measurements on its state that cause any potential superposition to decohere into a mixed state. Consequently, to represent transformations for open quantum systems, we require a more robust formalism known as quantum channels. We will discuss four formalisms for investigating such channels: CPTP maps, isometries, Kraus operator decompositions, and Choi matrices.

\subsection{CPTP Maps}

To precisely define what a quantum channel is, we look at its abstract mathematical definition.
\begin{definition}[Quantum Channel]
    A superoperator $\Phi: \mathcal B(\mathcal H_L) \to \mathcal B(\mathcal H_L)$ is a quantum channel if and only if it is both completely positive and trace-preserving (CPTP).
\end{definition}
As the name implies, trace-preserving simply implies that for any operator $A$, the superoperator $\Phi$ satisfies
$$\text{Tr}(\Phi[A]) = \text{Tr}(A)$$
A map $\Phi: \mathcal B(H_L) \to \mathcal B(H_L)$ is completely positive if and only if for any operator $A \in \mathcal B(\mathcal H_L \otimes \mathcal H_R)$, we have that
$$A \succeq 0 \implies (\Phi \otimes \mathcal I_R)(A) \succeq 0,$$
where $\mathcal I_R$ is the identity map on the Hilbert space $\mathcal H_R$. Intuitively, this means that if we extend our density matrix to include an auxiliary system unaffected by the quantum channel, this transformed system must remain a valid density matrix. While this definition is rigorous, it does little to explain intuitively what a quantum channel represents. To understand this, we require the following formalisms.

\subsection{Isometry}

An open system only exists as such in reference to a larger environment, which collectively form a closed system. Consequently, we can view a quantum channel as merely the result of a unitary transformation when we trace out this auxiliary system. Formally, for any quantum channel $\mathcal E$, we can define an isometry $V: \mathcal H_A \to \mathcal H_A \otimes \mathcal H_E$ such that
$$\mathcal E[\rho] = \text{Tr}_E(V \rho V^\dag)$$
This isometry $V$ is known as the Stinespring dilation.

\subsection{Kraus Operators}

To conveniently represent a superoperator, we can use the Kraus operator decomposition. Defining an orthonormal basis $\{|\xi_i\rangle_E\}$ that spans $\mathcal H_E$, we have that
$$\mathcal E[\rho] = \text{Tr}_E(V \rho V^\dag) = \sum_i (I_A \otimes \langle\xi_i |_{E})V\rho V^\dag (I_A \otimes |\xi_i\rangle_E) = \sum_i K_i \rho K_i^\dag$$
Here, these $K_i$ operators are known as Kraus operators. Given a Kraus operator decomposition, we can also construct the Stinespring dilation by defining $V$ such that
$$V|\psi \rangle = \sum_i K_i|\psi\rangle \otimes |i\rangle$$
While any completely positive superoperator admits a Kraus decomposition, we require additional constraints on $K_i$ for the quantum channel to be trace-preserving. In particular, since $\mathcal E$ maps density matrices to other density matrices, we must have that $\text{Tr}(\mathcal E[\rho]) = 1$. By cyclicity of trace, we have that this gives us the following sufficient and necessary condition for a Kraus operator decomposition to correspond to a quantum channel:
$$\sum_i K_i^\dag K_i = I$$
It should be noted that the similar statement $\sum_i K_iK_i^\dag = I$ does not hold in general, instead only holding when $\mathcal E[I] = I$. Channels which obey this relation, such as the depolarizing and dephasing channels, are called unital while those that do not obey this, like the amplitude-damping channel, are called non-unital. 

\subsection{Choi States}

It is often convenient to unfold superoperators into operator form, allowing us to write them as matrices. One common formalism for doing so is the Choi-Jamiołkowski isomorphism. In particular, we define the Choi state by considering the action of the quantum channel acting on one qubit of $n$ Bell-pairs, which gives us the following:
$$\mathcal J(\mathcal E) = (\mathcal I \otimes \mathcal E)(|\Phi_n\rangle\langle\Phi_n|) = \sum_{i, j} |i\rangle\langle j| \otimes \mathcal E(|i\rangle\langle j|)$$
Intuitively, we can write the Choi matrix in block form as follows:
$$\mathcal J(\mathcal E) = \begin{pmatrix}
    \mathcal E(|0\rangle\langle 0|) & \mathcal E(|0\rangle\langle 1|) & \cdots & \mathcal E(|0\rangle\langle 2^{n}-1|) \\
    \mathcal E(|1\rangle\langle 0|) & \ddots & & \vdots \\
    \vdots & & \ddots & \vdots \\
    \mathcal E(|2^n - 1\rangle\langle 0|) & \cdots & \cdots & \mathcal E(|2^n - 1\rangle\langle 2^n - 1|)
\end{pmatrix}$$
This form is useful for checking certain properties like complete-positivity that would be difficult to verify otherwise. For instance, we have that $\mathcal E$ is completely positive if and only if $\mathcal J(\mathcal E) \succeq 0$.

\section{Other Superoperator Matrix Representations}

Beyond Choi states, there exist a number of other useful formalisms for expressing superoperators in matrix form, which we will employ extensively throughout this thesis.

\subsection{Chi Matrix}

The first formalism we consider is the Chi matrix (also known as the process matrix). This matrix encodes the coefficients of a superoperator in an operator expansion of our choice. Formally, we define the Chi matrix as follows:
\begin{definition}[Chi Matrix / Process Matrix]\label{def:chi}
    For a given operator basis $\{F_i\}$, the Chi matrix for 
    superoperator $\mathcal E(A) = \sum_{i, j}\chi_{ij}F_iAF_j^\dag$ is defined as the coefficient matrix $\chi(\mathcal E) = (\chi_{ij})$
\end{definition}
For our purposes, we will always assume our operator basis to be the Pauli basis $\mathcal P_n$. We will use this matrix to provide a compact representation of the Lindbladian with respect to its expansion in the Pauli basis. Furthermore, by fixing the basis of our operator decomposition, we can formally define the notions of support and sparsity:
\begin{definition}[Support]
    For a superoperator $\mathcal E = \sum_{i, j} \chi_{ij} P_i \rho P_j$, we define the support $\mathcal S$ of $\mathcal E$ as
    $$\mathcal S = \{i: \exists j \text{ s.t. } \chi_{ij} \ne 0 \text{ or } \chi_{ji} \ne 0\}$$
\end{definition}
\begin{definition}[$M$-sparse]
    A superoperator $\mathcal E: \mathbb C^d \to \mathbb C^d$ is said to be $M$-sparse if and only if $|\mathcal S| \le M$, where $\mathcal S$ is the support of $\mathcal E$.
\end{definition}
For $n$-qubit Lindbladian learning to be efficient, we must assume $M = \text{poly}(n)$ in our operator basis of choice (i.e. the Pauli basis) for learning to be efficient, as we would have to learn an exponential number of coefficients otherwise.

\subsection{Pauli Transfer Matrix}

The second matrix formalism we are interested in is the Pauli Transfer Matrix (PTM). The PTM describes the action of a given supeoperator on each Pauli. This formalism is especially convenient for Pauli-diagonal superoperators like Pauli channels or Pauli Lindbladians, as the resulting PTM is diagonal. We formally define the PTM as follows:
\begin{definition}[Pauli Transfer Matrix]\label{def:ptm}
    For a given superoperator $\mathcal E(P_j) = \sum_{i}T_{ij} P_i$, where $P_i, P_j \in \mathcal P_n$, we define the Pauli transfer matrix as $\mathcal R(\mathcal E) = (T_{ij})$, where
    $$T_{ij} = \frac{1}{4^n}\text{Tr}(P_i \mathcal E(P_j))$$
\end{definition}
Furthermore, the PTM is amenable to algebraic manipulation. Since the Pauli transfer matrix acts as a linear transformation on the Pauli basis, we can write the PTM for the composition $\mathcal E_1 \circ \mathcal E_2$ as the product of their individual Pauli transfer matrices:
$$\mathcal R(\mathcal E_1 \circ \mathcal E_2) = \mathcal R(\mathcal E_1)\mathcal R(\mathcal E_2)$$

\section{Lindbladians}

Just as the Hamiltonian $H$ generates the evolution of closed systems, the Lindbladian $\mathcal L$ analogously describes the evolution of open systems. In particular, this evolution is governed by the Gorini-Kossakowski-Sudarshan-Lindblad (GKSL) equation, expressed compactly as follows:
$$\dot \rho = \mathcal L[\rho]$$
Solving this differential equation gives us that
$$\rho(t) = e^{t\mathcal L}[\rho(0)]$$

\subsection{Derivation}

To derive the form of the Lindblad operator, we consider the evolution of the state under small time-steps \cite{preskill_physics_nodate}. Assuming Markovian evolution, we have that the state $\rho$ at time $t + \delta t$ is entirely determined by the state at $t$. In practice, this assumption may not hold, as it is possible for information to flow out of a system at some time and return at a later time. Nevertheless, the Markovian description provides a good approximation in many cases. Under this assumption, we have that there exists a quantum channel $\mathcal E = e^{\delta t\mathcal L}$ that takes
$$\mathcal E: \rho(t) \mapsto \rho(t + \delta t) \approx \rho(t) + \delta t \mathcal L[\rho(t)] + \mathcal O(\delta t^2)$$
Expanding out the quantum channel in terms of its Kraus operators, we have that
$$\mathcal E[\rho(t)] = \sum_{j} K_j \rho(t) K_j^\dag \approx \rho(t) + \delta t \mathcal L[\rho(t)]$$
Without loss of generality, we can fix $K_0 = I + \mathcal O(\delta t)$, meaning that for each of the remaining Kraus operators with $j > 0$, $K_j = \mathcal O(\sqrt{\delta t})$. Intuitively, we can interpret this as meaning that the probability of a discontinuous jump to some other state occurring over the time $dt$ is given as $\mathcal O(dt)$. With these assumptions in mind, we have the following ansätze for our Kraus operators:
\begin{align*}
    K_j &= \begin{cases}
        I + \delta t (-iH + K) & j = 0 \\
        \sqrt{\delta t} L_j & j > 0
    \end{cases}
\end{align*}
where $H, K$ are Hermitian matrices and $L_j$ are jump operators. These $H, K, L$ operations are all taken be constant order in $\delta t$. Substituting this into the Kraus decomposition of $\mathcal E$, we have that
$$\mathcal E[\rho(t)] = \rho(t) + \delta t\left[-i[H, \rho(t)] + \sum_{j > 0} L_j \rho(t) L_j^\dag + \{K, \rho(t)\}\right] + \mathcal O(\delta t^2)$$
Furthermore, by the normalization property of quantum channels ($\sum_{j \ge 0} K_j^\dag K_j = I$), we find that
$$K = -\frac{1}{2}\sum_{j > 0} L_j^\dag L_j$$
Altogether, comparing the linear order term of $\mathcal E[\rho(t)]$ to the Taylor expansion obtained previously, we obtain the following canonical form of the Lindbladian:
\begin{equation}\label{eq:lindblad_diagonal}
    \mathcal L[\rho] = -i[H, \rho] + \sum_{j} \left[L_j \rho L_j^\dag -\frac{1}{2} \{L_j^\dag L_j, \rho\}\right]
\end{equation}
Physically, $H$ represents the Hamiltonian of the open system, driving a unitary style of time evolution, while the $L_j$ terms represent instantaneous transitions to other states driven by interactions between the system and its environment, such as projective measurements or Pauli errors, with the anticommutator term serving as an appropriate normalization term. For this reason, we often decompose the Lindbladian as
$$\mathcal L = \mathcal L_H + \mathcal L_D,$$
where $\mathcal L_H = -i[H, \rho]$ is the Hamiltonian term and $\mathcal L_D = \sum_{j} \left[L_j \rho L_j^\dag -\frac{1}{2} \{L_j^\dag L_j, \rho\}\right]$ is the dissipative term. The form in (\ref{eq:lindblad_diagonal}) can be thought of as writing the Lindbladian in a diagonal basis. If we wish to use any arbitrary operator basis $\{F_m\}$ for our jump operators, we can alternatively write
\begin{equation}\label{eq:lindblad_arbitrary}
    \mathcal L[\rho] = -i[H, \rho] + \sum_{a, b} \gamma_{ab} \left[F_a \rho F_b^\dag -\frac{1}{2} \{F_b^\dag F_a, \rho\}\right],
\end{equation}
where $\gamma$ forms a positive semi-definite matrix. We can then convert this form back into the other by diagonalizing this matrix and using the orthogonal eigenvectors weighted by the square root of the eigenvalues to define $L_j$.

\subsection{Conversion Between Formalisms}

For our purposes, we are interested in working with the Chi matrix rather than coefficients in (\ref{eq:lindblad_arbitrary}). The Chi matrix coefficients will therefore possess additional algebraic structure that should be analyzed more carefully. Referring to (\ref{eq:lindblad_arbitrary}), we have that choosing the operator basis $\{P_a\}$, we have that
$$\chi_{ab} = \gamma_{ab},$$
for $P_a, P_b \ne I$. This arises trivially due to the fact that the only term in (\ref{eq:lindblad_arbitrary}) with non-identity operators acting on both sides corresponds to $\gamma_{ab}P_a\rho P_b$. Similarly, for $P_a = P_b = I$, we observe that
$$\gamma_{00}\left[\rho - \frac{1}{2}\{I, \rho\}\right] = 0$$
Consequently, $\gamma_{00} = 0$ for all Lindbladians. Hence, we have that $\chi_{00}$ is dependent on other elements of the Chi matrix. In particular, since $H$ is traceless, the only contribution arises from
$$-\sum_{a = b}\frac{\gamma_{ab}}{2}\{P_b P_a, \rho\} = -\sum_j \gamma_{jj} \rho$$
Hence, we have that when $P_a = P_b = I$,
$$\chi_{ab} = -\sum_j \gamma_{jj} = -\sum_j \chi_{jj}$$
Lastly, we consider the elements where $P_a \ne I$ and $P_b = I$, and vice versa. For $P_b = I$ and $P_a = Q$,
$$P_a\rho P_b - \frac{1}{2}\{P_bP_a, \rho\} = \frac{1}{2}[Q, \rho]$$
For $P_a = I$ and $P_b = Q$ instead, we have
$$P_a\rho P_b - \frac{1}{2}\{P_bP_a, \rho\} = \frac{1}{2}[\rho, Q]$$
Consequently, adding these two terms together, we have
$$\frac{\gamma_{b0} - \gamma_{0b}}{2}[Q, \rho] = i\text{Im}[\gamma_{b0}][Q, \rho]$$
As we can see here, we obtain a Hamiltonian-like term, introducing a gauge degree of freedom in the definition of the Hamiltonian. In particular, this is equivalent to the inhomogeneous transform
\begin{align}
    L_j \to L_j' &= L_j + a_jI \\
    H \to H' &= H + \frac{1}{2i}\sum_{j} \gamma_j \left(a_j^*L_j - a_jL^\dag_j\right)
\end{align}
If we choose the global phase convention such that $\text{Tr}(L_j) \in \mathbb R, \text{Tr}(L_j) \ge 0$, we can then choose $a_i$ such that each $L_i'$ is traceless, in which case
$$\text{Tr}\left[\frac{1}{2i}\sum_{j} \gamma_j \left(a_j^*L_j - a_jL^\dag_j\right)\right] = 0,$$
since $a_j = a_j^* = \text{Tr}(L_j) = \text{Tr}(L^\dag_j)$. Consequently, under this gauge transformation, we can choose all the jump operators to be traceless whilst maintaining the traceless property of the Hamiltonian, in which case we fix all $\gamma_{0b} = \gamma_{b0} = 0$.

Consequently, all contributions to $\chi_{0b}$ and $\chi_{b0}$ must come from the Hamiltonian and anticommutator terms. Starting with the latter, we have that for anticommuting Paulis $P_a, P_b$,
$$\frac{\gamma_{ab}}{2}\{P_bP_a, \rho\} + \frac{\gamma_{ba}}{2}\{P_bP_a, \rho\} = i\text{Im}[\gamma_{ab}]\{P_bP_a, \rho\}$$
Similarly, if $P_m, P_n$ commute, we have that
$$\frac{\gamma_{ab}}{2}\{P_bP_a, \rho\} + \frac{\gamma_{ba}}{2}\{P_bP_a, \rho\} = \text{Re}[\gamma_{ab}]\{P_bP_a, \rho\}$$
If we define the Hamiltonian as $H = \sum_j h_j P_j$, we have that
$$-i[H, \rho] = -i\sum_j h_j [P_j, \rho]$$
Altogether, we therefore have the following correspondence between the two forms:
\begin{lemma}[Lindbladian Chi Matrix]
    Given a Lindbladian $\mathcal L$ in the form of (\ref{eq:lindblad_arbitrary}), we can express the Chi matrix as
    $$\chi_{ab} = \begin{cases}
        \gamma_{ab} & a, b \ne 0 \\
        -ih_b + \frac{1}{2}\sum_{m} \omega_{m, m \oplus b} \gamma_{m, m \oplus b} & a = 0, b \ne 0 \\
        ih_a + \frac{1}{2}\sum_{m} \omega_{m, m \oplus a} \gamma_{m, m \oplus a} & a \ne 0, b = 0 \\
        -\sum_{k} \gamma_{kk} & a = 0, b = 0
    \end{cases}$$
    where $\omega_{ab} \in \{\pm 1, \pm i\}$.
\end{lemma}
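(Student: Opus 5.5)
The plan is to expand the Lindbladian of (\ref{eq:lindblad_arbitrary}) term by term in the Pauli basis and read off the coefficient of each $P_a\rho P_b$. Since $\{P_a\}$ is an operator basis, the superoperators $\rho\mapsto P_a\rho P_b$ are linearly independent, so the Chi matrix is unique and matching coefficients is legitimate. I will work in the gauge fixed just before the statement, where all jump operators are traceless, so that $\gamma_{0b}=\gamma_{b0}=0$. The expansion then has three pieces, and I will handle each separately.

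\textbf{The jump term.} The term $\sum_{a,b}\gamma_{ab}P_a\rho P_b$ (Paulis are Hermitian, so $P_b^\dag = P_b$) contributes exactly $\gamma_{ab}$ to $\chi_{ab}$ for $a,b\neq 0$. It contributes nothing to the row or column indexed by $0$ because $\gamma_{0b}=\gamma_{b0}=0$. This gives the first case of the statement.

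\textbf{The Hamiltonian term.} Writing $H=\sum_j h_jP_j$ with $h_0=0$, we have $-i[H,\rho]=\sum_j\bigl(-ih_j P_j\rho I + ih_j I\rho P_j\bigr)$. With the chosen index ordering, this contributes the $\mp i h$ pieces to $\chi_{0b}$ and $\chi_{a0}$.

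\textbf{The anticommutator term.} For $-\tfrac12\sum_{m,k}\gamma_{mk}\{P_kP_m,\rho\}$, I will use the Pauli product rule $P_kP_m=\omega' P_{k\oplus m}$ with $\omega'\in\{\pm1,\pm i\}$. Setting $k=m\oplus b$, each pair $(m,k)$ with $k\oplus m=b$ contributes a phase times $\gamma_{m,m\oplus b}$. These land in the coefficient of $P_b\rho I$ and of $I\rho P_b$, each with weight $\tfrac12$. The overall sign and phase are absorbed into $\omega_{m,m\oplus b}\in\{\pm1,\pm i\}$. For $b=0$, the terms with $k=m$ give $P_m^2=I$, so the anticommutator becomes $-\tfrac12\sum_m\gamma_{mm}\cdot 2\rho=-\sum_m\gamma_{mm}\rho$. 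This gives the last case, since neither the Hamiltonian nor the jump term contributes to $\chi_{00}$.

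The main obstacle is bookkeeping in the anticommutator term. I need to confirm that the contribution to $\chi_{a0}$ and to $\chi_{0b}$ has the same form $\tfrac12\sum_m\omega\gamma$, and that the phases genuinely lie in $\{\pm1,\pm i\}$ rather than picking up extra factors. For this I will use the paper's earlier symmetrization: pairing $(m,k)$ with $(k,m)$ gives $i\,\mathrm{Im}\gamma$ or $\mathrm{Re}\gamma$ according to whether the two Paulis anticommute or commute. A single coefficient $\omega_{m,m\oplus b}$ suffices because it is only required to lie in $\{\pm1,\pm i\}$. Finally, I will note that the index conventions for $\chi_{0b}$ versus $\chi_{b0}$ follow Definition~\ref{def:chi}, which fixes which of $\pm ih$ appears in each case.
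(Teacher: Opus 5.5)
Your route is the paper's own. You fix the traceless gauge so that $\gamma_{0b}=\gamma_{b0}=\gamma_{00}=0$, then read off coefficients from three sources. The jump term is the only source of $P_a\rho P_b$ with $a,b\neq 0$. The anticommutator term feeds $P_b\rho$ and $\rho P_b$ identically, which gives the common $\tfrac12\sum_m\omega\gamma$ piece and $\chi_{00}=-\sum_k\gamma_{kk}$. The Hamiltonian term supplies the $\pm ih$ entries. Your remark that the maps $\rho\mapsto P_a\rho P_b$ are linearly independent, so coefficient matching is legitimate, is a point the paper leaves implicit.

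The one step that fails as written is the Hamiltonian sign. Your expansion $-i[H,\rho]=\sum_j(-ih_jP_j\rho I+ih_j I\rho P_j)$ is correct. Under Definition~\ref{def:chi}, the coefficient of $I\rho P_b=P_0\rho P_b$ is $\chi_{0b}$. So your expansion gives $\chi_{0b}\ni +ih_b$ and $\chi_{b0}\ni -ih_b$, the opposite of the statement.

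Appealing to ``the chosen index ordering'' cannot repair this, because Definition~\ref{def:chi} is exactly what fixes the ordering. The stated signs hold only after $h\mapsto -h$, or equivalently after transposing $\chi$.

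The paper's derivation has the same slip: it writes $-i\sum_j h_j[P_j,\rho]$ and then simply asserts the table. The slip is harmless downstream, since later chapters use only $h_b^2=(\mathrm{Im}\,\chi_{0b})^2$ and $\mathrm{Im}\,\chi_{0b}=-\mathrm{Im}\,\chi_{b0}$. In your write-up, though, you should state the correct signs or flag the discrepancy, rather than claim the definition produces the $\mp ih$ assignment as stated.
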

Furthermore, we note that the sum of the form $\frac{1}{2}\sum_{m} \omega_{m, m \oplus b} \gamma_{m, m \oplus b} \in \mathbb R$, and likewise for the sum with $a$.

\subsection{Useful Properties}

Lindbladians possess a number of useful algebraic properties that we will utilize extensively throughout this paper to demonstrate that a given superoperator constitutes a valid Lindbladian. By definition, we have that a superoperator $\mathcal L$ is a Lindbladian if and only if it generates a CPTP map $e^{t\mathcal L}$ for all $t \ge 0$.

For this to be true, there exist a number of necessary conditions that $\mathcal L$ must satisfy. To preserve the physical validity of the density matrix at all times, we have that Lindblad evolution must be hermiticity-preserving and trace-preserving. As a result of this, we have that $\text{Tr}(\mathcal L[A]) = 0$ for any operator $A$. This follows intuitively from cyclicity of trace:
$$\mathcal L(A) = -i[H, A] + \sum_i L_i A L_i^\dag - \frac{1}{2}\{L_i^\dag L_i, A\} \implies \text{Tr}[\mathcal L(A)] = 0$$
Furthermore, we have that the eigenvalues of the Lindbladian all have real parts that are negative so that all the eigenvalues of $e^{\mathcal Lt}$ have absolute value bounded above by 1.

Beyond properties concerning the validity of the Lindbladian, we can also explicitly bound the diamond norm of $\mathcal L$ under sparsity assumptions. However, to do so, we begin by first proving the following helper result:
\begin{lemma}\label{lem:superop_diamond}
    For a superoperator of the form $\Phi = A \rho B$, we can bound the diamond norm of $\Phi$ as
    $$\|\Phi\|_\diamond \le \|A\|_\infty\|B\|_\infty,$$
    where $\|\cdot\|_\infty$ denotes the operator norm.
\end{lemma}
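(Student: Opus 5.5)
We unfold the definition of the diamond norm and reduce everything to Hölder's inequality for Schatten norms. Fix an arbitrary $X$ on the doubled space with $\|X\|_1 \le 1$. The extended map acts as
$$(\Phi \otimes \mathcal I)(X) = (A \otimes I)\,X\,(B \otimes I).$$
This holds by linearity: it is true on product operators $Y \otimes Z$, since $(\Phi\otimes\mathcal I)(Y\otimes Z) = AYB \otimes Z$, and product operators span the whole space. So it suffices to bound $\|(A\otimes I)X(B\otimes I)\|_1$ uniformly over such $X$.

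The key step is the two-sided Hölder inequality for Schatten norms, $\|CXD\|_1 \le \|C\|_\infty \|X\|_1 \|D\|_\infty$. If we want to avoid quoting it, we can derive it from the variational characterization
$$\|Y\|_1 = \max_{\|U\|_\infty \le 1} |\text{Tr}(UY)|.$$
Taking $Y = CXD$, cyclicity of the trace gives $|\text{Tr}(UCXD)| = |\text{Tr}(DUC\,X)|$. The operator $DUC$ has operator norm at most $\|D\|_\infty\|C\|_\infty$ by submultiplicativity, so rescaling it and applying the same characterization to $X$ yields the claim. Alternatively, one can use the singular value inequality $\sigma_i(CXD) \le \|C\|_\infty\|D\|_\infty \sigma_i(X)$ and sum over $i$.

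Finally we use $\|A\otimes I\|_\infty = \|A\|_\infty$, since the singular values of $A \otimes I$ are exactly those of $A$ with multiplicity, and likewise for $B$. Combining the steps gives $\|(\Phi\otimes\mathcal I)(X)\|_1 \le \|A\|_\infty\|B\|_\infty$ for every admissible $X$. Taking the maximum over $X$ proves the lemma.

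The only step that is not immediate is the Hölder/duality step, and it is entirely standard. Once it is in hand, the rest is bookkeeping with the tensor extension. A single reference ancilla of the same dimension suffices in the definition used in the paper, so no stabilization issue arises.
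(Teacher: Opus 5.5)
Your proof is correct and matches the paper's argument: write $(\Phi\otimes\mathcal I)(X) = (A\otimes I)X(B\otimes I)$, apply the two-sided H\"older inequality for Schatten norms, and use $\|A\otimes I\|_\infty = \|A\|_\infty$ (likewise for $B$). Your extra justifications (linearity on product operators, and deriving H\"older from trace-norm duality or singular values) only fill in steps the paper takes as standard.
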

\begin{proof}
    By the definition of the diamond norm, we have that
    $$\|\Phi\|_\diamond = \max_{X: \|X\|_1 \le 1} \|(\Phi \otimes \mathcal I)X\|_1$$
    Upper bounding the term inside the trace norm, we can apply Hölder's inequality for Schatten norms to get the following:
    $$\|(\Phi \otimes \mathcal I)X\|_1 = \|(A \otimes I)X(B \otimes I)\|_1 \le \|(A \otimes I)\|_\infty \| X \|_1 \|(B \otimes I)\|_\infty = \|A\|_\infty \| X \|_1 \|B\|_\infty$$
    Since $\|X\|_1 \le 1$, this means that
    $$\|(\Phi \otimes I)X\|_1 \le \|A\|_\infty \|B\|_\infty$$
\end{proof}
Having proven Lemma \ref{lem:superop_diamond}, we are now ready to bound the diamond norm of an $M$-sparse Lindbladian.
\begin{lemma}[Lindblad Bound]\label{lem:lindblad_diamond}
    For an $M$-sparse Lindbladian $\mathcal L = \sum_{a, b}\chi_{ab} P_a \rho P_b$ with $|\chi_{ab}| \le 1$, we bound the diamond norm of $\mathcal L$ as follows:
    $$\|\mathcal L\|_\diamond \le M^2$$
\end{lemma}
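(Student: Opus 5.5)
The plan is to use the triangle inequality for the diamond norm together with Lemma \ref{lem:superop_diamond}. Write $\mathcal L = \sum_{a,b} \chi_{ab} P_a \rho P_b$. Because the diamond norm is a norm, subadditivity and absolute homogeneity give
$$\|\mathcal L\|_\diamond \le \sum_{a,b} |\chi_{ab}| \, \|P_a \rho P_b\|_\diamond .$$
Each term $P_a\rho P_b$ has the form $A\rho B$ treated in Lemma \ref{lem:superop_diamond}. Every Pauli is unitary (and Hermitian), so $\|P_a\|_\infty = \|P_b\|_\infty = 1$, and the lemma gives $\|P_a \rho P_b\|_\diamond \le 1$. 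Combined with the hypothesis $|\chi_{ab}| \le 1$, each nonzero term contributes at most $1$.

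It remains to count the nonzero terms. By the definition of support, $\chi_{ab} \neq 0$ forces both $a \in \mathcal S$ and $b \in \mathcal S$. Indeed, $\chi_{ab}\ne 0$ witnesses $a\in\mathcal S$ through the first clause of the definition and $b \in \mathcal S$ through the second clause with the roles swapped. Hence the sum ranges over at most $\mathcal S \times \mathcal S$, which has at most $|\mathcal S|^2 \le M^2$ pairs, and therefore $\|\mathcal L\|_\diamond \le M^2$.

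The only delicate point is bookkeeping about the support, in particular whether the identity index $0$ must be counted. The row and column with $a=0$ or $b=0$ are generically nonzero because of the Hamiltonian and anticommutator contributions (see the Lindbladian Chi Matrix lemma). I will read the statement as taking $\mathcal S$ to be the support of the full Chi matrix, identity index included, exactly as in the definition. Under that reading the counting argument is airtight, and $M$-sparsity bounds every index that appears. If instead one insisted on excluding the identity, the same argument would give $(M+1)^2$. I therefore expect no real obstacle beyond stating this convention explicitly.
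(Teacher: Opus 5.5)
Your proposal is correct and matches the paper's proof: triangle inequality, Lemma~\ref{lem:superop_diamond} with $\|P\|_\infty = 1$, and a count of at most $M^2$ nonzero entries $\chi_{ab}$. Your explicit check that $\chi_{ab}\neq 0$ forces $(a,b)\in\mathcal S\times\mathcal S$, with the identity index counted in $\mathcal S$ as the definition of support requires, fills in a step the paper leaves implicit.
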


\begin{proof}
    Using triangle inequality and applying Lemma \ref{lem:superop_diamond}, we have that
    $$\|\mathcal L\|_\diamond \le \sum_{a, b}|\chi_{ab}| \|P_a \rho P_b\|_\diamond \le \sum_{a, b}|\chi_{ab}| \le M^2$$
    since $\|P\|_\infty = 1$ for any Pauli $P$.
\end{proof}







\chapter{Noiseless Lindbladian Learning}\label{chap:noiseless_lindblad_learn}

As we demonstrated in the previous chapter, the Lindbladian dictates the dynamics of open systems under Markovian assumptions. In practice, however, we typically do not know the Lindbladian of a system beforehand due to its phenomenological nature. Consequently, in this chapter, we will consider the inverse problem: reconstructing the Lindblad generator from observed dynamics. To this end, we provide an efficient algorithm for ansatz-free Lindbladian learning, wherein we first learn the Pauli support of the Lindbladian via Bell sampling before applying targeted parameter estimation techniques to recover the coefficients. We precisely formulate the Lindbladian learning problem as follows:
\begin{problem}
    Given black-box access to the time-evolution operator $e^{t\mathcal L}$ for any time $t$, can we efficiently learn the Chi matrix of an $M$-sparse Lindbladian $\mathcal L[\rho] = \sum_{i, j} \chi_{ij}P_i\rho P_j$? 
\end{problem}
For our purposes, we will assume normalization such that $|\chi_{ij}| \le 1$ for all $i, j$. Using the algorithm depicted schematically in Figure \ref{fig:lindblad_learn}, we will prove the following:
\begin{result}[Nonlocal Sparse Lindbladian Learning]
    Given an $M$-sparse Lindbladian $\mathcal L$, we can learn $\chi(\mathcal L)$ to $\epsilon$-accuracy in time
    $$T = \widetilde {\mathcal O}\left(\frac{M^4}{\epsilon^4}\right)$$
\end{result}
Note that we employ $\widetilde{\mathcal O}(\cdot)$ notation to hide poly-logarithmic factors for the sake of clarity.

\section{Learning the Support}

\begin{figure}[tbp]
    \centering
    \includegraphics[width=0.8\linewidth]{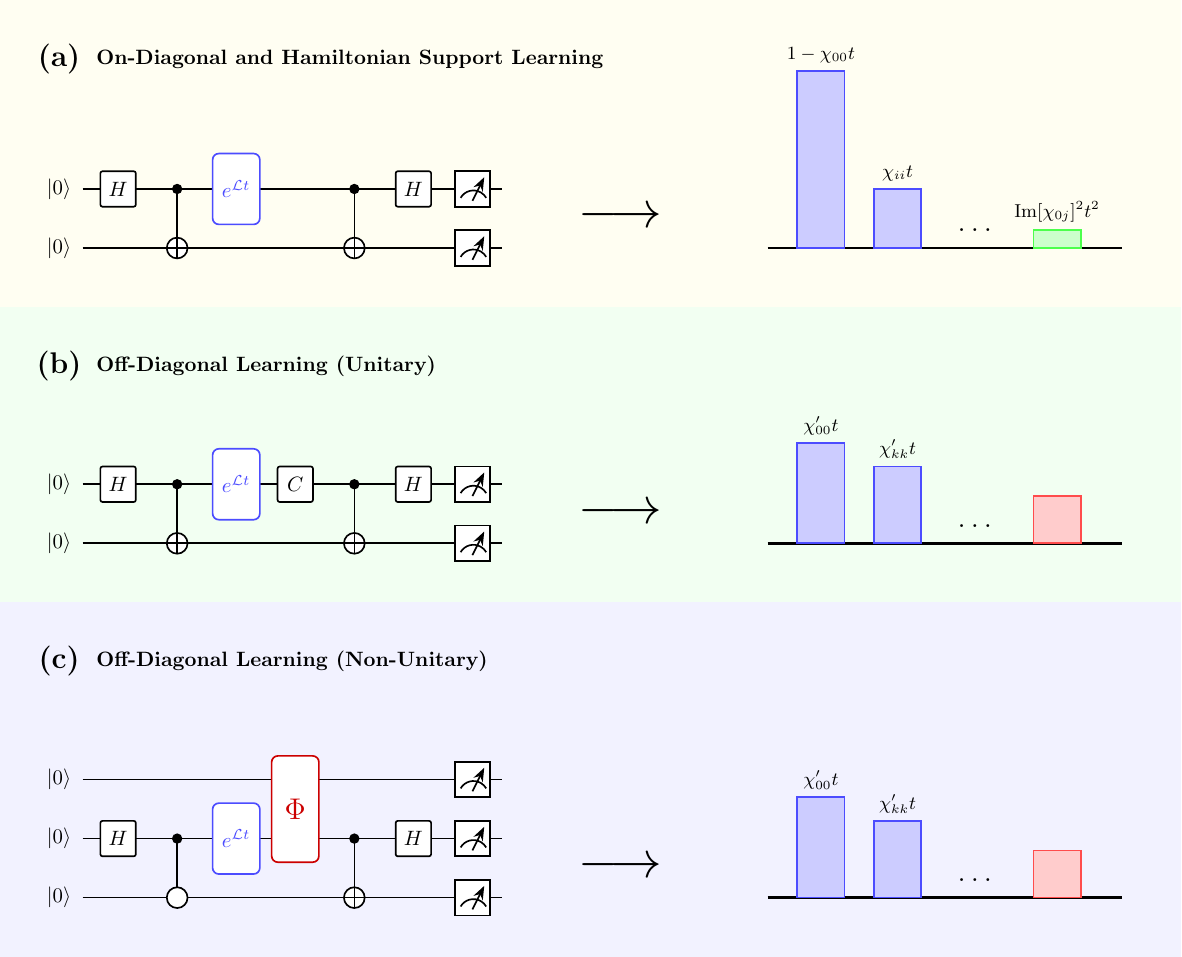}
    \caption{\textbf{Lindbladian coefficient learning via Bell sampling.} (a) Protocol for learning diagonal components of Lindbladian $\mathcal L = \sum_{i, j}\chi_{ij}P_i \rho P_j$, where $i = j$, as well as support of Hamiltonian outside dissipator support and its products. By measuring in the Bell basis after evolving a Bell state, we can sample Paulis with probability corresponding to the coefficients of the Lindbladian. (b) By performing either $C \propto P_i + P_j$ or $C \propto P_i + iP_j$, depending on which is Clifford, we can estimate the real and imaginary components of the off-diagonals, respectively, via $\chi_{00}'$ and $\chi_{kk}'$ for $k = i \oplus j$, which form linear combinations of known diagonal and unknown off-diagonal coefficients. (c) For the remaining coefficients, we perform an effective non-unitary transformation via an isometry $\Phi$ and post-selection on the measurement outcome of the extra ancilla qubit.}
    \label{fig:lindblad_learn}
\end{figure}

In Lindbladian learning, $k$-locality is commonly assumed to restrict the support of the Hamiltonian to a constant number of Pauli terms \cite{stilck_franca_efficient_2024}. However, so long as the support of the Lindbladian is guaranteed to be polynomial in size, we can always learn the Lindbladian efficiently. To demonstrate this, we utilize Bell sampling, a procedure by which we can sample from the support of a unitary operation by using $n$ Bell pairs.
\begin{fact}\label{fact:bell_sampling_unitary}
    Given a unitary $U = \sum_x \mu_xP_x$, we can sample a Pauli $P_x$ with probability $|\mu_x|^2$.
\end{fact}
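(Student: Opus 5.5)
The plan is to prepare the maximally entangled state $|\Phi_n\rangle$ on $2n$ qubits, apply $U$ to the second register, and measure all $n$ pairs in the Bell basis. Everything rests on one observation: the $4^n$ states $(I\otimes P_x)|\Phi_n\rangle$ form an orthonormal basis, and this basis is exactly the product Bell basis. For a single pair, $(I\otimes I)|\Phi^+\rangle=|\Phi^+\rangle$, $(I\otimes X)|\Phi^+\rangle=|\Psi^+\rangle$, $(I\otimes Z)|\Phi^+\rangle=|\Phi^-\rangle$, and $(I\otimes Y)|\Phi^+\rangle=i|\Psi^-\rangle$. The $n$-qubit statement then follows from the tensor product structure of $P_x$ together with $|\Phi_n\rangle=|\Phi^+\rangle^{\otimes n}$, with phases absorbed into the convention $P_b=i^{b^x\cdot b^z}\bigotimes X^{b^x_i}Z^{b^z_i}$.

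Orthonormality can also be checked directly:
$$\langle\Phi_n|(I\otimes P_x^\dagger P_y)|\Phi_n\rangle=\frac{1}{2^n}\text{Tr}(P_x^\dagger P_y)=\delta_{xy},$$
using the identity $\langle\Phi_n|(I\otimes A)|\Phi_n\rangle=\text{Tr}(A)/2^n$ for the normalized maximally entangled state. Since $P_x^\dagger P_y$ is, up to phase, a non-identity Pauli whenever $x\neq y$, it is traceless, which gives the $\delta_{xy}$.

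Next, expand the evolved state:
$$(I\otimes U)|\Phi_n\rangle=\sum_x \mu_x (I\otimes P_x)|\Phi_n\rangle .$$
Measuring in the product Bell basis produces the outcome labelled $x$ with probability $|\langle\Phi_n|(I\otimes P_x^\dagger)(I\otimes U)|\Phi_n\rangle|^2=|\mu_x|^2$. By orthonormality this equals $|\mu_x|^2$, and these values sum to $1$ because $U$ is unitary. Concretely,
$$\sum_x|\mu_x|^2=\frac{1}{2^n}\text{Tr}(U^\dagger U)=1,$$
which is Parseval's identity for the Pauli basis normalized by $\text{Tr}(P_x^\dagger P_y)=2^n\delta_{xy}$. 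Finally, the measured Bell outcome on each pair is decoded into the two bits $(b^x_i,b^z_i)$, which specifies the Pauli $P_x$.

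The only step that needs real care is the bookkeeping of phases and the coefficient normalization. The phases never matter, because probabilities take absolute values. The normalization matters because the coefficients $\mu_x$ in $U=\sum_x\mu_xP_x$ must be defined as $\mu_x=\text{Tr}(P_x^\dagger U)/2^n$, and I would state this explicitly.
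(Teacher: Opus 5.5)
Your proposal is correct and follows the same route as the paper: prepare $|\Phi_n\rangle$, apply $U$ to one register, expand $U$ in the Pauli basis, and use that distinct Paulis applied to one half of the Bell pairs give (up to phase) orthogonal product Bell states, so a Bell-basis measurement returns $x$ with probability $|\mu_x|^2$. Your trace-based orthonormality check and the normalization $\sum_x|\mu_x|^2=\text{Tr}(U^\dagger U)/2^n=1$ make explicit what the paper leaves implicit, and applying $U$ to the second rather than the first register changes nothing since $(U\otimes I)|\Phi_n\rangle=(I\otimes U^T)|\Phi_n\rangle$ and $P_x^T=\pm P_x$.
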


\begin{proof}
    Suppose we are given a $n$ qubit unitary $U = \sum_x \mu_xP_x$. Using a system of $n$ qubits as well as $n$ ancillas, we can sample from the Pauli distribution of $U$ as follows. Initialize a state consisting of $n$ EPR pairs between the system and the ancillas: 
    $$|\Phi_n\rangle = |\Phi^+\rangle^{\otimes n}$$
    If we now apply $U \otimes I_{2^n}$, we get the following:
    \begin{align*}
        (U \otimes I_{2^n})|\Phi_n\rangle &= U \otimes I_{2^n}|\Phi^+\rangle^{\otimes n} \\
        &= \sum_{x \in \{00, 01, 10, 11\}^n} \mu_x \left(\bigotimes_{i \in [n]} P_{x_i} \otimes I_{2^n}\right)|\Phi^+\rangle^{\otimes n}
    \end{align*}
    Here, $P_{ab} = i^{ab}X^aZ^b$. We then note that $(P_{01}\otimes I)|\Phi^+\rangle = |\Phi^-\rangle, (P_{10}\otimes I)|\Phi^+\rangle = |\Psi^+\rangle, (P_{11}\otimes I)|\Phi^+\rangle = i|\Psi^-\rangle$. Consequently, every one of the $4^n$ Pauli operators corresponds to a unique choice of $n$ Bell states. Since these states are all orthogonal, measuring in the Bell basis is akin to sampling a Pauli $P_x$ with probability $|\mu_x|^2$.
\end{proof}

Empirically, this technique has been successfully utilized in Hamiltonian learning protocols, enabling Heisenberg-limited scaling \cite{hu_ansatz-free_2025}. Furthermore, recent concurrent work has demonstrated its utility for Lindbladian support learning \cite{ivashkov_ansatz-free_2026}. Our work builds on this foundation by leveraging Bell sampling for both support and coefficient learning, eliminating the need to solve linear systems of exponential size in the classical postprocessing step. To see how this can be accomplished, we first extend this procedure to the open system framework, giving us an analogous statement to Fact \ref{fact:bell_sampling_unitary} for quantum channels.

\begin{lemma}\label{lem:bell_sampling_channel}
    For a given quantum channel $\mathcal E[\rho] = \sum_{a, b}\xi_{ab} P_a \rho P_b$, we can sample Pauli $P_k$ with probability $\xi_{kk}$.
\end{lemma}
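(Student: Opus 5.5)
The plan mirrors the proof of Fact \ref{fact:bell_sampling_unitary}, with the channel applied in place of the unitary. We prepare $n$ EPR pairs $|\Phi_n\rangle$ between the system and $n$ ancillas, apply $\mathcal E\otimes\mathcal I$, and measure all $2n$ qubits in the Bell basis. By linearity, the output state is
\begin{equation*}
(\mathcal E\otimes \mathcal I)(|\Phi_n\rangle\langle\Phi_n|) = \sum_{a,b}\xi_{ab}\,(P_a\otimes I)|\Phi_n\rangle\langle\Phi_n|(P_b\otimes I).
\end{equation*}
We then use the identification already established in the proof of Fact \ref{fact:bell_sampling_unitary}. Each $(P_a\otimes I)|\Phi_n\rangle$ equals, up to a phase $\omega_a\in\{\pm1,\pm i\}$, a distinct Bell-basis vector $|\beta_a\rangle$, and these vectors form an orthonormal basis of the $2n$-qubit space.

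The probability of outcome $k$ is
\begin{equation*}
\langle\beta_k|(\mathcal E\otimes\mathcal I)(|\Phi_n\rangle\langle\Phi_n|)|\beta_k\rangle=\sum_{a,b}\xi_{ab}\,\omega_a\omega_b^{*}\,\langle\beta_k|\beta_a\rangle\langle\beta_b|\beta_k\rangle.
\end{equation*}
Orthonormality forces $a=b=k$, so the sum collapses to $\xi_{kk}|\omega_k|^2=\xi_{kk}$. All off-diagonal terms of the Chi matrix drop out. Equivalently, $\xi_{kk}=\langle\Phi_n|(P_k\otimes I)\,\mathcal J(\mathcal E)\,(P_k\otimes I)|\Phi_n\rangle$ is a diagonal entry of the Choi state in the Bell basis.

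For this to be a genuine sampling procedure, the numbers $\xi_{kk}$ must form a probability distribution. Nonnegativity holds because $\mathcal E$ is completely positive, so the Choi state $\mathcal J(\mathcal E)\succeq 0$ and its diagonal entries in any orthonormal basis are nonnegative. They sum to one because $\mathcal E$ is trace preserving: $\sum_k\xi_{kk}=\mathrm{Tr}[(\mathcal E\otimes\mathcal I)(|\Phi_n\rangle\langle\Phi_n|)]=1$.

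The only delicate point is the convention issue. The statement writes $P_a\rho P_b$ without a dagger, so the expansion above needs $P_b^\dagger = P_b$, which holds because every element of $\mathcal P_n$ is Hermitian. The phases $\omega_a$ must also be tracked through $|\beta_a\rangle$; they appear only through $|\omega_k|^2=1$ and so do not affect the result.
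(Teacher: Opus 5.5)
Your proposal is correct and takes essentially the same route as the paper: expand the Bell-state output linearly in the Chi matrix, then collapse the double sum to $\xi_{kk}$ using orthonormality of the Pauli-rotated Bell states, which the paper writes as $\langle\Phi_n|(I\otimes P_kP_a)|\Phi_n\rangle=\delta_{ak}$. Your extra check that the $\xi_{kk}$ form a probability distribution (nonnegative from positivity of the Choi state, summing to one by trace preservation) is a small detail the paper leaves implicit.
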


\begin{proof}
    Performing the same Bell measurement procedure as before, we have the following: 
    \begin{align*}
        \text{Pr}[P_k] &= \langle \Phi_n|(I \otimes P_k)(I \otimes \mathcal E)[|\Phi_n\rangle \langle \Phi_n|](I \otimes P_k)|\Phi_n\rangle \\
        &= \sum_{a, b}\xi_{ab}\langle \Phi_n|(I \otimes P_k)(I \otimes P_a)|\Phi_n\rangle \langle \Phi_n|(I \otimes P_b)(I \otimes P_k)|\Phi_n\rangle \\
        &= \sum_{a, b}\xi_{ab}\langle \Phi_n|(I \otimes P_kP_a)|\Phi_n\rangle \langle \Phi_n|(I \otimes P_bP_k)|\Phi_n\rangle \\
        &= \sum_{a, b}\xi_{ab} \delta_{ak}\delta_{bk} \\
        &= \xi_{kk}
    \end{align*}
\end{proof}

In the context of Hamiltonian learning, Bell sampling can be used to ascertain the support of the Hamiltonian by evolving for short time such that $U(t) = e^{-itH} \approx I - itH$, meaning that any non-identity Pauli obtained via Bell sampling is likely to be present in the Hamiltonian. Extending this intuition to the Lindbladian case, we can see that Bell sampling allows us to measure the support of the terms in the diagonal of $\chi(\mathcal L)$. We formally state this in Corollary \ref{cor:dissipator_supp}.

\begin{corollary}\label{cor:dissipator_supp}
    Given a Lindbladian $\mathcal L = \sum_{a, b}\chi_{ab}P_a\rho P_b$ with traceless jump operators $L_j$, we can sample any Pauli $P_k \in \bigcup_{j}\text{supp}(L_j)$ with probability $\approx \chi_{kk}t$ if $P_k \ne I$, and probability $\approx 1 + \chi_{00} t$ if $P_k = I$.
\end{corollary}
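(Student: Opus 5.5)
The plan is to apply Lemma \ref{lem:bell_sampling_channel} to the channel $\mathcal E_t = e^{t\mathcal L}$ for small $t$, after expanding it to first order in $t$ and reading off the diagonal of its Chi matrix. First, I would write
$$e^{t\mathcal L} = \mathcal I + t\mathcal L + \mathcal R_t, \qquad \|\mathcal R_t\|_\diamond \le \sum_{k\ge 2}\frac{t^k\|\mathcal L\|_\diamond^k}{k!} \le \frac{t^2 M^4}{2}e^{tM^2},$$
where the bound on $\|\mathcal L\|_\diamond$ comes from Lemma \ref{lem:lindblad_diamond}. The identity channel is $I\rho I$, so its Chi matrix is $\delta_{a0}\delta_{b0}$. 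The Chi matrix of $t\mathcal L$ is $t\chi_{ab}$. Since Chi matrices are linear in the superoperator, the Chi matrix of $e^{t\mathcal L}$ is
$$\xi_{ab} = \delta_{a0}\delta_{b0} + t\chi_{ab} + r_{ab}(t),$$
where $r_{ab}(t)$ is the Chi matrix of $\mathcal R_t$.

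Second, I would invoke Lemma \ref{lem:bell_sampling_channel}. Bell sampling on $(\mathcal I\otimes e^{t\mathcal L})(|\Phi_n\rangle\langle\Phi_n|)$ returns $P_k$ with probability $\xi_{kk}$. For $k\neq 0$ this is $t\chi_{kk} + r_{kk}(t)$, and for $k=0$ it is $1 + t\chi_{00} + r_{00}(t)$. Lemma 2.1 shows that $\chi_{00} = -\sum_j\chi_{jj}$ in the traceless-jump gauge, which is consistent with the probabilities summing to one. Lemma 2.1 also gives $\chi_{kk}=\gamma_{kk}$ for $k\ne 0$. With the $L_j$ traceless, $\gamma_{kk} = \sum_j |\langle P_k, L_j\rangle|^2/4^n$ up to normalization. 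This is strictly positive exactly when $P_k \in \bigcup_j \mathrm{supp}(L_j)$, so every such Pauli is sampled with probability $\approx \chi_{kk}t > 0$.

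The main obstacle is making ``$\approx$'' quantitative, that is, bounding $|r_{kk}(t)|$. I would bound it directly through the sampling probability rather than through the Chi coefficients. The quantity $r_{kk}(t)$ equals $\langle\Phi_n|(I\otimes P_k)(\mathcal I\otimes\mathcal R_t)[|\Phi_n\rangle\langle\Phi_n|](I\otimes P_k)|\Phi_n\rangle$, which is at most $\|\mathcal R_t\|_\diamond$ in absolute value. It is therefore $\mathcal O(t^2M^4)$. Choosing $t \ll \min_k \chi_{kk}/M^4$ (or $t\ll \epsilon/M^4$ for an additive guarantee) makes the first-order term dominate, which establishes the corollary.
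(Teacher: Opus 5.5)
Your proposal is correct and follows essentially the same route as the paper: expand $e^{t\mathcal L}\approx \mathcal I + t\mathcal L$, apply Lemma~\ref{lem:bell_sampling_channel}, and read off the diagonal Chi entries $\delta_{k0}+t\chi_{kk}$. Your diamond-norm remainder bound $\mathcal O(t^2M^4)$ is valid but looser than the paper's entrywise count in Appendix~\ref{chap:support_learn_proofs}, which gives $\mathcal O(t^2M^2)$ per diagonal entry because only $\mathcal O(M)$ pairs in the support multiply to $P_k$ on each side; that sharper count is what later allows $t=\Theta(\epsilon/M^2)$, whereas your bound would force $t\ll\epsilon/M^4$.
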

\begin{proof}
    Assume traceless jump operators just as before. Define the Lindbladian in terms of the following Pauli expansion:
    $$\mathcal L[\cdot] = \sum_{a, b}\chi_{ab}P_a \cdot P_b$$
    For small time $t$, we have that
    $$U(t)[\rho] = e^{\mathcal L(t)}[\rho] \approx \rho + t\mathcal L[\rho]$$
    From here, applying Lemma \ref{lem:bell_sampling_channel} and matching terms, we directly get the result stated above.
\end{proof}
Consequently, Bell sampling serves as a powerful tool to both probe the support and also estimate the diagonal coefficients themselves via Monte Carlo experiments. Recalling the discussion in the first part, the support of the diagonal elements directly tells us the support of the dissipative component of the Lindbladian. In particular, we have that defining the diagonal support set $\mathcal S$, for all $i, j \in \mathcal S$, the only nonzero terms with contributions from the dissipator in $\chi(\mathcal L)$ are $\chi_{ij}$ and $\text{Re}[\chi_{0k}] = \text{Re}[\chi_{k0}]$, where $P_k \propto P_iP_j$.
Hence, when we adapt our protocol to learn the off-diagonal components of $\chi(\mathcal L)$, we can restrict the number of coefficients we need to learn to a quantity that is quadratic in $M$ (up to logarithmic factors as we will later see). 

Nevertheless, while this method allows us to completely characterize the support of the dissipator, we are left with little information about the Hamiltonian support $\mathcal S_H$, corresponding to the imaginary components of the elements in the top row and leftmost column of $\chi(\mathcal L)$, i.e. $\text{Im}[\chi_{0j}] = -\text{Im}[\chi_{j0}]$. Since the Hamiltonian is defined independently from the jump operators, and the on-diagonal contributions of the Hamiltonian are only observable in the second-order terms, we need to adjust our sampling procedure accordingly to ensure that we can also find these terms. To simplify matters, we can assume that we will try to estimate potential Hamiltonian contributions for any of the terms $\text{Re}[\chi_{0k}] = \text{Re}[\chi_{k0}]$ mentioned previously which arise from the dissipator term.

Consequently, we only concern ourselves with separately learning $\mathcal S_H' = \mathcal S_H \setminus (\mathcal S_H \cap \mathcal S')$, where $\mathcal S' = \{a \oplus b: a, b \in \mathcal S\}$. Since there are no quadratic contributions from the dissipator term for the $\mathcal S_H$ Bell sampling probabilities, the quadratic order contribution is dominated by the actual Hamiltonian term itself. In particular, we have the following lemma:
\begin{lemma}\label{lem:hamiltonian_supp}
    For any Pauli $P_m \in \mathcal S_H'$, we can sample these Paulis with probability $\approx \text{Im}[\chi_{m0}]^2t^2$ for small $t$.
\end{lemma}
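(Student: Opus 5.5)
We compute the Bell sampling probability of $P_m$ for the channel $e^{t\mathcal L}$ by expanding to second order in $t$ and identifying which Chi-matrix coefficients contribute. Write
$$e^{t\mathcal L} = \mathcal I + t\mathcal L + \tfrac{t^2}{2}\mathcal L^2 + \mathcal O(t^3),$$
and let $\xi(t)$ be the Chi matrix of this channel. By Lemma \ref{lem:bell_sampling_channel}, the probability of sampling $P_m$ is exactly $\xi_{mm}(t)$.

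\textbf{Orders zero and one.} The identity channel contributes only to $\xi_{00}$, so it gives nothing to $\xi_{mm}$ for $m\neq 0$. The first-order term contributes $t\chi_{mm}$. Since $m \in \mathcal S_H' \subseteq \mathcal S_H\setminus\mathcal S'$ and $0 \in \mathcal S'$ (take $a=b$), we have $m \ne 0$. Also $m\notin\mathcal S$: indeed, if $m\in\mathcal S$, then $m=m\oplus 0$ would lie in $\mathcal S'$ under the convention that $0 \in \mathcal S$. I will state this convention explicitly in the proof. With traceless jump operators, Corollary \ref{cor:dissipator_supp} then gives $\chi_{mm}=0$.

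\textbf{Order two.} Next I compute the contribution of $\tfrac{t^2}{2}\mathcal L^2$ to $\xi_{mm}$. Composing $\sum\chi_{ab}P_a(\cdot)P_b$ with itself gives
$$\sum_{a,b,c,d}\chi_{ab}\chi_{cd}\,P_cP_a(\cdot)P_bP_d .$$
Up to phases, the coefficient of $P_m(\cdot)P_m$ therefore collects all terms with $c\oplus a = m$ and $b\oplus d = m$. I split these into two groups.
\begin{itemize}
\item \emph{Pure Hamiltonian--identity cross terms:} $(a,c)\in\{(0,m),(m,0)\}$ and $(b,d)\in\{(0,m),(m,0)\}$. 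These give products such as $\chi_{m0}\chi_{00}$, $\chi_{0m}\chi_{00}$, $\chi_{00}\chi_{m0}$, $\chi_{m0}\chi_{0m}$, $\chi_{0m}\chi_{m0}$, together with the terms $\chi_{mm}\chi_{00}$, which vanish.
\item \emph{All other terms:} these involve a factor $\chi_{ab}$ with $a\oplus b\ne 0$ and both indices in the dissipator support $\mathcal S$.
\end{itemize}

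The key step is to show that the second group cannot produce $P_m(\cdot)P_m$. The reasoning is that dissipator blocks $\chi_{ab}$ with $a,b\in\mathcal S$ shift the left or right index by an element of $\mathcal S$. To reach $c\oplus a=m$ with one index in $\mathcal S\cup\{0\}$ and the other in $\mathcal S\cup\{0\}\cup \mathcal S_H$, we would need either $m\in\mathcal S'$ (excluded), or one factor to be a Hamiltonian entry $\chi_{m0}$ or $\chi_{0m}$ paired with a factor in $\mathcal S$ with index $0$. The latter is impossible because $\gamma_{0b}=0$. Likewise, the first group's terms $\chi_{00}\chi_{m0}$ only place $P_m$ on one side, so they contribute to $\xi_{m0}$ rather than $\xi_{mm}$. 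Hence only $\chi_{m0}\chi_{0m}$ and $\chi_{0m}\chi_{m0}$ survive.

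\textbf{Evaluating the surviving term.} By the Lindbladian Chi matrix lemma, for $m\notin\mathcal S'$ the real parts of $\chi_{0m}$ and $\chi_{m0}$ vanish, since the $\omega\gamma$ sum is empty. Thus $\chi_{m0}=ih_m$ and $\chi_{0m}=-ih_m$. The surviving coefficient is then
$$\tfrac{t^2}{2}\cdot 2\,\chi_{m0}\chi_{0m} = t^2 h_m^2 = \operatorname{Im}[\chi_{m0}]^2t^2 .$$
The remainder is $\mathcal O(t^3)$ with a constant controlled by Lemma \ref{lem:lindblad_diamond}: the Taylor tail is bounded by $(M^2t)^3 e^{M^2t}$. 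This gives the claimed $\approx$ statement.

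The main obstacle is the bookkeeping in the second-order index analysis, namely verifying that no dissipator–dissipator or dissipator–Hamiltonian product yields $P_m(\cdot)P_m$ when $m\notin\mathcal S'$. I would handle it by tracking the left index $c\oplus a$ and the right index $b\oplus d$ separately in $\mathbb F_2^{2n}$.
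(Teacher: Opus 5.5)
Your proposal is correct and follows the same route as the paper: expand $e^{t\mathcal L}$ to second order and show that, for $m\notin\mathcal S'$ (with $\mathcal S\subseteq\mathcal S'$ via the convention $0\in\mathcal S$), the only contributions to the $P_m(\cdot)P_m$ coefficient are the Hamiltonian cross terms $\chi_{m0}\chi_{0m}+\chi_{0m}\chi_{m0}=2h_m^2$. Your left/right index tracking, the vanishing of $\mathrm{Re}[\chi_{m0}]$, and the diamond-norm tail bound spell out what the paper's one-line sketch asserts.

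Two small cleanups. The products $\chi_{m0}\chi_{00}$ and the like that you list in the first group do not satisfy $c\oplus a=b\oplus d=m$ at all; that group yields only $\chi_{00}\chi_{mm}$ twice plus the two cross terms. Your case split should also explicitly include the diagonal dissipator entries and the anticommutator entries $\chi_{0k},\chi_{k0}$ with $k\in\mathcal S'$. The same index tracking rules these out.
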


\begin{proof}
    As we saw previously, expanding to linear order ignores the contribution of the linear term. Hence, we must expand $e^{\mathcal Lt}$ to quadratic order to get the Hamiltonian contribution. Since we are only considering those terms in the Hamiltonian which cannot be formed from Paulis in the support of the jump operators or their pairwise products, we have that for $P_m \in \mathcal S_H'$, the only quadratic order term $P_m|\Psi\rangle\langle \Psi|P_m$ arises from the Hamiltonian term itself. Hence, we have that for a Hamiltonian $H = \sum_j h_jP_j$, we have that
    $$\Pr[P_m] \approx h_m^2t^2 = \text{Im}[\chi_{0m}]^2t^2$$
\end{proof}

Since we will be evolving for time $t \ll 1$, these quadratic terms will be somewhat more challenging to sample compared to the linear order dissipator support terms. Consequently, Hamiltonian support learning will prove to be the bottleneck for this protocol, as we will later see. To determine how much overhead this incurs, we need to determine the optimal evolution time $t$ to choose to minimize the overall time evolution. In general, prolonged time evolution increases the likelihood of sampling the desired outcomes, forming the cornerstone of sample-efficient learning approaches \cite{ivashkov_ansatz-free_2026}. Consequently, choosing longer evolution times often results in less evolution time overall. However, this also introduces higher-order terms in the time evolution, which complicates the analysis of the sampling probability bounds. Picking an optimal time, therefore, requires finding the longest interval for which we can evolve without incurring Taylor error larger than the order of the smallest probability we need to measure.

In Appendix \ref{chap:support_learn_proofs}, we show that the Taylor series error for the linear order probabilities can be bounded as $\varepsilon_\text{Taylor} \le \mathcal O(M^2t^2)$
for small $t$. This intuitively follows from the fact that there are at most $\mathcal O(M)$ pairs of Paulis in $\mathcal S$ whose product is proportional to a given Pauli in $\mathcal S$. Consequently, since there are independent contributions on the left and right sides, we have contributions from $\mathcal O(M^2)$ terms in the second-order expansion. Since we need this error to be on the order of the smallest probability in the linear order sampling regime, which happens to be $\mathcal O(\epsilon t)$, this forces us to choose $t = \mathcal O(\epsilon/M^2)$, meaning the smallest probability is on the order of $\mathcal O(\epsilon^2/M^2)$. Performing concentration analysis via Hoeffding's inequality, this naturally gives us a sample complexity of $N = \widetilde{\mathcal O}(M^2/\epsilon^2)$ and time evolution of $T = \widetilde{\mathcal O}\left(1/\epsilon\right)$ here. In other words, we can learn the dissipator support with Heisenberg-limited scaling up to logarithmic factors. Altogether, we write the dissipator support learning protocol as follows.

\begin{algorithm}[H]
\caption{Nonlocal Dissipator Support Learning}\label{alg:dissipator_support_learn}
\begin{algorithmic}[1]
\REQUIRE Query access to $\mathcal E_t[\cdot] = e^{t \mathcal L}[\cdot]$, accuracy threshold $\epsilon$, sparsity bound $M$
\STATE Set $\hat {\mathcal S} \leftarrow \emptyset$
\STATE Set $t \leftarrow \mathcal O(\epsilon/M^2)$.
\STATE Set $N \leftarrow \mathcal O\left(\frac{M^2\log(M/\delta)}{\epsilon^2}\right)$ 
\FOR{$i = 1, \dots, N$}
    \STATE Sample Pauli $P_j$ from $\mathcal E_t$ via Bell sampling
    \STATE Update $\hat {\mathcal S} \leftarrow \hat {\mathcal S} \cup j$
\ENDFOR
\RETURN $\hat {\mathcal S}$
\ENSURE Candidate dissipator support set $\hat {\mathcal S}$
\end{algorithmic}
\end{algorithm}

\begin{remark}
    While we denote the set $\hat {\mathcal S}$ returned here as the "dissipator support", we note that $\hat {\mathcal S}$ does not technically cover the entire support of the dissipator $\mathcal L_D$. In particular, the anticommutator terms may produce Paulis that lie outside $\hat {\mathcal S}$. Nevertheless, we can always recover these terms using the fact that they all take the form of $P_k \rho$ or $\rho P_k$, were $P_k \propto P_iP_j$ for Paulis $P_i, P_j \in \hat {\mathcal S}$.
\end{remark}

For the Hamiltonian terms, we instead perform analysis on the third-order term, where a Taylor error bound of $\varepsilon_\text{Taylor}^H \le \mathcal O(M^4t^3)$ forces us to choose $t \le \mathcal O(\epsilon^2/M^4)$ since the smallest probability is on the order of $\mathcal O(\epsilon^2t^2) = \mathcal O(\epsilon^6/M^4)$. Choosing a smaller evolution time for the Hamiltonian terms may seem antithetical given that these terms are harder to sample; however, due to the potential for the third-order contribution to significantly perturb the probability of sampling in an uncontrollable fashion, we are ultimately compelled to choose a shorter evolution time to ensure that the Taylor error is properly controlled. Hence, we instead have a sample complexity of $N_H = \widetilde{\mathcal O}(M^8/\epsilon^6)$ and an evolution time of $T_H = \widetilde{\mathcal O}(M^4/\epsilon^4)$. This then gives us the following protocol for learning the Hamiltonian support.

\begin{algorithm}[H]
\caption{Nonlocal Hamiltonian Support Learning}\label{alg:hamiltonian_support_learn}
\begin{algorithmic}[1]
\REQUIRE Query access to $\mathcal E_t[\cdot] = e^{t \mathcal L}[\cdot]$, accuracy threshold $\epsilon$, sparsity bound $M$, dissipator support $\hat {\mathcal S}$
\STATE Set $\hat{\mathcal S}_H \leftarrow \{a \oplus b : a, b \in \hat{\mathcal S}\}$
\STATE Set $t_H \leftarrow \mathcal O(\epsilon^2/M^4)$
\STATE Set $N_H \leftarrow \mathcal O\left(\frac{M^8\log(M/\delta)}{\epsilon^6}\right)$
\FOR{$i = 1, \dots, N_H$}
    \STATE Sample Pauli $P_j$ from $\mathcal E_{t_H}$ via Bell sampling
    \STATE Update $\hat {\mathcal S}_H \leftarrow \hat{\mathcal S}_H \cup j$
\ENDFOR
\RETURN $\hat {\mathcal S}_H$
\ENSURE Candidate Hamiltonian support set $\hat {\mathcal S}_H$
\end{algorithmic}
\end{algorithm}

Putting these results together, we obtain the following theorem and corollary:
\begin{restatable}[Nonlocal Lindbladian Support Learning]{proposition}{propsupportlearn}\label{prop:support_learn}
    Given an $M$-sparse Lindbladian $\mathcal L$ with dissipator and Hamiltonian support $\mathcal S, \mathcal S_H$, we can learn sets $\hat {\mathcal S}, \hat {\mathcal S_H}$ such that $\mathcal S \subseteq \hat {\mathcal S}, \mathcal S_H \subseteq \hat {\mathcal S_H}$ with high probability in total evolution time
    $$T = {\mathcal O}\left(\frac{M^4 \log M}{\epsilon^4}\right)$$
\end{restatable}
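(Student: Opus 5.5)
The plan is to analyze Algorithms \ref{alg:dissipator_support_learn} and \ref{alg:hamiltonian_support_learn} separately and combine them with a union bound. Throughout, ``support'' means coefficients of magnitude at least $\epsilon$. Coefficients below $\epsilon$ are treated as negligible for the subsequent coefficient-learning stage, so the statement is read as: with high probability every such term is captured.

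\textbf{Dissipator step.} The Bell-sampling distribution of $\mathcal E_t = e^{t\mathcal L}$ is exactly the diagonal of $\chi(e^{t\mathcal L})$ (Lemma \ref{lem:bell_sampling_channel}). I expand
$$e^{t\mathcal L} = \mathcal I + t\mathcal L + R_2(t), \qquad R_2(t) = \sum_{k\ge 2} \frac{t^k\mathcal L^k}{k!}.$$
For $k\neq 0$, the first-order part contributes $t\chi_{kk}$ to $\Pr[P_k]$ (Corollary \ref{cor:dissipator_supp}). The remainder's contribution to a fixed diagonal entry must then be bounded by $\mathcal O(M^2t^2)$.

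To do this, I write $\mathcal L^k$ in the Pauli process basis. Composing $P_a\cdot P_b$ with $P_c\cdot P_d$ gives $P_cP_a\cdot P_bP_d$, so the diagonal $(m,m)$ entry of $\mathcal L^2$ collects the pairs with $c\oplus a = m = b\oplus d$. Since all indices lie in the $\mathcal O(M)$-size support (including $0$), at most $\mathcal O(M)$ choices exist on each side. With $|\chi|\le 1$, this gives $\mathcal O(M^2)$ terms of unit magnitude. For the tail $k\ge 3$, I use Lemma \ref{lem:lindblad_diamond}: $\|\mathcal L\|_\diamond\le M^2$, and any single Bell outcome probability is bounded by the diamond norm, so the tail is at most $\sum_{k\ge3}(M^2t)^k/k! = \mathcal O(M^6t^3)$. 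This is subdominant once $t\le \epsilon/M^2$.

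With $t = c\,\epsilon/M^2$, every $k$ with $\chi_{kk}\ge\epsilon$ has
$$\Pr[P_k] \ge \epsilon t/2 = \Omega(\epsilon^2/M^2).$$
The probability of missing a given $k$ in $N$ samples is then $(1-p)^N\le e^{-pN}$. Taking $N=\mathcal O(M^2\log(M/\delta)/\epsilon^2)$ and union bounding over the at most $M$ support elements gives $\mathcal S\subseteq\hat{\mathcal S}$. The total evolution time is $Nt=\mathcal O(\log(M/\delta))$, which is negligible.

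\textbf{Hamiltonian step.} Hamiltonian terms landing in $\mathcal S'=\{a\oplus b\}$ are included automatically by the initialization of $\hat{\mathcal S}_H$. So it suffices to capture $m\in\mathcal S_H'$ with $|h_m|\ge\epsilon$. For such $m$, $\chi_{mm}=0$ and no second-order dissipator path ends at $(m,m)$. The first two orders therefore give exactly $h_m^2t^2$ (Lemma \ref{lem:hamiltonian_supp}).

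\textbf{Main obstacle.} The hard step is bounding the third-order remainder at entry $(m,m)$ by $\mathcal O(M^4t^3)$. I would count three-fold compositions whose left and right products both equal $m$. Each side has $\mathcal O(M^2)$ ordered pairs of intermediate indices, since the third index is then fixed. This gives $\mathcal O(M^4)$ terms, with the $1/3!$ absorbed. Orders $k\ge4$ are again handled by the diamond-norm tail, giving $\mathcal O(M^8t^4)$.

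To keep this below $\epsilon^2t^2/2$, I need $M^4t\lesssim \epsilon^2$, so I choose $t_H=c\,\epsilon^2/M^4$. I should check that this choice also makes the $k\ge4$ tail small: $M^8t^4 \le \epsilon^2t^2$ requires $M^4t\le\epsilon$, which holds. The sampling probability is then at least $\Omega(\epsilon^6/M^8)$ per target, slightly worse than the informal figure in the text. The resulting $N_H$ should be recalibrated so that $N_H p\gtrsim\log(M/\delta)$; this determines the actual polynomial.

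Following the paper's accounting, $T_H=N_Ht_H=\mathcal O(M^4\log M/\epsilon^4)$. Adding the dissipator cost and taking a union bound over both failure events yields the claim.
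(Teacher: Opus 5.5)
Your argument follows the paper's proof. You Taylor-expand $e^{t\mathcal L}$ and bound the leading error at a single diagonal entry by counting Pauli paths: $\mathcal O(M^2t^2)$ for the dissipator and $\mathcal O(M^4t^3)$ for $\mathcal S_H'$. You then choose $t=\Theta(\epsilon/M^2)$ and $t_H=\Theta(\epsilon^2/M^4)$ and finish with a coupon-collector union bound.

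\textbf{The gap.} The one place you depart from the paper does not work as written. In the dissipator step you control the $k\ge 3$ tail using $\|\mathcal L\|_\diamond\le M^2$, which gives $\mathcal O(M^6t^3)$. At $t=c\epsilon/M^2$ this tail is $\Theta(c^3\epsilon^3)$. The signal you must protect is only $\epsilon t/2=\Theta(c\epsilon^2/M^2)$. So the tail is subdominant only when $\epsilon=\mathcal O(1/(c^2M^2))$. For constant $\epsilon$ and large $M$, for example, your lower bound $\Pr[P_k]\ge\epsilon t/2$ is not established.

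The diamond norm loses a factor of $M^2$ because it bounds the entire output rather than a single diagonal entry of $\chi(\mathcal L^\ell)$.

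\textbf{The fix.} Extend your own path count to every order, which is what the paper does. The $(k,k)$ entry of $\chi(\mathcal L^\ell)$ collects tuples with $a_\ell\oplus\cdots\oplus a_1=k=b_1\oplus\cdots\oplus b_\ell$. Each side therefore has at most $M^{\ell-1}$ free choices, and order $\ell$ contributes at most $M^{2(\ell-1)}t^\ell/\ell!=t\,(M^2t)^{\ell-1}/\ell!$. Summing over $\ell\ge 3$ gives a tail of $\mathcal O(c^2\epsilon^2 t)$, which is negligible against $\epsilon t$.

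Alternatively, shrink $t$ to $\Theta(\min(\epsilon/M^2,\sqrt{\epsilon}/M^3))$. Since $p=\Theta(\epsilon t)$, this leaves $Nt$ unchanged.

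In the Hamiltonian step your diamond-norm tail is harmless, because $M^4t_H=c\epsilon^2\le\epsilon$, as you checked.

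\textbf{Minor corrections.}
\begin{itemize}
\item $Nt=\mathcal O(\log(M/\delta)/\epsilon)$, not $\mathcal O(\log(M/\delta))$. This is still dominated by the Hamiltonian step.
\item No recalibration of $N_H$ is needed, and you need not defer to the paper's accounting. The $\epsilon^6/M^4$ in the main text is a typo; the appendix uses $\Pr[P_k]=\Omega(\epsilon^6/M^8)$. The choice $N_H=\mathcal O(M^8\log(M/\delta)/\epsilon^6)$ in Algorithm~\ref{alg:hamiltonian_support_learn} is exactly $\log(M/\delta)/p$. Hence $N_Ht_H=\mathcal O(M^4\log(M/\delta)/\epsilon^4)$ follows directly from your own numbers.
\end{itemize}
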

\begin{restatable}[Nonlocal Lindbladian Dissipator Support Learning]{corollary}{cordissipatorsupportlearn}\label{cor:dissipator_support_learn}
    Given a purely dissipative $M$-sparse Lindbladian $\mathcal L$ with support $\mathcal S$, we can learn $\hat {\mathcal S}$ such that $\mathcal S \subseteq \hat {\mathcal S}$ with high probability in total evolution time
    $$T = {\mathcal O}\left(\frac{\log M}{\epsilon}\right)$$
\end{restatable}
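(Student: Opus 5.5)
The proof runs Algorithm \ref{alg:dissipator_support_learn} followed by Algorithm \ref{alg:hamiltonian_support_learn} and combines two ingredients: a perturbative bound on the Bell-sampling probabilities of $\mathcal E_t = e^{t\mathcal L}$, and a coupon-collector/Hoeffding argument showing every relevant Pauli is sampled at least once. The guarantee is read in the usual thresholded sense: every Pauli whose coefficient has magnitude at least $\epsilon$ is recovered. Below that scale the coefficient is absorbed into the $\epsilon$-accuracy of the later coefficient-learning step.

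\textbf{Dissipator support.} Write $\mathcal E_t = \mathcal I + t\mathcal L + R_2(t)$ with $R_2(t) = \sum_{k\ge 2} t^k\mathcal L^k/k!$. By Lemma \ref{lem:bell_sampling_channel}, the probability of outcome $P_k$ is the diagonal Chi entry $\xi_{kk}(t) = \delta_{k0} + t\chi_{kk} + [\chi(R_2(t))]_{kk}$.

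The first task is to bound the diagonal Chi entries of $R_2$ entrywise, not through the diamond norm. Composing $\chi_{ab}P_a\cdot P_b$ with $\chi_{cd}P_c\cdot P_d$ gives a term $P_cP_a\cdot P_bP_d$, up to phase. The number of pairs $(a,c)$ in the support with $P_cP_a\propto P_k$ is at most $M$, and similarly on the right. With $|\chi_{ab}|\le 1$ this gives $|[\chi(\mathcal L^2)]_{kk}|\le \mathcal O(M^2)$. Bounding the higher powers the same way, and using $\|\mathcal L\|_\diamond\le M^2$ from Lemma \ref{lem:lindblad_diamond} to sum the tail, yields $|[\chi(R_2)]_{kk}| \le \mathcal O(M^2t^2)$ for $t\le c/M^2$.

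For $k\in\mathcal S$ with $\chi_{kk}\ge\epsilon$, the choice $t=\Theta(\epsilon/M^2)$ with a small constant then gives $\xi_{kk}\ge \epsilon t/2 = \Omega(\epsilon^2/M^2)$. The probability that such a $k$ is never seen in $N$ samples is at most $(1-\Omega(\epsilon^2/M^2))^N$. A union bound over the at most $M$ such Paulis shows $N=\mathcal O(M^2\log(M/\delta)/\epsilon^2)$ suffices. The evolution time for this stage is $Nt = \mathcal O(\log(M/\delta)/\epsilon)$.

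\textbf{Corollary \ref{cor:dissipator_support_learn}.} For a purely dissipative $\mathcal L$, only this first stage is needed, since every nonzero Chi entry already lies on the Pauli-pair support generated from $\mathcal S$. The total time is $\mathcal O(\log M/\epsilon)$, which gives the corollary directly.

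\textbf{Hamiltonian support.} Paulis in $\mathcal S_H\cap\mathcal S'$ are included deterministically, because line 1 of Algorithm \ref{alg:hamiltonian_support_learn} initializes $\hat{\mathcal S}_H$ with all pairwise products from $\hat{\mathcal S}$. For $m\in\mathcal S_H'$, the linear diagonal term $\chi_{mm}$ vanishes. The quadratic diagonal term comes only from composing $h_mP_m\cdot$ with $\cdot\,h_mP_m$, as in Lemma \ref{lem:hamiltonian_supp}, and equals $h_m^2t^2$.

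The remainder is the third-order tail. Counting Pauli triples whose products give $P_m$ on each side bounds it by $\mathcal O(M^4t^3)$. Requiring $M^4t^3\le \epsilon^2t^2/2$ gives $t_H=\Theta(\epsilon^2/M^4)$, so $\xi_{mm}=\Omega(\epsilon^2t_H^2)=\Omega(\epsilon^6/M^8)$. The same collector argument then needs $N_H=\mathcal O(M^8\log(M/\delta)/\epsilon^6)$ samples, for evolution time $N_Ht_H=\mathcal O(M^4\log(M/\delta)/\epsilon^4)$. This dominates the first stage, and a union bound over the two stages gives the stated $T$.

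\textbf{Main obstacle.} The hard step is the third-order tail bound for $m\in\mathcal S_H'$. The difficulty is that second- and third-order products of dissipator and anticommutator terms, including the $\chi_{0k}$ entries, can land on $P_m$ even though $m\notin\mathcal S'$. The counting must show these contribute $\mathcal O(M^4t^3)$ and not $\mathcal O(M^2t^2)$. I would first enumerate which word-length-two products on each side can equal $P_m$, using $m\notin\mathcal S'$ to rule out pure dissipator pairs. I would then bound the remaining cross terms, which necessarily involve an $h_m$ factor or appear at third order.
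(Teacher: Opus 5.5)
Your argument for this corollary is correct and is essentially the paper's proof. Both use the entrywise counting bound $\mathcal O(M^2t^2)$ on the higher-order Bell-sampling corrections, with the $\ell$th-order diagonal entry controlled by $\mathcal O(M^{2(\ell-1)})$ terms; it is this counting, not the bare bound $\|\mathcal L\|_\diamond^\ell\le M^{2\ell}$, that sums the tail, since the latter would lose a factor $M^2$. They then take $t=\Theta(\epsilon/M^2)$ and apply a coupon-collector union bound, giving $N=\mathcal O(M^2\log(M/\delta)/\epsilon^2)$ and $T=Nt=\mathcal O(\log(M/\delta)/\epsilon)$. The Hamiltonian stage and the obstacle you flag belong to Proposition \ref{prop:support_learn} and are not needed here.
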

Note that here, we assume that probability of success $\delta$ is some constant very close to zero. The proof of correctness for these theorems follows trivially from applying Corollary \ref{cor:dissipator_supp} and Lemma \ref{lem:hamiltonian_supp} and the subsequent error analysis. Consequently, $\mathcal S \subseteq \hat {\mathcal S}$ and $\mathcal S_H \subseteq \hat {\mathcal S_H}$ with high probability. Nevertheless, this also implies that our algorithm is likely to oversample the support sets. In order to bound how large these sets are, we require Lemma \ref{lem:support_size}.

\begin{restatable}{lemma}{lemsupportsize}\label{lem:support_size}
    Let $\hat {\mathcal S}, \hat {\mathcal S}_H$ be the support sets returned by Algorithms \ref{alg:dissipator_support_learn} and \ref{alg:hamiltonian_support_learn}. We then have that $|\hat {\mathcal S}| \le \mathcal O(M \log M)$ and $|\hat {\mathcal S}_H| \le |\hat {\mathcal S}|^2 + \mathcal O(M \log M)$, respectively, with high probability.
\end{restatable}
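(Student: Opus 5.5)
Our plan is to bound how many \emph{spurious} Paulis (those outside the true supports) each sampling loop can add, and to do this by controlling the total probability mass on spurious outcomes.

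\textbf{Dissipator set.} Write the Bell sampling distribution of $\mathcal E_t = e^{t\mathcal L}$ as $p_k(t) = \xi_{kk}(t)$, using Lemma \ref{lem:bell_sampling_channel}. Split the outcomes into three classes: the identity, the Paulis in $\mathcal S$, and all remaining Paulis. By Corollary \ref{cor:dissipator_supp}, each $k \in \mathcal S$ has $p_k \approx \chi_{kk} t$.

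For any $k \notin \mathcal S \cup \{0\}$, the first-order term vanishes, so $p_k$ comes entirely from second and higher order in $t$. We will bound the total spurious mass
$$q \;=\; \sum_{k \notin \mathcal S\cup\{0\}} p_k .$$
Since $\xi(t)$ is positive semidefinite with unit trace, $q$ equals the trace of the restriction of $\xi(t)$ to the spurious Paulis. Expanding $e^{t\mathcal L} = \sum_r t^r \mathcal L^r / r!$ and using Lemma \ref{lem:lindblad_diamond}, the order-$r$ contribution to $q$ is at most $\|t\mathcal L\|_\diamond^r / r! \le (M^2 t)^r / r!$. Hence
$$q \le \mathcal O(M^4 t^2).$$
With $t = \mathcal O(\epsilon/M^2)$ this gives $q = \mathcal O(\epsilon^2)$. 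The expected number of spurious samples over $N$ draws is then $N q = \mathcal O(M^2 \log(M/\delta))$.

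This crude bound is already polynomial, but it is not the target. To reach $\mathcal O(M\log M)$ we would refine the estimate using the structure of the second-order term. That term is $\frac{t^2}{2}\mathcal L^2$, and its diagonal weight on a new Pauli $P_k$ comes from products $P_aP_b$ with $a,b$ in the support. By the trace-norm argument, the second-order terms carry total diagonal mass on spurious Paulis of at most $\mathcal O(t^2(\sum_j \chi_{jj})^2) = \mathcal O(M^2 t^2)$, rather than $M^4t^2$. This gives $Nq = \mathcal O(\log(M/\delta))$ spurious samples in expectation. A Chernoff bound then shows that at most $\mathcal O(\log (M/\delta))$ spurious Paulis are collected with high probability. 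Adding the at most $M$ genuine elements gives $|\hat{\mathcal S}| \le M + \mathcal O(\log M) \le \mathcal O(M\log M)$.

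\textbf{Hamiltonian set.} Algorithm \ref{alg:hamiltonian_support_learn} initializes $\hat{\mathcal S}_H$ with the pairwise products of $\hat{\mathcal S}$, which contributes at most $|\hat{\mathcal S}|^2$ elements. For the sampled elements, we separate two classes of outcomes:
\begin{itemize}
\item Paulis in $\hat{\mathcal S}$ or in the pairwise-product set, which are already counted;
\item Paulis in $\mathcal S_H'$, of which there are at most $M$.
\end{itemize}
Every other outcome is spurious and arises only at third order. Its total mass is bounded by the $r\ge 3$ tail, $\mathcal O((M^2 t_H)^3)$, and the same refinement as above tightens this. With $t_H = \mathcal O(\epsilon^2/M^4)$ and $N_H = \mathcal O(M^8\log(M/\delta)/\epsilon^6)$ we need $N_H q_H = \mathcal O(M\log M)$. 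A Chernoff bound then yields $|\hat{\mathcal S}_H| \le |\hat{\mathcal S}|^2 + \mathcal O(M\log M)$.

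\textbf{Main obstacle.} The hard step is the refined mass bound. The crude diamond-norm tail $(M^2t)^r$ loses powers of $M$, and in the Hamiltonian case $N_H (M^2t_H)^3 = \mathcal O(M^2 \log M)$, which is slightly too large. To close this gap we would bound spurious diagonal mass using positivity: each diagonal entry of $\xi$ at order $r$ is a sum $|\sum \text{coeff}|^2$-type term. This lets us use $\ell_1$ norms of $\chi$ over the support, which are $\mathcal O(M)$ rather than $M^2$, as in the heuristic behind the $\varepsilon_\text{Taylor}$ estimates of Appendix \ref{chap:support_learn_proofs}.
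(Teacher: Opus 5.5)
Your skeleton is the same as the paper's. You bound the total Bell-sampling mass $q$ on outcomes outside the true support, note that the expected number of spurious draws is $Nq$, apply a multiplicative Chernoff bound, and add the at most $M$ genuine Paulis (respectively the $|\hat{\mathcal S}|^2$ initialized products and the at most $M$ elements of $\mathcal S_H'$). The gap is in the one quantitative step that matters.

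The only bounds you actually prove, $q \le \mathcal O(M^4t^2)$ and $q_H \le \mathcal O(M^6t_H^3)$ from $\|\mathcal L\|_\diamond \le M^2$, are each a factor $M$ too weak, as you note. The refinement meant to close this is asserted rather than derived. The inequality you give for it, second-order spurious mass $\lesssim t^2(\sum_j \chi_{jj})^2$, is false. Here is a counterexample:
\begin{itemize}
\item Take $m$ qubits with $M = 2^m$, let $A$ be the group of $Z$-strings on them, and write $\Pi = |0^m\rangle\langle 0^m| = \frac1M\sum_{a\in A}P_a$.
\item Set $H = 0$ and $L = \sqrt{M}\,X_n\Pi$, with qubit $n$ outside these $m$.
\item Then $\chi_{ab} = 1/M$ on $X_nA \times X_nA$, $\chi_{c0} = \chi_{0c} = -\frac12$ for $c \in A\setminus\{0\}$, and $\chi_{00} = -1$. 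Every entry has modulus at most $1$, the Lindbladian is $2M$-sparse, and $\sum_{j\ne 0}\chi_{jj} = 1$.
\item However, $G = -\frac12 L^\dagger L = -\frac12\sum_{a\in A}P_a$ and $L^2 = \sum_{a\in A}P_a$. The no-jump and two-jump terms therefore deposit $\frac{t^2}{4} + \frac{t^2}{2}$ on each of the $M-1$ Paulis in $A\setminus\{0\}$, none of which lies in the dissipator support $X_nA$.
\item The second-order spurious mass is thus $\frac34(M-1)t^2$, a factor $\Theta(M)$ above your claim. The loss comes from the products $L_jL_l$: $\|L_j\|_\infty$ can be $\sqrt{M}$ times the normalized Hilbert--Schmidt norm of $L_j$.
\end{itemize}
So your dissipator bound, and the stronger conclusion $|\hat{\mathcal S}| \le M + \mathcal O(\log M)$, are unsupported, and the Hamiltonian bound is explicitly left open.

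The missing idea is to count only the index tuples that can reach the diagonal, rather than bounding $\|\mathcal L^r\|_\diamond$. Write
\[
\mathcal L^r[\rho] = \sum \chi_{a_1b_1}\cdots\chi_{a_rb_r}P_{a_r}\cdots P_{a_1}\rho P_{b_1}\cdots P_{b_r}.
\]
A term contributes to the $(k,k)$ entry only if $a_1\oplus\cdots\oplus a_r = k = b_1\oplus\cdots\oplus b_r$. Let $\mathcal T$ be the full support of $\chi(\mathcal L)$, so $|\mathcal T| \le M$, and use $|\chi_{ab}|\le 1$. Then
\[
\sum_k |\chi(\mathcal L^r)_{kk}| \le \#\Big\{(a,b)\in \mathcal T^{r}\times\mathcal T^{r} : \textstyle\bigoplus_i a_i = \bigoplus_i b_i\Big\} \le M^{2r-1},
\]
because $b_r$ is fixed by the other $2r-1$ indices. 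Summing over $r\ge 2$ gives $q \le \frac1M\sum_{r\ge2}\frac{(M^2t)^r}{r!} = \mathcal O(M^3t^2)$. Outcomes outside $\mathcal S\cup\mathcal S'\cup\mathcal S_H\cup\{0\}$ receive nothing below third order, so the $r\ge 3$ tail gives $q_H = \mathcal O(M^5t_H^3)$.

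These are exactly the bounds the paper uses; it phrases them as summing the per-Pauli Taylor errors $\mathcal O(M^2t^2)$, resp.\ $\mathcal O(M^4t^3)$, over $\mathcal O(M)$ Paulis. With the algorithms' $N$ and $N_H$ they give $Nq = \mathcal O(M\log(M/\delta))$ and $N_Hq_H = \mathcal O(M\log(M/\delta))$. The multiplicative Chernoff bound then yields precisely the $\mathcal O(M\log M)$ overheads in the statement.
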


The full proof of this lemma is provided in Appendix \ref{chap:support_learn_proofs}. Intuitively, however, this result arises from an application of the multiplicative Chernoff bound on the probability of sampling outside the support of the dissipator and Hamiltonian, which is then bounded by the Taylor error analyzed earlier. Ultimately, this lemma affirms that we only oversample the dissipator and Hamiltonian support sets by a polylogarithmic factor of $M$, meaning that the mismatch in size does not introduce any significant overhead in terms of the number of coefficients we have to estimate.

\section{Learning Coefficients}

Now that we have learned the support, we are ready to learn the actual coefficients of the Lindbladian. Previous approaches to coefficient learning relied on various tomography methods such as expectation value derivative estimation \cite{stilck_franca_efficient_2024, ivashkov_ansatz-free_2026} to recover the coefficients. However, if the Paulis in the Lindbladian are highly nonlocal, this potentially requires solving a linear system of exponential size. Consequently, we aim to derive a coefficient learning algorithm that is both efficient in evolution time and avoids potentially intractable classical post-processing.

\subsection{Learning the Diagonal Elements}

From Corollary \ref{cor:dissipator_supp}, we have that Bell sampling allows us to probe the support of the entire Lindbladian by sampling with probability proportional to the Chi matrix coefficient times the time we evolve for. Consequently, by measuring the frequency with which we sample, we can estimate these coefficients. We previously observed that by evolving for time $t = \mathcal O(\epsilon/M^2)$, the probability of sampling a given diagonal coefficients $p_{kk}$ differs from $\chi_{kk}t$ by at most $\mathcal O(\epsilon^2/M^2)$. Hence, if this error is suppressed by a sufficiently large constant factor, we can learn $\chi_{kk}$ to $\epsilon$-accuracy with enough samples. Applying the Hoeffding bound, we have that estimating the probability to accuracy $\mathcal O(\epsilon^2/M^2)$ with high probability requires $N = \widetilde{\mathcal O}(M^4/\epsilon^4)$ samples, giving us Proposition.

\begin{restatable}[Complexity of On-Diagonal Coefficient Learning]{proposition}{propondiagonal}\label{prop:on_diagonal}
    Given an $M$-sparse Lindbladian $\mathcal L$ with dissipator support $\hat {\mathcal S}$, we can learn the diagonal components of $\chi(\mathcal L)$ in total evolution time
    $$T = {\mathcal O}\left(\frac{M^2 \log M}{\epsilon^3}\right)$$
\end{restatable}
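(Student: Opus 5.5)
The plan is to run Bell sampling (Lemma \ref{lem:bell_sampling_channel}) on $\mathcal E_t = e^{t\mathcal L}$ at the short time $t = c\,\epsilon/M^2$ used for support learning, with $c$ a sufficiently small constant. For each $k \in \hat{\mathcal S}$ (and for $k=0$) we take the estimator $\hat\chi_{kk} = \hat p_{kk}/t$, respectively $\hat\chi_{00} = (\hat p_{00}-1)/t$. Here $\hat p_{kk}$ is the empirical frequency of outcome $P_k$ over $N$ shots. The analysis splits into a bias term, coming from the Taylor remainder, and a statistical term, coming from finite sampling. We then need each to contribute at most $\epsilon t/2$ to the error in $p_{kk}$.

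\textbf{Bias.} Write $e^{t\mathcal L} = \mathcal I + t\mathcal L + \mathcal R_t$ with $\mathcal R_t = \sum_{m\ge 2} t^m\mathcal L^m/m!$. By Lemma \ref{lem:bell_sampling_channel}, $p_{kk} = \delta_{k0} + t\chi_{kk} + \chi_{kk}(\mathcal R_t)$. Bounding $|\chi_{kk}(\mathcal R_t)|$ by $\|\mathcal R_t\|_\diamond$ via Lemma \ref{lem:lindblad_diamond} would give only $\mathcal O(M^4t^2)$, which is too crude. Instead I would use the sharper estimate from Appendix \ref{chap:support_learn_proofs}, namely $\varepsilon_\text{Taylor} \le \mathcal O(M^2t^2)$. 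That estimate comes from counting the $\mathcal O(M)$ pairs in the support whose product is proportional to $P_k$, on each side of the second-order term, with higher orders forming a geometric tail once $Mt \ll 1$. With $t = c\epsilon/M^2$ this gives a bias of $\mathcal O(c\,\epsilon^2/M^2) \le \epsilon t/2$ for small $c$. After dividing by $t$, the error in $\chi_{kk}$ is therefore at most $\epsilon/2$.

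\textbf{Sampling.} Every outcome of a Bell sample lands in one of the $4^n$ bins, and all $|\hat{\mathcal S}|+1 = \mathcal O(M\log M)$ estimates (Lemma \ref{lem:support_size}) come from the same shots. By Hoeffding's inequality and a union bound, $N = \mathcal O\bigl(\log(M/\delta)/(\epsilon t)^2\bigr) = \mathcal O(M^4\log M/\epsilon^4)$ shots make every $|\hat p_{kk}-p_{kk}| \le \epsilon t/2$ with probability $1-\delta$. The total evolution time is then $Nt = \mathcal O(M^2\log M/\epsilon^3)$, which is the claimed bound.

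\textbf{Main obstacle.} The crux is the $\mathcal O(M^2t^2)$ remainder bound, uniformly over $k$, including $k=0$ and indices outside $\mathcal S$. The concern is whether the anticommutator and Hamiltonian entries in row and column $0$ (Lemma on the Lindbladian Chi matrix) spoil the sparsity count. I would handle this by expanding $\mathcal L^2$ in the Chi basis: the contribution to $\chi_{kk}(\mathcal L^2)$ is $\sum \chi_{ab}\chi_{cd}$ over $P_cP_a \propto P_k \propto P_bP_d$. Fixing $a$ determines $c$, and fixing $b$ determines $d$, so there are at most $\mathcal O(M)\cdot\mathcal O(M)$ terms. The support of row and column $0$ is contained in $\hat{\mathcal S}_H \cup \mathcal S'$, which I would argue is still effectively $\mathcal O(M)$-sparse per row. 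If that fails, I would fall back on $\mathcal O(M^4t^2)$ and choose $t = \epsilon/M^4$. That choice would worsen the $M$-dependence but would not change the overall structure of the argument.
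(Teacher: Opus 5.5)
Your proposal is correct and matches the paper's proof essentially step for step. Both use Bell sampling at $t=\Theta(\epsilon/M^2)$, the appendix's $\mathcal O(M^2t^2)$ Taylor-remainder bound to control the bias, and Hoeffding with a union bound over the $\mathcal O(M\log M)$ indices of $\hat{\mathcal S}$, giving $N=\mathcal O(M^4\log M/\epsilon^4)$ and $T=Nt=\mathcal O(M^2\log M/\epsilon^3)$. The ``main obstacle'' you flag does not arise: under the paper's definition of support every nonzero $\chi_{ab}$ (including the Hamiltonian and anticommutator entries in row and column $0$) lies in $\mathcal S\times\mathcal S$ with $|\mathcal S|\le M$, and in any case each nonzero inner entry $(a,b)$ determines $(c,d)$, so the second-order count is at most the number of nonzero Chi entries; hence the fallback $t=\epsilon/M^4$, which would not give the stated bound anyway, is unnecessary.
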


\begin{remark}
    While the evolution time bound of $T = \widetilde{\mathcal O}\left(M^2/\epsilon^3\right)$ presented in Proposition \ref{prop:on_diagonal} is sufficient for our purposes, it should be noted that this bound can be optimized further. In particular, if we frequently twirl the Lindblad evolution to isolate the diagonal components, we can use the eigenvalue estimation procedure defined in Chapter \ref{chap:spam_lindblad_learn} paired with knowledge of the support via Bell sampling to remove a factor of $1/\epsilon$ without introducing exponential classical overhead. However, because this cannot be used to improve the runtime of off-diagonal learning, and it does not change the overall runtime of the Lindbladian learning algorithm due to the complexity of learning the Hamiltonian support, we utilize this simpler bound for the remainder of this chapter.
\end{remark}

It should also be noted that during the coefficient estimation procedure, we will estimate coefficients for the Paulis we oversampled in the support. Consequently, we can discard these Paulis by checking if they fall below some coefficient threshold. For instance, we can simply choose the parameters accordingly so that we learn the off-diagonals to $\epsilon/2$-accuracy instead of $\epsilon$-accuracy, meaning that any Pauli not found in the support will have a learned coefficient below $\epsilon/2$ with high probability. As a small caveat, it should also be noted that while the sampling probability is given approximately as $\chi_{kk}t$ for $k \ne 0$, we have that the probability of sampling the identity is approximated as $1 + \chi_{00}t = 1 - t\sum_{k \ne 0}\chi_{kk}$. Writing the on-diagonal learning algorithm in full detail, we have the following procedure:
\begin{algorithm}[H]
\caption{Nonlocal On-Diagonal Lindbladian Coefficient Learning}\label{alg:on_diag_coeff_learn}
\begin{algorithmic}[1]
\REQUIRE Query access to $\mathcal E_t[\cdot] = e^{t \mathcal L}[\cdot]$, accuracy threshold $\epsilon$, sparsity bound $M$, learned dissipator support $\hat {\mathcal S}$
\STATE Initialize all $\hat \chi_{kk}, p_{k} = 0$.
\STATE Set $t \leftarrow \mathcal O(\epsilon/M^2)$
\STATE Set $N \leftarrow \widetilde{\mathcal O}\left(M^4/\epsilon^4\right)$ 
\FOR{$\ell = 1, \dots, N$}
    \STATE Sample Pauli $P_k$ from $\mathcal E_t$ via Bell sampling
    \STATE If $k \in \hat {\mathcal S}$, update $p_{k} \leftarrow p_{k} + 1/N$
\ENDFOR
\STATE Set $\hat \chi_{00} \leftarrow (p_0 - 1)/t$
\FOR{$k \in \hat {\mathcal S} \setminus \{0\}$}
    \STATE Set $\hat \chi_{kk} \leftarrow p_{k}/t$
    \IF{$\hat \chi_{kk} < \frac{\epsilon}{2}$}
        \STATE Reset $\hat \chi_{kk} \leftarrow 0$ and remove $k$ from $\hat {\mathcal S}$
    \ENDIF
\ENDFOR
\RETURN $\hat \chi_{kk}, \hat {\mathcal S}$
\ENSURE Nonzero learned coefficients $\hat \chi_{kk}$ and new dissipator support $\hat {\mathcal S}$
\end{algorithmic}
\end{algorithm}
Since this algorithm modifies the learned support to give us the true dissipator support with high probability, we have that the size of $|\hat {\mathcal S}| = \mathcal O(M)$. Consequently, by Lemma \ref{lem:support_size}, we have that $|\hat {\mathcal S}_H| \le \mathcal O(M^2)$.

\subsection{Learning the Off-Diagonals}

Now that we have learned the diagonal elements of $\chi(\mathcal L)$, we now focus on learning the rest of the $\chi$ matrix. To accomplish this, we need to mix the components of the $\chi$ matrix such that the off-diagonal elements appear on the diagonal. We can do this by applying Clifford operators before performing the Bell basis measurement. In particular, consider the following sets of operators shown in Table \ref{tab:transformation}.
\begin{table}[H]
    \centering
    \begin{tabular}{|c|c|c|c|} \hline
        Transformation & $[P_i, P_j] = 0$? & Unitary & Component Learned \\ \hline
        $P_i \pm P_j$ & N & Y & $\text{Re}[\chi_{ij}]$ \\
        $P_i \pm iP_j$ & Y & Y & $\text{Im}[\chi_{ij}]$ \\
        $P_i \pm iP_j$ & N & N & $\text{Im}[\chi_{ij}]$ \\
        $P_i \pm P_j$ & Y & N & $\text{Re}[\chi_{ij}]$ \\ \hline
    \end{tabular}
    \caption{Types of transformations required to extract the real and imaginary components of the cross-terms.}
    \label{tab:transformation}
\end{table}
The first two transformations are proportional to Clifford operators. In particular, if we define
\begin{equation}
    C_{ij} = \begin{cases}
        \frac{P_i + P_j}{\sqrt 2} & \{P_i, P_j\} = 0 \\
        \frac{P_i + iP_j}{\sqrt 2} & [P_i, P_j] = 0
    \end{cases},
\end{equation}
we can prove the following result.
\begin{lemma}\label{lem:cliff}
    The operators $C_{ij}$ are Clifford operators.
\end{lemma}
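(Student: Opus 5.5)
We show two things: $C_{ij}$ is unitary, and conjugation by $C_{ij}$ maps every Pauli to a Pauli up to a phase in $\{\pm 1, \pm i\}$. Throughout, Paulis are written as $P_b = i^{b^x\cdot b^z}\bigotimes X^{b^x}Z^{b^z}$. Each such operator is Hermitian and squares to the identity, and two Paulis either commute or anticommute, as encoded by the symplectic form $\langle \cdot, \cdot\rangle$. We assume $i \ne j$, since otherwise $C_{ij}$ is either $\sqrt 2 P_i$ or $\frac{1+i}{\sqrt 2}P_i$, which is a Pauli up to scaling; the scaling is fixed by the choice of normalization and does not affect the Clifford property.

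\emph{Unitarity.} In the anticommuting case,
$$C_{ij}C_{ij}^\dag = \tfrac12(P_i+P_j)(P_i+P_j) = \tfrac12(2I + \{P_i,P_j\}) = I.$$
In the commuting case,
$$C_{ij}C_{ij}^\dag = \tfrac12(P_i+iP_j)(P_i-iP_j) = \tfrac12(2I - i[P_i,P_j]) = I.$$
This is exactly the mechanism recorded in the "Unitary" column of Table \ref{tab:transformation}.

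\emph{Conjugation action.} Fix an arbitrary Pauli $Q$ and expand $C_{ij}QC_{ij}^\dag$ into four terms. The key step is a case split on whether $Q$ commutes with $P_i$ and with $P_j$.

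In the anticommuting case we have
$$C_{ij}QC_{ij}^\dag = \tfrac12(P_iQP_i + P_jQP_j + P_iQP_j + P_jQP_i).$$
Write $P_iQP_i = (-1)^{\langle i,Q\rangle}Q$ and $P_jQP_j = (-1)^{\langle j,Q\rangle}Q$, and factor the cross terms as $P_iQP_j = (-1)^{\langle j,Q\rangle}P_iP_jQ$ and $P_jQP_i = (-1)^{\langle i,Q\rangle}P_jP_iQ = -(-1)^{\langle i,Q\rangle}P_iP_jQ$.
\begin{itemize}
\item If $\langle i,Q\rangle = \langle j,Q\rangle$, the cross terms cancel and the diagonal terms give $\pm Q$.
\item If $\langle i,Q\rangle \ne \langle j,Q\rangle$, the $Q$ terms cancel and we obtain $\pm P_iP_jQ$. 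Since $P_i$ and $P_j$ anticommute, $P_iP_j$ is $\pm i$ times a Hermitian Pauli. Hermiticity of $C_{ij}QC_{ij}^\dag$ then forces the result to be $\pm$ a Hermitian Pauli.
\end{itemize}
The commuting case is the same computation with $P_j$ replaced by $iP_j$. Here $P_iP_j$ is itself $\pm$ a Hermitian Pauli, and the factor $i$ supplies the missing phase. Again, the $Q$ terms survive exactly when $Q$ has the same commutation with $P_i$ and $P_j$; otherwise we get $\pm iP_iP_jQ$, which is Hermitian up to sign.

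The main obstacle is phase bookkeeping: checking carefully that the surviving term is a \emph{Hermitian} Pauli up to $\pm 1$ in each subcase, rather than something like $i$ times a Hermitian Pauli. I would sidestep this with the Hermiticity argument above. Conjugation by a unitary preserves Hermiticity, and the only multiples of a Pauli that are Hermitian are the real ones, while unitarity forces the scalar to have modulus $1$. So the image is $\pm$ a Pauli.

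Since conjugation is an automorphism, it suffices to verify the action on the generators $X_k, Z_k$. The case analysis above, however, already handles every $Q$, which completes the proof.
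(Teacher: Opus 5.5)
Your proof is correct, but it takes a different route from the paper. The paper checks only two representative pairs. For $(X,Z)$, $C_{ij}$ is the Hadamard; for $(XX,YY)$, $C_{ij}\propto (X\otimes X)(S\otimes S)\mathrm{CZ}$. It then asserts, without proof, that the Clifford group acts transitively on anticommuting pairs and on commuting pairs, so every $C_{ij}$ is Clifford-conjugate to one of these two.

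You instead verify unitarity and the normalizer condition directly for every $Q$, splitting cases on whether $\langle i,Q\rangle$ equals $\langle j,Q\rangle$. Your Hermiticity argument cleanly settles the phase bookkeeping.

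\textbf{What the paper's route buys.} It is shorter and gives explicit gate decompositions, which is what one wants for implementation.

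\textbf{What your route buys.} It is self-contained and uniform, and it covers a case the reduction argument misses. The pairs with $P_i=I$ (or $P_j=I$) arise when the paper learns $\chi_{0k}$, including the Hamiltonian coefficients. Such a pair can never be Clifford-conjugated to $(XX,YY)$, because Cliffords fix $I$. Your computation handles this case unchanged, giving for instance $C_{0k}=(I+iP_k)/\sqrt2=e^{i\pi P_k/4}$.

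\textbf{An even quicker direct route.} Factor $C_{ij}=P_i(I+P_iP_j)/\sqrt2$ in the anticommuting case, or $C_{ij}=P_i(I+iP_iP_j)/\sqrt2$ in the commuting case. Each is a Pauli times $e^{i\pi K/4}$ for a Hermitian Pauli $K\propto P_iP_j$, so each is manifestly Clifford.

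\textbf{A minor inaccuracy.} When $i=j$, only the commuting formula applies, so the option $\sqrt2P_i$ never arises. In any case, your general computation already covers $i=j$.
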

\begin{proof}
    Choosing $P_i = X, P_j = Z$ gives us $C_{ij} = H$, which is trivially Clifford. Furthermore, choosing $P_i = XX, P_j = YY$ gives us $C_{ij} \propto (X \otimes X)(S \otimes S)\text{CZ}$. Since every pair of anticommuting Paulis can be mapped to every other pair of anticommuting Paulis via Clifford operators, and likewise for commuting Paulis, this means that any operator of the prescribed form is a Clifford.
\end{proof}

For the other two non-unitary transformations, we can implement these using ancilla qubits and post-selection. In particular, suppose we define the Kraus operators $K_{ij}^\pm$ as follows:
\begin{equation}
    K_{ij}^\pm = \begin{cases}
    \frac{P_i \pm iP_j}{2} & \{P_i, P_j\} = 0 \\
    \frac{P_i \pm P_j}{2} & [P_i, P_j] = 0
\end{cases}
\end{equation}
With these operators, we then have the following lemma.

\begin{lemma}\label{lem:kraus}
    The operators $K_{ij}^\pm$ form a complete set of Kraus operators.
\end{lemma}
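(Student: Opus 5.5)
We need to show $\sum_\pm (K_{ij}^\pm)^\dagger K_{ij}^\pm = I$ in both cases, for $i \ne j$. The plan is a direct computation: expand each product, use $P_i^2 = P_j^2 = I$ and Hermiticity of Paulis, and check that the cross terms cancel when the two signs are summed.

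\emph{Anticommuting case} ($\{P_i,P_j\}=0$). Here $(K^\pm)^\dagger = (P_i \mp iP_j)/2$. Then
$$(K^\pm)^\dagger K^\pm = \tfrac14\left(P_i^2 + P_j^2 \pm i(P_iP_j - P_jP_i)\right) = \tfrac14\left(2I \pm 2iP_iP_j\right),$$
using $P_jP_i = -P_iP_j$. The $\pm$ terms cancel on summing, leaving $\sum_\pm (K^\pm)^\dagger K^\pm = I$.

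\emph{Commuting case} ($[P_i,P_j]=0$). Here $K^\pm$ is Hermitian, and
$$(K^\pm)^2 = \tfrac14\left(2I \pm 2P_iP_j\right).$$
The sum over $\pm$ is again $I$. In this case $K^\pm$ are in fact the orthogonal projectors onto the $\pm1$ eigenspaces of the Hermitian involution $P_iP_j$. Likewise, in the anticommuting case, $iP_iP_j$ is a Hermitian involution, and $(K^\pm)^\dagger K^\pm$ are its spectral projectors.

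This completes the verification. Since the sum of $(K^\pm)^\dagger K^\pm$ is the identity, the Stinespring construction in Chapter 2 gives an isometry $V|\psi\rangle = K^+|\psi\rangle|0\rangle + K^-|\psi\rangle|1\rangle$. Post-selecting the ancilla on $|0\rangle$ then implements the non-unitary maps in Table \ref{tab:transformation}, which is the intended use of this lemma.

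The only step that needs care is the Hermiticity bookkeeping: taking the adjoint of $iP_j$ correctly gives $-iP_j$, which produces the commutator in the anticommuting case and the product in the commuting case. I do not expect any real obstacle; the computation is routine.
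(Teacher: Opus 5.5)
Your computation is correct and follows the paper's proof: both compute $(K_{ij}^\pm)^\dagger K_{ij}^\pm = \frac{1}{2}(I \pm iP_iP_j)$ in the anticommuting case and $\frac{1}{2}(I \pm P_iP_j)$ in the commuting case, and note that the $\pm$ terms cancel when summed. One correction to your aside: in the commuting case $K_{ij}^\pm = \frac{1}{2}(P_i \pm P_j)$ are not themselves projectors (for example, $\frac{1}{2}(Z\otimes I + I\otimes Z)$ has eigenvalue $-1$); it is $(K_{ij}^\pm)^2 = \frac{1}{2}(I \pm P_iP_j)$ that are the spectral projectors of $P_iP_j$, exactly as in the anticommuting case.
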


\begin{proof}
    Working out the commuting and anticommuting cases for $K_{ij}^\pm$, we have that
    \begin{equation}\label{eq:kraus_normalization}
        (K_{ij}^\pm)^\dag K_{ij}^\pm = \begin{cases}
        \frac{I \pm iP_iP_j}{2} & \{P_i, P_j\} = 0 \\
        \frac{I \pm P_iP_j}{2} & [P_i, P_j] = 0
    \end{cases}
    \end{equation}
    Hence, in both the anticommuting and commuting cases, we have that
    $$(K_{ij}^+)^\dag K_{ij}^+ + (K_{ij}^-)^\dag K_{ij}^- = I$$
\end{proof}

From Lemma \ref{lem:kraus}, we can define a Stinespring dilation $\Phi_{ij}$ that maps
$$|0\rangle \otimes |\psi\rangle \mapsto |0\rangle \otimes K_{ij}^+|\psi\rangle + |1\rangle \otimes K_{ij}^-|\psi\rangle$$
Consequently, by performing this isometry and measuring the ancilla qubit, we can effectively apply these non-unitary transformations to the system.

To demonstrate how these transformations affect the Lindbladian coefficients, we need only consider terms of the form
\begin{equation}
    \mathcal L[\rho] = \chi_{ii}P_i\rho P_i + \chi_{ij}P_i\rho P_j + \chi_{ji}P_j\rho P_i + \chi_{jj}P_j\rho P_j + \dots
\end{equation}
This is due to the fact that these are the only terms that can be mapped to diagonal terms of the form $\rho$ and $P_iP_j \rho P_j P_i$, which can be probed via Bell sampling. Defining $P_k \propto P_iP_j$, we have that these transformations give us the following new coefficients $\chi'_{00}$ and $\chi'_{kk}$ in the Chi matrix expansion for $\mathcal L' = \mathcal T \circ \mathcal L$, where $\mathcal T$ is any one of the four transformations specified in Table \ref{tab:transformation}. Altogether, this gives us the following table:
\begin{table}[H]
    \centering
    \begin{tabular}{|c|c|c|c|} \hline
        $\mathcal T$ & $[P_i, P_j] = 0$? & $\chi_{00}'$ & $
        \chi_{kk}'$ \\ \hline
        $C_{ij} \cdot C_{ij}^\dag$ & N & $\frac{1}{2}(\chi_{ii} + \chi_{jj} + 2\text{Re}[\chi_{ij}])$ & $\frac{1}{2}(\chi_{ii} + \chi_{jj} - 2\text{Re}[\chi_{ij}])$ \\
        $C_{ij} \cdot C_{ij}^\dag$ & Y & $\frac{1}{2}(\chi_{ii} + \chi_{jj} + 2\text{Im}[\chi_{ij}])$ & $\frac{1}{2}(\chi_{ii} + \chi_{jj} - 2\text{Im}[\chi_{ij}])$ \\
        $K_{ij}^\pm \cdot (K_{ij}^\pm)^\dag$ & N & $\frac{1}{4}(\chi_{ii} + \chi_{jj} \pm 2\text{Im}[\chi_{ij}])$ & $\frac{1}{4}(\chi_{ii} + \chi_{jj} \pm 2\text{Im}[\chi_{ij}])$ \\
        $K_{ij}^\pm \cdot (K_{ij}^\pm)^\dag$ & Y & $\frac{1}{4}(\chi_{ii} + \chi_{jj} \pm 2\text{Re}[\chi_{ij}])$ & $\frac{1}{4}(\chi_{ii} + \chi_{jj} \pm 2\text{Re}[\chi_{ij}])$\\ \hline
    \end{tabular}
    \caption{Coefficients of Chi matrix expansion of $\mathcal T \circ \mathcal L$ for terms $\rho$ and $P_k\rho P_k$. Note that for the Kraus operators, the transformations shown above correspond to the unnormalized coefficients. To renormalize these terms, we have to divide by $\kappa_{ij}^\pm = \text{Tr}(K^\pm_{ij} e^{\mathcal Lt}[\rho] (K^\pm_{ij})^\dag))$, which corresponds to the probability of measuring ancilla 0 in the case of $+$ and measuring ancilla 1 in the case of $-$.}
    \label{tab:transformation_2}
\end{table}

As we can see, these transformations give us linear combinations of coefficients from which we can neatly recover the off-diagonal coefficients. In particular, for the Clifford transformations, we can simply estimate the real and imaginary components of $\chi_{ij}$ by taking $\frac{\chi_{00}' - \chi_{kk}'}{2}$. 

However, for the Kraus transformations, extracting the off-diagonal components requires extra computation. Due to the non-unitary nature of the transformations, we have to renormalize the state by a factor of $1/\kappa_{ij}^\pm$, where $\kappa_{ij}^\pm = \text{Tr}(K^\pm_{ij} e^{\mathcal Lt}[\rho] (K^\pm_{ij})^\dag))$ corresponds to the probability of applying $K_{ij}^+$ and $K_{ij}^-$, respectively. In Appendix \ref{chap:support_learn_proofs}, we show that this normalization factor reduces neatly to either $\frac{1 + \langle P_k\rangle}{2}$ or $\frac{1 - \langle P_k\rangle}{2}$, from which we then arrive at the following lemma:

\begin{restatable}{lemma}{lemnormalization}\label{lem:normalization}
    The normalization factors $\kappa_{ij}^\pm = \text{Tr}(K^\pm_{ij} e^{\mathcal Lt}[\rho] (K^\pm_{ij})^\dag))$ with $t = \mathcal O(\epsilon/M^2)$ satisfy the following bound with high probability:
    $$\left|\kappa_{ij}^\pm - \frac{1}{2}\right| \le \mathcal O(\epsilon)$$
\end{restatable}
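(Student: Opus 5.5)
The proof has two steps. First, compute $\kappa_{ij}^\pm$ exactly in terms of a single Pauli expectation value. Second, bound how far that expectation value can move away from zero during evolution for time $t=\mathcal O(\epsilon/M^2)$.

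\textbf{Reducing $\kappa_{ij}^\pm$ to $\langle P_k\rangle$.} Write $\sigma_t=(\mathcal I\otimes e^{t\mathcal L})[|\Phi_n\rangle\langle\Phi_n|]$ for the evolved Bell state, with $K_{ij}^\pm$ acting on the system half. By cyclicity of trace, $\kappa_{ij}^\pm=\text{Tr}\big((K_{ij}^\pm)^\dag K_{ij}^\pm\,\sigma_t\big)$. Substituting (\ref{eq:kraus_normalization}) from the proof of Lemma \ref{lem:kraus} gives
$$\kappa_{ij}^\pm=\frac{1\pm\langle Q\rangle_{\sigma_t}}{2},$$
where $Q=iP_iP_j$ in the anticommuting case and $Q=P_iP_j$ in the commuting case. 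In both cases $Q=\pm P_k$ is a Hermitian Pauli. It therefore suffices to show $|\langle P_k\rangle_{\sigma_t}|\le\mathcal O(\epsilon)$, meaning $|\text{Tr}((I\otimes P_k)\sigma_t)|\le \mathcal O(\epsilon)$.

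\textbf{The value at $t=0$.} The reduced state of $|\Phi_n\rangle$ on the system is maximally mixed, so $\langle I\otimes P_k\rangle_{\sigma_0}=2^{-n}\text{Tr}(P_k)=0$ because $P_k\neq I$. (We have $k\ne 0$ since $i\ne j$.) If the input is some other fixed probe state, we would instead compare against its known value, but for the Bell-state protocol used here the baseline is exactly $0$.

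\textbf{Perturbation bound.} Using Hölder's inequality together with the fact that $e^{t\mathcal L}$ is contractive in diamond norm,
$$|\langle P_k\rangle_{\sigma_t}-\langle P_k\rangle_{\sigma_0}|\le \|P_k\|_\infty\,\|e^{t\mathcal L}-\mathcal I\|_\diamond\le t\|\mathcal L\|_\diamond.$$
The last step follows by writing $e^{t\mathcal L}-\mathcal I=\int_0^t \mathcal L e^{s\mathcal L}\,ds$ and noting that each $e^{s\mathcal L}$ is CPTP, so it has diamond norm at most $1$. Lemma \ref{lem:lindblad_diamond} gives $\|\mathcal L\|_\diamond\le M^2$. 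With $t=c\,\epsilon/M^2$ this yields $|\langle P_k\rangle|\le c\,\epsilon$, and hence $|\kappa_{ij}^\pm-\tfrac12|\le c\,\epsilon/2=\mathcal O(\epsilon)$.

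\textbf{Main obstacle and the role of ``with high probability''.} The only delicate point is what ``with high probability'' refers to. The bound above is deterministic. The probabilistic qualifier enters only if $\kappa_{ij}^\pm$ is replaced by its empirical estimate, namely the observed frequency of the ancilla outcome. In that case we add a Hoeffding bound: with $N=\widetilde{\mathcal O}(1/\epsilon^2)$ repetitions, which is already dominated by the sample counts used in the off-diagonal protocol, the empirical frequency lies within $\epsilon$ of $\kappa_{ij}^\pm$ with probability at least $1-\delta$. A union bound over the $\mathcal O(M^2)$ pairs $(i,j)$ drawn from $\hat{\mathcal S}\times\hat{\mathcal S}$ costs only a $\log M$ factor. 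I would also check that the sparsity constant in Lemma \ref{lem:lindblad_diamond} applies to the learned support, which is $\mathcal O(M)$ with high probability after Algorithm \ref{alg:on_diag_coeff_learn}. That support-size event is the other source of the ``with high probability'' qualifier.
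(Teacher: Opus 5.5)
Your proposal is correct and follows the paper's proof. Both arguments reduce $\kappa_{ij}^\pm$ to $(1\pm\langle P_k\rangle)/2$ via (\ref{eq:kraus_normalization}), use $\langle P_k\rangle=0$ on the Bell state, and bound the drift by $\|e^{t\mathcal L}-\mathcal I\|_\diamond$ with $\|\mathcal L\|_\diamond\le\mathcal O(M^2)$; your integral bound $t\|\mathcal L\|_\diamond$ is just a slightly sharper form of the paper's $e^{\mathcal O(tM^2)}-1$, and you are right that the bound itself is deterministic (though the learned-support event plays no role there, since $\|\mathcal L\|_\diamond$ depends only on the true $M$-sparse $\mathcal L$).
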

This lemma implies that $\kappa_{ij}^\pm$ differs from $1/2$ by at most $\mathcal O(\epsilon)$, meaning that we will sample each Kraus operator with almost equal probability. This implies that we can estimate the sampling probabilities for both Kraus operators to similar precision for each. After we rescale the estimated probabilities by $\kappa_{ij}^\pm$, we can then take $\chi'_{00} + \chi'_{kk}$ for both the $\pm$ cases, and then take their average deviation to recover the real and imaginary components.

Further error analysis on $\kappa_{ij}^\pm$ in Appendix \ref{chap:support_learn_proofs} elucidates that the sampling complexity $N = \widetilde{\mathcal O}(M^4/\epsilon^4)$ utilized in the on-diagonal case is sufficient for both estimating these normalization coefficients to the desired accuracy and estimating the coefficients themselves to $\mathcal O(\epsilon)$-accuracy. Since this is the sample complexity for a single off-diagonal component, and we have $\mathcal O(M^2)$ off-diagonal elements to learn at most, we have a sample complexity of ${\mathcal O}(M^6/\epsilon^4)$. Hence, we have the following proposition:

\begin{restatable}[Complexity of Off-Diagonal Coefficient Learning]{proposition}{propoffdiagonal}\label{prop:off_diagonal}
    Given an $M$-sparse Lindbladian $\mathcal L$ with dissipator support $\hat {\mathcal S}$ and Hamiltonian support $\hat {\mathcal S}_H$, we can learn all the nonzero off-diagonal components of $\chi(\mathcal L)$ in total evolution time
    $$T = {\mathcal O}\left(\frac{M^4}{\epsilon^3}\right)$$
\end{restatable}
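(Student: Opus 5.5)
The plan is to count the off-diagonal coefficients that must be learned and then bound the evolution time spent on each. After Algorithm \ref{alg:on_diag_coeff_learn} we have $|\hat{\mathcal S}| = \mathcal O(M)$ with high probability. Hence the dissipator off-diagonals $\chi_{ij}$ with $i \ne j \in \hat{\mathcal S}$ number $\mathcal O(M^2)$. For the top row and left column, the dissipator part of $\text{Re}[\chi_{0k}]$ is fixed by the lemma on the Lindbladian Chi matrix once the $\gamma_{ij}$ are known, so only $\text{Im}[\chi_{0k}] = h_k$ for $k \in \hat{\mathcal S}_H$ remains. By Lemma \ref{lem:support_size}, $|\hat{\mathcal S}_H| = \mathcal O(M^2)$, so there are $\mathcal O(M^2)$ unknown real parameters in total. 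Each is obtained as $\text{Re}$ or $\text{Im}$ of some $\chi_{ij}$, where for the Hamiltonian terms we take $i = 0$ (identity) and $j = k$. I would then handle each parameter with the appropriate row of Table \ref{tab:transformation}.

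For a fixed pair $(i,j)$, I would evolve under $e^{t\mathcal L}$ with $t = \mathcal O(\epsilon/M^2)$ and apply $\mathcal T$. This is either the Clifford $C_{ij}$ of Lemma \ref{lem:cliff}, or the dilation $\Phi_{ij}$ of Lemma \ref{lem:kraus} followed by ancilla measurement. I would then Bell sample as in Lemma \ref{lem:bell_sampling_channel} and record the frequencies of the outcomes $P_0$ and $P_k$, where $P_k \propto P_iP_j$. By the same Taylor argument as Corollary \ref{cor:dissipator_supp}, these frequencies equal $\delta_{\cdot 0} + t\chi'_{\cdot\cdot}$ up to error $\mathcal O(M^2t^2) = \mathcal O(\epsilon^2/M^2)$. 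Here $\mathcal T$ is a fixed Clifford or Kraus conjugation of norm at most one, so it does not enlarge the second-order remainder; this follows from Lemma \ref{lem:lindblad_diamond} together with Lemma \ref{lem:superop_diamond}.

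In the Clifford case, the estimate $(\chi'_{00}-\chi'_{kk})/2$ from Table \ref{tab:transformation_2} gives the desired component. In the Kraus case, I would estimate $\kappa^\pm_{ij}$ from the ancilla statistics, rescale, and combine the $\pm$ branches. Lemma \ref{lem:normalization} gives $\kappa^\pm_{ij} = 1/2 \pm \mathcal O(\epsilon)$, so the division is well conditioned and the estimation error of $\kappa$ propagates only linearly. Finally, I would subtract the known diagonal values $\chi_{ii}, \chi_{jj}$ from Proposition \ref{prop:on_diagonal}, learned to accuracy $\epsilon/4$.

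For each parameter, Hoeffding with failure probability $\delta/M^2$ shows that estimating the probabilities to additive accuracy $\mathcal O(\epsilon t) = \mathcal O(\epsilon^2/M^2)$ takes $N = \mathcal O(M^4\log(M/\delta)/\epsilon^4)$ samples. Each sample costs evolution time $t = \mathcal O(\epsilon/M^2)$, so the per-parameter time is $\mathcal O(M^2\log M/\epsilon^3)$. A union bound over the $\mathcal O(M^2)$ parameters then gives total time $\mathcal O(M^4/\epsilon^3)$, up to the logarithmic factor that the statement suppresses.

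The main obstacle is the Kraus case, where the post-selected estimates are ratios $p/\kappa$. There I need to show that the error in $\hat\kappa$ (additive accuracy $\mathcal O(\epsilon^2/M^2)$ from the same $N$ samples) combined with the bias $\mathcal O(\epsilon)$ of Lemma \ref{lem:normalization} contributes only $\mathcal O(\epsilon t)$ to the rescaled frequencies. I would handle this by expanding $1/\hat\kappa = 2(1 + \mathcal O(\epsilon))$. The bias then multiplies the $\mathcal O(t)$ part of $p$ only to leave $\mathcal O(\epsilon t)$, and it multiplies the constant part near $\tfrac12$ in a way that cancels when the $\pm$ branches are differenced.
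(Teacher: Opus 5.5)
There is a genuine gap in how you treat the top row. The rest of your skeleton matches the paper's proof: $\mathcal O(M^2)$ real parameters, $t=\Theta(\epsilon/M^2)$, $N=\widetilde{\mathcal O}(M^4/\epsilon^4)$ Bell samples per parameter, and the remark that the same samples estimate $\kappa^\pm_{ij}$ to $\mathcal O(\epsilon t)$.

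\textbf{Top row.} You do not measure $\text{Re}[\chi_{0k}]$; you compute it from the Chi-matrix lemma as $\frac12\sum_m\omega_{m,m\oplus k}\hat\gamma_{m,m\oplus k}$. For fixed $k$ this sum can have up to $|\hat{\mathcal S}|=\mathcal O(M)$ terms. Each estimate carries error up to $\epsilon$, and part of that error is the deterministic Taylor bias, which need not average out. So you only get $\text{Re}[\chi_{0k}]$ to accuracy $\mathcal O(M\epsilon)$. Pushing the $\gamma$'s down to $\epsilon/M$ instead would raise the total to $\widetilde{\mathcal O}(M^7/\epsilon^3)$.

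The paper measures these entries directly. The set $\hat{\mathcal S}''$ in Algorithm \ref{alg:lindblad_learn} contains $\{0\}\times\hat{\mathcal S}^2$, i.e.\ pairs $(0,a\oplus b)$ with $a,b\in\hat{\mathcal S}$. Since $I$ and $P_k$ commute, the Kraus pair $K^\pm_{0k}=(I\pm P_k)/2$ of Algorithm \ref{alg:off_diag_coeff_learn_iso} isolates $\text{Re}[\chi_{0k}]$. Because $|\hat{\mathcal S}^2|=\mathcal O(M^2)$, your parameter count and the $\mathcal O(M^4/\epsilon^3)$ total are unchanged.

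\textbf{Renormalization once you add the $i=0$ Kraus experiments.} For $i,j\neq 0$, the recorded outcomes $P_0,P_k$ carry no zeroth-order weight; after $\mathcal T$ that weight sits on $P_i,P_j$, so the $\delta_{\cdot 0}$ in your expansion is misplaced. In that case, rescaling by $1/\hat\kappa^\pm_{ij}$ indeed costs only $\mathcal O(\epsilon t)$.

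For $K^\pm_{0k}$ it fails. Each branch puts joint weight about $1/4$ on each of $P_0$ and $P_k$, and $\kappa^\pm_{0k}=(1\pm\langle P_k\rangle)/2$. The rescaled constant parts therefore differ by about $-2\langle P_k\rangle$. This is first order in $t$ whenever $\mathcal L(I)$ has a $P_k$ component, so it is as large as the signal and does not cancel between branches. Difference the joint frequencies instead:
$\Pr[c=0,\,P_m\in\{P_0,P_k\}]-\Pr[c=1,\,P_m\in\{P_0,P_k\}]=2t\,\text{Re}[\chi_{0k}]+\mathcal O(M^2t^2)$.
This needs no $\kappa$ at all.

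\textbf{Taylor remainder.} Lemmas \ref{lem:superop_diamond} and \ref{lem:lindblad_diamond} only bound the second-order remainder by $\mathcal O(M^4t^2)$, which would force $t=\mathcal O(\epsilon/M^4)$. Contractivity of $\mathcal T$ also does not carry per-outcome bounds over, because $\mathcal T$ moves Bell-basis coherences onto the diagonal. The $\mathcal O(M^2t^2)$ per-outcome bound you need comes from the entrywise count in the paper's Taylor analysis. Every Chi entry of $\mathcal L^2$ is $\mathcal O(M^2)$, and $\mathcal T$ combines only the four entries indexed by $\{i,j\}$.
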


Notably, for coefficient learning, we did not have to treat the Hamiltonian terms any different from the others, as the unitary technique mentioned previously inserts $\text{Im}[\chi_{0k}]$ into the linear order diagonal terms. Altogether, we have the following two algorithms:
\begin{algorithm}[H]
\caption{Lindblad Off-Diagonal Coefficient Learning via Cliffords}\label{alg:off_diag_coeff_learn_cliff}
\begin{algorithmic}[1]
\REQUIRE Query access to $\mathcal E_t[\cdot] = e^{t \mathcal L}[\cdot]$, accuracy threshold $\epsilon$, sparsity bound $M$, learned coefficient support $\hat {\mathcal S}'$
\STATE Initialize $\hat \chi_{ab} = 0$ for all $(a, b) \in \mathcal {\hat S'}$ where $a \ne b$
\STATE Set $t \leftarrow \mathcal O(\epsilon/M^2)$
\STATE Set $N \leftarrow {\mathcal O}\left(M^4/\epsilon^4\right)$
\FOR{$a, b \in \hat {\mathcal S}'$ s.t. $a < b$}
    \STATE Set $p_1, p_2 \leftarrow 0$
    \FOR{$\ell = 1, \dots, N$}
        \STATE Sample $P_m$ from $\mathcal C_{ij} \circ \mathcal E_t$ via Bell sampling
        \IF{$P_m = I$}
            \STATE Update $p_1 \leftarrow p_1 + 1/N$
        \ELSIF{$P_m \propto P_aP_b$}
            \STATE Update $p_2 \leftarrow p_2 + 1/N$
        \ENDIF
    \ENDFOR
    \STATE Set $\hat \chi_{ab} \leftarrow \frac{i^{\langle a, b \rangle}}{2t}(p_1 - p_2)$
    \STATE Set $\hat \chi_{ba} = \hat \chi_{ab}^*$
\ENDFOR
\RETURN $\hat \chi_{ab}$
\ENSURE Partially learned off-diagonal coefficients $\hat \chi_{ab}$
\end{algorithmic}
\end{algorithm}

\begin{algorithm}[H]
\caption{Lindblad Off-Diagonal Coefficient Learning via Isometries}\label{alg:off_diag_coeff_learn_iso}
\begin{algorithmic}[1]
\REQUIRE Query access to $\mathcal E_t[\cdot] = e^{t \mathcal L}[\cdot]$, accuracy threshold $\epsilon$, sparsity bound $M$, learned coefficient support $\hat {\mathcal S}''$
\STATE Initialize $\hat \chi_{ab} = 0$ for all $(a, b) \in \mathcal {\hat S}''$ where $a \ne b$
\STATE Set $t \leftarrow \mathcal O(\epsilon/M^2)$
\STATE Set $N \leftarrow \widetilde {\mathcal O}\left(M^4/\epsilon^4\right)$
\FOR{$a, b \in \hat {\mathcal S}''$ s.t. $a < b$}
    \STATE Set $p^{(0)}, p^{(1)}, \kappa_{ab} \leftarrow 0$
    \FOR{$\ell = 1, \dots, N$}
        \STATE Sample $P_m$ from $\Phi_{ij} \circ (I \otimes \mathcal E_t)$ via Bell sampling with ancilla output $c$
        \STATE Update $\kappa_{ab} \leftarrow \kappa_{ab} + c/N$
        \IF{$P_m = I$ or $P_m \propto P_aP_b$}
            \IF{$c = 0$}
                \STATE Update $p^{(0)} \leftarrow p^{(0)} + 1/N$
            \ELSE
                \STATE Update $p^{(1)} \leftarrow p^{(1)} + 1/N$
            \ENDIF
        \ENDIF
    \ENDFOR
    \STATE Set $\hat \chi_{ab} \leftarrow \frac{i^{1 - \langle a, b \rangle}}{2t}((1 - \kappa_{ab})p^{(0)} - \kappa_{ab}p^{(1)})$
    \STATE Set $\hat \chi_{ba} = \hat \chi_{ab}^*$
\ENDFOR
\RETURN $\hat \chi_{ab}$
\ENSURE Partially learned off-diagonal coefficients $\hat \chi_{ab}$
\end{algorithmic}
\end{algorithm}

\section{Full Noiseless Learning Algorithm}

Combining all the support learning and coefficient learning subroutines, we obtain the following algorithm:
\begin{algorithm}[H]
\caption{Nonlocal Sparse Lindbladian Learning}\label{alg:lindblad_learn}
\begin{algorithmic}[1]
\REQUIRE Query access to $\mathcal E_t[\cdot] = e^{t \mathcal L}[\cdot]$, accuracy threshold $\epsilon$, sparsity bound $M$
\STATE Run Algorithm \ref{alg:dissipator_support_learn} to recover the dissipator support set $\hat{\mathcal S}$
\STATE Run Algorithm \ref{alg:on_diag_coeff_learn} to learn $\hat \chi_{kk}$ and recover the trimmed dissipator support set $\hat{\mathcal S}$
\STATE Run Algorithm \ref{alg:hamiltonian_support_learn} to recover the Hamiltonian support set $\hat{\mathcal S}_H$
\STATE Define support sets $\hat {\mathcal S}' = $$(\hat {\mathcal S} \times \hat{\mathcal S}) \cup (\{0\} \times \hat{\mathcal S}_H) \cup (\hat{\mathcal S}_H \times \{0\})$ and $\hat {\mathcal S}'' = $$(\hat {\mathcal S} \times \hat{\mathcal S}) \cup (\{0\} \times \hat{\mathcal S}^2) \cup (\hat{\mathcal S}^2 \times \{0\})$
\STATE Run Algorithm \ref{alg:off_diag_coeff_learn_cliff} to get coefficients $\hat \chi^{(1)}_{ab}$
\STATE Run Algorithm \ref{alg:off_diag_coeff_learn_iso} to get coefficients $\hat \chi^{(2)}_{ab}$
\STATE For $a \ne b$ s.t. $(a, b) \in \hat{\mathcal S}''$, set $\hat \chi_{ab} \leftarrow \hat \chi^{(1)}_{ab} + \hat \chi^{(2)}_{ab}$
\STATE For $a \ne b$ s.t. $(a, b) \in \hat{\mathcal S}' \setminus \hat{\mathcal S}''$, set $\hat \chi_{ab} \leftarrow \hat \chi^{(1)}_{ab}$
\RETURN $(\hat \chi_{ab})$
\ENSURE Learned Chi matrix $(\hat \chi_{ab})$ for Lindbladian $\mathcal L$
\end{algorithmic}
\end{algorithm}

Furthermore, combining all the evolution times for each of these subroutines, we obtain the result posited at the beginning of the chapter:
\begin{theorem}
    Given an $M$-sparse Lindbladian $\mathcal L[\rho] = \sum_{ij} \chi_{ij} P_i \rho P_j$, we can learn coefficients $\hat \chi_{ij}$ s.t.
    $$|\hat \chi_{ij} - \chi_{ij}| \le \epsilon$$
    with high probability for all $i, j$ in total evolution time
    $$T = {\mathcal O}\left(\frac{M^4 \log M}{\epsilon^4}\right)$$
    using Algorithm \ref{alg:lindblad_learn}.
\end{theorem}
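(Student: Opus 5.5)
The plan is to assemble the theorem from the component results stated earlier: correctness, then a union bound over failure events, then a sum of evolution times. First, by Proposition \ref{prop:support_learn}, Algorithms \ref{alg:dissipator_support_learn} and \ref{alg:hamiltonian_support_learn} return sets with $\mathcal S \subseteq \hat{\mathcal S}$ and $\mathcal S_H \subseteq \hat{\mathcal S}_H$ with high probability. Lemma \ref{lem:support_size} then bounds $|\hat{\mathcal S}| = \mathcal O(M\log M)$. After the thresholding in Algorithm \ref{alg:on_diag_coeff_learn}, the bounds are $|\hat{\mathcal S}| = \mathcal O(M)$ and $|\hat{\mathcal S}_H| = \mathcal O(M^2)$.

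Next I will check that every nonzero $\chi_{ab}$ lies in $\hat{\mathcal S}' \cup \{(k,k)\}$. This follows from the Lindbladian Chi matrix lemma.
\begin{itemize}
\item Entries with $a,b \neq 0$ equal $\gamma_{ab}$ and are supported on $\mathcal S\times\mathcal S$.
\item Entries $\chi_{0b}$ and $\chi_{b0}$ have a dissipator part supported on $\mathcal S' = \{a\oplus b\}$ and a Hamiltonian part supported on $\mathcal S_H$.
\item $\chi_{00} = -\sum_k \chi_{kk}$ is determined by the diagonal.
\end{itemize}
So every entry outside the learned supports is exactly zero, and setting $\hat\chi_{ab}=0$ there incurs no error.

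For entries inside the supports:
\begin{itemize}
\item Diagonal entries are learned to accuracy $\epsilon$ by Proposition \ref{prop:on_diagonal}.
\item Off-diagonal entries are learned by Proposition \ref{prop:off_diagonal}, using Table \ref{tab:transformation_2} and Lemma \ref{lem:normalization}.
\item Entries in $\hat{\mathcal S}''$ receive contributions from both Algorithm \ref{alg:off_diag_coeff_learn_cliff} and Algorithm \ref{alg:off_diag_coeff_learn_iso}. One routine yields the real part and the other the imaginary part, depending on commutation (Table \ref{tab:transformation}). Summing them in step 7 of Algorithm \ref{alg:lindblad_learn} therefore reconstructs $\chi_{ab}$.
\end{itemize}
Each estimator is run to accuracy $\epsilon/2$. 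With $\epsilon/2$ per part, $|\hat\chi_{ab}-\chi_{ab}|\le \epsilon$ follows from the triangle inequality.

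For the probability of success, I will set each subroutine's failure parameter to $\delta/\mathrm{poly}(M)$. The number of estimated coefficients is $\mathcal O(M^2)$, and the Hoeffding-based sample counts carry a $\log(M/\delta)$ factor, so this costs only a logarithmic overhead. A union bound over all coefficients and all subroutines then gives overall success with high probability.

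Finally, I add the evolution times:
\begin{itemize}
\item Support learning costs $\mathcal O(M^4\log M/\epsilon^4)$ (Proposition \ref{prop:support_learn}), dominated by Hamiltonian support learning.
\item On-diagonal learning costs $\mathcal O(M^2\log M/\epsilon^3)$.
\item Off-diagonal learning costs $\mathcal O(M^4/\epsilon^3)$, possibly with a $\log M$ factor from the union bound.
\end{itemize}
Since $\epsilon\le 1$, every term is $\mathcal O(M^4\log M/\epsilon^4)$, which is the claimed bound.

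The main obstacle is the error-propagation bookkeeping. Two points need care:
\begin{itemize}
\item \textbf{Isometry estimator.} Its error comes from $\kappa$-estimation error times the $1/t$ rescaling. This must still be $\mathcal O(\epsilon)$, which relies on Lemma \ref{lem:normalization} and the sample size $N=\widetilde{\mathcal O}(M^4/\epsilon^4)$.
\item \textbf{Thresholding step.} It must not discard true support elements with small but nonzero coefficients. This is harmless: such coefficients are below $\epsilon$, so estimating them as $0$ still meets the $\epsilon$ guarantee.
\end{itemize}
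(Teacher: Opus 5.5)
Your overall route matches the paper's. Its proof of this theorem is just the remark that it follows by combining Propositions \ref{prop:support_learn}, \ref{prop:on_diagonal} and \ref{prop:off_diagonal}, and your correctness check, union bound and sum of evolution times flesh out exactly that. The gap is in the step you add for small coefficients. Step 2 of Algorithm \ref{alg:lindblad_learn} trims $\hat{\mathcal S}$ \emph{before} $\hat{\mathcal S}'$ and $\hat{\mathcal S}''$ are formed. So when an index $k$ with small but nonzero $\chi_{kk}$ is discarded, not only $\hat\chi_{kk}$ but every $\hat\chi_{ka}$ with $a\in\mathcal S$ is set to zero. Your claim that every entry outside the learned supports is exactly zero then fails, and ``such coefficients are below $\epsilon$'' controls only the diagonal entry. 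Positivity of $\gamma$ gives merely $|\chi_{ka}|\le\sqrt{\chi_{kk}\chi_{aa}}$, which is of order $\sqrt{\epsilon}$. Concretely, take the single traceless jump operator $L=\frac{1}{\sqrt2}\bigl(P_a+\frac{\sqrt{\epsilon}}{2}P_k\bigr)$. It gives $\chi_{aa}=1/2$ and $\chi_{kk}=\epsilon/8$, so $k$ is trimmed once diagonals are estimated to accuracy $\epsilon/4$. Yet $|\chi_{ak}|=\sqrt{\epsilon}/4>\epsilon$ for $\epsilon<1/16$.

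The paper sidesteps this only implicitly. Its support-learning analysis writes ``assuming $\epsilon\le|\chi_{ab}|\le 1$'' and takes $\Pr[P_k]\approx\epsilon t$ as the worst case, i.e.\ it assumes every $k$ in the dissipator support has $\chi_{kk}\ge\epsilon$. You can repair your argument in one of two ways.

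\begin{enumerate}
\item State that assumption explicitly. Then, with diagonals estimated to accuracy below $\epsilon/2$, no true index is trimmed, $\mathcal S\subseteq\hat{\mathcal S}$ survives step 2, and your ``exactly zero'' argument goes through.
\item For the theorem as literally stated, guarantee that any omitted index has $\chi_{kk}\le\epsilon^2$, so that $|\chi_{ka}|\le\epsilon$. Sampling the support at that resolution is cheap: $t=\Theta(\epsilon^2/M^2)$ with $\widetilde{\mathcal O}(M^2/\epsilon^4)$ samples costs only $\widetilde{\mathcal O}(1/\epsilon^2)$ evolution time. However, $\epsilon$-accurate diagonal estimates cannot certify $\chi_{kk}\le\epsilon^2$. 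You would therefore have to skip the trimming and carry the $\mathcal O(M\log M)$-size $\hat{\mathcal S}$ into the off-diagonal stage, at the price of extra polylogarithmic factors in $T$.
\end{enumerate}
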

The proof of this theorem follows trivially by combining the results of Propositions \ref{prop:support_learn}, \ref{prop:on_diagonal}, and \ref{prop:off_diagonal}. This result matches the $\epsilon$-dependence of the evolution time posited in \cite{ivashkov_ansatz-free_2026}, allowing us to efficiently learn the Lindbladian without requiring exponential classical postprocessing at the cost of only an additional factor of $\widetilde{\mathcal O}(M)$.


\chapter{SPAM-Robust Lindbladian Learning}\label{chap:spam_lindblad_learn}

While the previous algorithm operated under noiseless assumptions, in practice, noise is an unavoidable part of quantum devices, and must therefore be properly accounted for. In particular, SPAM error poses a significant risk by making state preparation and measurement operations unreliable. Since primitives like Bell sampling require frequent measurements, these methods are infeasible to implement on our current devices. Consequently, in this chapter, we will investigate how to mitigate the effect of SPAM error on the learning process, allowing us to learn much of the Lindbladian without any hindrance. For our purposes, we will assume the following bound on the SPAM error, as defined in \cite{hu_ansatz-free_2025}:
\begin{definition}[SPAM Error]\label{def:spam}
    We define the state preparation noise as an error channel $\mathcal E_\text{prep}$ applied after the ideal state preparation channel and the measurement error as an error channel $\mathcal E_\text{meas}$ applied before the ideal measurement channel. We then bound the distance between these channels and their ideal counterparts as
    $$\| \mathcal E_\text{prep} - \mathcal I \|_\diamond + \| \mathcal E_\text{meas} - \mathcal I \|_\diamond \le \varepsilon_\text{SPAM},$$
    where $\varepsilon_\text{SPAM}$ is the bound on the SPAM error.
\end{definition}
Having defined the SPAM error threshold $\varepsilon_\text{SPAM}$, we can now formulate the problem we aim to solve in this chapter:
\begin{problem}
    Given a SPAM error threshold $\varepsilon_\text{SPAM} \sim \mathcal O(1)$, what components of the Lindbladian Chi matrix are possible to efficiently learn to arbitrary accuracy? 
\end{problem}

To answer this question, we provide the following three central results:
\begin{result}[Unlearnability]
    There exist $n$-qubit Lindbladians $\mathcal L$ such that the components $\text{Re}[\chi_{ij}]$ for $[P_i, P_j] = 0$ and $\text{Im}[\chi_{ij}]$ for $\{P_i, P_j\} = 0$ that cannot be learned to arbitrary accuracy.
\end{result}
\begin{result}[SPAM-Robust Lindbladian Learning]
    Given an $M$-sparse Lindbladian $\mathcal L$, we can estimate the learnable components of $\chi(\mathcal L)$ to $\epsilon$-accuracy in time
    $$T = \widetilde{\mathcal O}\left(\frac{M^{13}n}{\epsilon^8}\right)$$
\end{result}
\begin{result}[SPAM-Robust Lindbladian Learning with Known Support]
    Given an $M$-sparse Lindbladian $\mathcal L$ with known support set $\mathcal S$, we can estimate the learnable components of $\chi(\mathcal L)$ to $\epsilon$-accuracy in time
    $$T = \widetilde{\mathcal O}\left(\frac{M^{5}n}{\epsilon^3}\right)$$
    with efficient classical postprocessing.
\end{result}

\section{Gauge Degrees of Freedom}

To characterize the unlearnable components of the Lindbladian, we employ the notion of a gauge. In quantum learning theory, there often exist parameterized transformations of the operators in any experimental setup that yield identical experimental outcomes, meaning that it is impossible to exactly characterize the nature of these operators in isolation. The parameter that controls these transformations is referred to as a gauge degree of freedom. To define a gauge degree of freedom in the Lindbladian learning context, we consider the most general form of an experiment to learn a Lindbladian $\mathcal L$, consisting of the following sequence:
\begin{equation}\label{eq:experiment}
    (\mathcal M \circ \mathcal E_\text{meas} \circ \mathcal U_N \circ e^{t_N\mathcal L} \circ \cdots \circ \mathcal U_1 \circ e^{t_1\mathcal L} \circ \mathcal E_\text{prep})\rho
\end{equation}
where each $\mathcal U_i[\cdot] = U_i \cdot U_i^\dag$ corresponds to a unitary operation and $\mathcal M$ corresponds to the ideal measurement channel. We define a gauge degree of freedom $\theta$ to be a variable such that there exists some parameterization $\mathcal L' = \mathcal L(\theta), 
\mathcal E_\text{prep}' = \mathcal E_\text{prep}(\theta), \mathcal E_\text{meas}' = \mathcal E_\text{meas}(\theta)$ such that the experiment
\begin{equation}\label{eq:gauge_experiment}
    (\mathcal M \circ \mathcal E_\text{meas}' \circ \mathcal U_N \circ e^{t_N\mathcal L'} \circ \cdots \circ \mathcal U_1 \circ e^{t_1\mathcal L} \circ \mathcal E'_\text{prep})\rho,
\end{equation}
is indistinguishable from that shown previously, with both the error channels still satisfying Definition \ref{def:spam}:
$$\| \mathcal E'_\text{prep} - \mathcal I \|_\diamond + \| \mathcal E'_\text{meas} - \mathcal I \|_\diamond \le \varepsilon_\text{SPAM},$$
By identifying gauge degrees of freedom for certain classes of Lindbladians, we determine the coefficients of the Chi matrix that are gauge dependent and therefore cannot be learned in general.

\subsection{Single-Qubit Lindbladians}

Before analyzing the general $n$-qubit case, we begin by analyzing candidate degrees of freedom for single-qubit Lindbladians. In order for (\ref{eq:gauge_experiment}) to be equivalent to (\ref{eq:experiment}), we consider conjugation transformations of the Lindbladian: $\mathcal L' = \Lambda \circ \mathcal L \circ \Lambda^{-1}$, as $e^{t\mathcal L'} = \Lambda \circ e^{t\mathcal L} \circ \Lambda^{-1}$. Substituting this into (\ref{eq:gauge_experiment}), we have that if $\Lambda$ commutes with each $\mathcal U_i$, then choosing $\mathcal E_\text{prep}' = \Lambda \circ \mathcal E_\text{prep}$ and $\mathcal E_\text{meas}' = \mathcal E_\text{meas} \circ \Lambda^{-1}$ results in complete equality between (\ref{eq:gauge_experiment}) and (\ref{eq:experiment}). Since we are permitting any unitary transformation, we require $\Lambda$ to commute with all unitary channels. To satisfy this condition, we therefore choose $\Lambda$ to be a depolarizing channel. With this transformation defined, we now prove that this transformation constitutes a valid gauge transformation and identify the components it affects.

\begin{proposition}[Single-Qubit Lindbladian Gauge Dependence]\label{prop:gauge_single}
    There exist SPAM error channels $\mathcal E_\text{prep}, \mathcal E_\text{meas}$ such that for any single-qubit Lindbladian $\mathcal L$, under the gauge transformation
        \begin{align*}
        \mathcal L &\mapsto \mathcal L' = e^{-\theta\Gamma} \circ  \mathcal L  \circ e^{\theta\Gamma} \\
        \mathcal E_\text{prep} &\mapsto \mathcal E_\text{prep}' = e^{-\theta \Gamma} \circ \mathcal E_\text{prep} \\
        \mathcal E_\text{meas} &\mapsto \mathcal E_\text{meas}' = \mathcal E_\text{meas} \circ e^{\theta \Gamma},
    \end{align*}
    where
    $$\Gamma = \frac{1}{4}[3\rho - X \rho X - Y \rho Y - Z \rho Z],$$
    the components $\text{Re}[\chi_{0k}]$ and $\text{Im}[\chi_{ij}]$ are affected by a factor of $e^{-\theta}$ for any $P_i, P_j, P_k \in \{X, Y, Z\}$ with $i \ne j$.
\end{proposition}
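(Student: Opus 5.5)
The plan is to work in the Pauli transfer matrix picture (Definition~\ref{def:ptm}). There the gauge map is diagonal, so the conjugation can be read off entry by entry. The first step is to compute $\Gamma$ on the single-qubit Pauli basis. Since $\Gamma[I] = 0$, and for example $\Gamma[X] = \frac14(3X - X + X + X) = X$ (similarly for $Y$, $Z$), we get $\mathcal R(\Gamma) = \mathrm{diag}(0,1,1,1)$ and hence $\mathcal R(e^{\theta\Gamma}) = \mathrm{diag}(1,e^{\theta},e^{\theta},e^{\theta})$. This is a depolarizing channel for $\theta \le 0$, and its inverse is $e^{-\theta\Gamma}$. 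Using $\mathcal R(\mathcal E_1\circ\mathcal E_2) = \mathcal R(\mathcal E_1)\mathcal R(\mathcal E_2)$, the transformed entries are
$$T'_{ab} = e^{-\theta w_a}\, T_{ab}\, e^{\theta w_b},$$
with $w_0 = 0$ and $w_k = 1$ for $k \ne 0$. Trace preservation forces $T_{0b} = 0$, so the only entries that change are $T_{k0}$ with $k \neq 0$, i.e.\ the non-unital column, and each of these is multiplied by $e^{-\theta}$. Every other entry is unchanged. Since $e^{t\mathcal L'} = e^{-\theta\Gamma}e^{t\mathcal L}e^{\theta\Gamma}$, and $\Gamma$ (being built from all Pauli conjugations) commutes with every unitary channel, the gauged experiment (\ref{eq:gauge_experiment}) coincides with (\ref{eq:experiment}).

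The second step translates this scaling into statements about $\chi$. We have $T_{k0} = \frac12\mathrm{Tr}(P_k\mathcal L[I])$ and $\mathcal L[I] = \sum_{a,b}\chi_{ab}P_aP_b$, so the contributions to $T_{k0}$ come from two sources.
\begin{itemize}
\item The pairs $(0,k)$ and $(k,0)$ contribute $\chi_{0k}+\chi_{k0} = 2\,\mathrm{Re}[\chi_{0k}]$.
\item Each pair $i \neq j$ with $P_iP_j = i\varepsilon_{ijk}P_k$ contributes $i\varepsilon_{ijk}(\chi_{ij}-\chi_{ji}) = -2\varepsilon_{ijk}\,\mathrm{Im}[\chi_{ij}]$.
\end{itemize}
The diagonal pairs give $P_a^2 = I$ and feed only $T_{00}$. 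Conversely, the unital block $T_{kl}$ with $k,l \ne 0$ is determined by $\chi_{kk}$, $\mathrm{Re}[\chi_{ij}]$ and $\mathrm{Im}[\chi_{0k}]$ (the Hamiltonian), and all of these are fixed. Matching coefficients then shows that $\chi'$ is obtained from $\chi$ by multiplying $\mathrm{Im}[\chi_{ij}]$ ($i\ne j$) and $\mathrm{Re}[\chi_{0k}]$ by $e^{-\theta}$. By the Lindbladian Chi matrix lemma, $\mathrm{Re}[\chi_{0k}]$ is itself the linear combination $\frac12\sum_m\omega\gamma$ of exactly these imaginary off-diagonal parts, so the two scalings are consistent.

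The third step, which I expect to be the main obstacle, is showing that the transformed objects are physical. First, $\mathcal L'$ must be a valid Lindbladian. Its Hamiltonian is unchanged and its Kossakowski matrix is $\gamma' = \mathrm{Re}\,\gamma + i e^{-\theta}\mathrm{Im}\,\gamma$. For $\theta \ge 0$ this can be written as
$$\gamma' = \tfrac{1+e^{-\theta}}{2}\gamma + \tfrac{1-e^{-\theta}}{2}\gamma^*,$$
a convex combination of the positive semidefinite matrices $\gamma$ and $\gamma^* = \gamma^T$, so $\gamma' \succeq 0$. I would restrict to $\theta \ge 0$ for this reason; for $\theta<0$ positivity can fail. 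Second, the SPAM channels must be CPTP and obey Definition~\ref{def:spam}. Here I take $\mathcal E_\text{prep} = \mathcal I$ and $\mathcal E_\text{meas} = e^{-\theta_0\Gamma}$ for a small $\theta_0>0$. Then for $\theta\in[0,\theta_0]$:
\begin{itemize}
\item $\mathcal E'_\text{prep} = e^{-\theta\Gamma}$ is a depolarizing channel, hence CPTP;
\item $\mathcal E'_\text{meas} = e^{-(\theta_0-\theta)\Gamma}$ is also a depolarizing channel, hence CPTP.
\end{itemize}
Each is at diamond distance $O(1-e^{-s})$ from $\mathcal I$, with $s = \theta$ or $\theta_0-\theta$. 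Choosing $\theta_0 = \Theta(\varepsilon_\text{SPAM})$ keeps the sum of the two distances below $\varepsilon_\text{SPAM}$. This yields a continuous one-parameter family of indistinguishable settings along which the stated components are rescaled by $e^{-\theta}$.
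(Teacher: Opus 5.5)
Your proposal is correct and has the same skeleton as the paper. Both work with Pauli transfer matrices, note that conjugation by $e^{\theta\Gamma}$ fixes the unital block and multiplies only the non-unital column $T_{k0}$ by $e^{-\theta}$, and absorb the gauge into depolarizing SPAM channels that commute with every unitary.

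The genuine difference is the step certifying that $\mathcal L'$ is a Lindbladian.

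\begin{itemize}
\item The paper's argument (Lemma~\ref{lem:gauge_single_lindbladian}) is geometric. $e^{t\mathcal L'}$ maps the Bloch ball onto the same ellipsoid as $e^{t\mathcal L}$, with the center moved from $\mathbf d(t)$ to $e^{-\theta}\mathbf d(t)$, and this stays inside the ball by convexity. That conveys why the gauge makes the dynamics more unital. However, ellipsoid containment by itself only certifies positivity of $e^{t\mathcal L'}$, not complete positivity; the qubit transpose also preserves the Bloch ball.
\item Your argument is algebraic. With $H$ unchanged, the Kossakowski matrix becomes $\gamma' = \frac{1+e^{-\theta}}{2}\gamma + \frac{1-e^{-\theta}}{2}\gamma^{T}$, a convex combination of positive semidefinite matrices. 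This places $\mathcal L'$ directly in GKSL form, so complete positivity for all $t \ge 0$ follows at once. It also explains why $\theta \ge 0$ is needed.
\end{itemize}

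Your approach is therefore the more complete of the two on this point.

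Your coefficient bookkeeping also makes explicit something the paper only asserts. $\mathrm{Im}[\chi_{ij}]$ and $\mathrm{Re}[\chi_{0k}]$ enter only $T_{k0}$ and $T_{0k}$. Meanwhile $\chi_{kk}$, $\mathrm{Re}[\chi_{ij}]$ and $\mathrm{Im}[\chi_{0k}]$ enter only the unital block. Hence the rescaled $\chi'$ is uniquely determined.

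Your SPAM choice, $\mathcal E_\text{prep} = \mathcal I$ and $\mathcal E_\text{meas} = e^{-\theta_0\Gamma}$, differs from the symmetric choice $p_1 = p_2 = \varepsilon_\text{SPAM}/4$ in Lemma~\ref{lem:gauge_error_channel}. In your version the gauge simply slides a depolarizing factor from measurement to preparation, and both transformed channels are manifestly depolarizing for $\theta \in [0,\theta_0]$. The sum of diamond distances is largest at $\theta = \theta_0/2$, and the admissible range is again $\Theta(\varepsilon_\text{SPAM})$.
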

To demonstrate the existence of this gauge degree of freedom, we must first show that the transformed error channels are both valid and satisfy the SPAM error bound, after which, we must then demonstrate that the transformed Lindbladian still generates a CPTP map. To validate the error channels, we utilize the following lemma for $n$-qubit depolarizing channels:
\begin{restatable}{lemma}{lemgaugeerrorchannel}\label{lem:gauge_error_channel}
    There exist $n$-qubit depolarizing channels $\mathcal E_1$ and $\mathcal E_2$ satisfying
    $$\| \mathcal E_1 - \mathcal I \|_\diamond + \| \mathcal E_2 - \mathcal I \|_\diamond \le \varepsilon_\text{SPAM}$$
    such that 
    \begin{align*}
        \mathcal E_1' &= e^{-\theta \Gamma} \circ \mathcal E_1 \\
        \mathcal E_2' &= \mathcal E_2 \circ e^{\theta \Gamma}
    \end{align*}
    constitute valid depolarizing channels for $0 \le \theta \le -\ln(1 - \varepsilon_\text{SPAM}/4)$, where
    $$\Gamma = \frac{1}{4^n}\left[(4^n - 1)\rho - \sum_{i \ne 0}P_i\rho P_i\right]$$
    is the generator for $n$-qubit global depolarizing channels and the new error channels satisfy
    $$\| \mathcal E_1' - \mathcal I \|_\diamond + \| \mathcal E_2' - \mathcal I \|_\diamond \le \varepsilon_\text{SPAM}$$
\end{restatable}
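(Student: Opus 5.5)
Since every channel in sight is a global depolarizing channel, the plan is to reduce everything to a single scalar parameter. First I compute the action of $\Gamma$ in the Pauli basis. Using $\sum_{i} P_i P_a P_i = 4^n \delta_{a0} I$ (sum over all $4^n$ Paulis including identity), we get $\Gamma[I] = 0$ and $\Gamma[P_a] = P_a$ for $a \ne 0$. So $\mathcal R(\Gamma) = \mathrm{diag}(0,1,\dots,1)$, and hence
$$\mathcal R(e^{\pm\theta\Gamma}) = \mathrm{diag}(1, e^{\pm\theta}, \dots, e^{\pm\theta}).$$
I then parametrize $n$-qubit depolarizing channels as $\mathcal D_\lambda$ with PTM $\mathrm{diag}(1,\lambda,\dots,\lambda)$, i.e.\ $\mathcal D_\lambda[\rho] = \lambda\rho + (1-\lambda)\,\text{Tr}(\rho) I/2^n$. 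The map $\mathcal D_\lambda$ is CPTP exactly when $-1/(4^n-1) \le \lambda \le 1$; checking the Choi eigenvalues of $\lambda|\Phi_n\rangle\langle\Phi_n| + (1-\lambda)I/4^n$ suffices, and I will only need $\lambda \in [0,1]$. Because PTMs multiply under composition, $e^{-\theta\Gamma}\circ\mathcal D_{\lambda_1} = \mathcal D_{\lambda_1 e^{-\theta}}$ and $\mathcal D_{\lambda_2}\circ e^{\theta\Gamma} = \mathcal D_{\lambda_2 e^{\theta}}$. Validity therefore reduces to the scalar condition $\lambda_2 e^{\theta} \le 1$; the first factor stays in $[0,1]$ automatically for $\theta\ge 0$.

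Next I need the diamond distance to the identity. Writing $\mathcal D_\lambda - \mathcal I = (1-\lambda)(\mathcal D_0 - \mathcal I)$ with $\mathcal D_0$ the completely depolarizing channel, I get
$$\|\mathcal D_\lambda - \mathcal I\|_\diamond = (1-\lambda)\|\mathcal D_0 - \mathcal I\|_\diamond \le 2(1-\lambda).$$
For the concrete choice I take $\mathcal E_1 = \mathcal I$, i.e.\ $\lambda_1 = 1$, and $\mathcal E_2 = \mathcal D_{\lambda_2}$ with $\lambda_2 = 1-\varepsilon_\text{SPAM}/4$. The initial sum is then at most $2\cdot\varepsilon_\text{SPAM}/4 = \varepsilon_\text{SPAM}/2 \le \varepsilon_\text{SPAM}$. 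The hypothesis $\theta \le -\ln(1-\varepsilon_\text{SPAM}/4)$ is exactly $e^{\theta}\lambda_2 \le 1$, so $\mathcal E_2'$ is a valid depolarizing channel with parameter in $[\lambda_2, 1]$.

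Finally, the transformed channels give
$$\|\mathcal E_1'-\mathcal I\|_\diamond + \|\mathcal E_2'-\mathcal I\|_\diamond \le 2(1-e^{-\theta}) + 2(1-\lambda_2 e^{\theta}).$$
Using $e^{-\theta} \ge 1-\varepsilon_\text{SPAM}/4$ bounds the first term by $\varepsilon_\text{SPAM}/2$. Using $\lambda_2 e^{\theta} \ge \lambda_2$ bounds the second term by $\varepsilon_\text{SPAM}/2$. The total is at most $\varepsilon_\text{SPAM}$, which is the claim.

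The main obstacle is the exact constant in the diamond-norm estimate, because the budget $\varepsilon_\text{SPAM}/4$ in the $\theta$ range is calibrated to it. The crude bound $\|\mathcal D_0-\mathcal I\|_\diamond \le 2$ from the triangle inequality already suffices. If a sharper constant were needed, I would compute $\|\mathcal D_0-\mathcal I\|_\diamond$ directly at the maximally entangled input, where it equals $2(1-4^{-n})$; by the covariance of depolarizing channels this input is optimal.
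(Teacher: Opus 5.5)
Your proof is correct and follows the same outline as the paper's. Compute $\mathcal R(\Gamma)=\mathrm{diag}(0,1,\dots,1)$ and observe that composing with $e^{\mp\theta\Gamma}$ only rescales the single depolarizing parameter. Validity of $\mathcal E_2'$ then reduces to $e^{\theta}(1-p_2)\le 1$, and the budget is handled with $\|\mathcal E-\mathcal I\|_\diamond\le 2p$.

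You differ from the paper in two places: the choice of witnesses, and how the transformed budget is checked.

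The paper makes both channels noisy, $p_1=p_2=\varepsilon_\text{SPAM}/4$. It must then control $p_1'+p_2'=2-e^{-\theta}(1-p_1)-e^{\theta}(1-p_2)$ jointly. It does so by maximizing over all real $\theta$, which gives $2-2\sqrt{(1-p_1)(1-p_2)}$, and then imposing $(1-p_1)(1-p_2)\ge(1-\varepsilon_\text{SPAM}/4)^2$.

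You put all the noise on the measurement side: $\mathcal E_1=\mathcal I$, the $p=0$ member of the family, which the statement permits, and $p_2=\varepsilon_\text{SPAM}/4$. This lets you bound the two terms separately by monotonicity on $[0,-\ln(1-\varepsilon_\text{SPAM}/4)]$, with no optimization. That termwise argument relies on your asymmetric choice. With the paper's witnesses, the two termwise worst cases sit at opposite ends of the interval and add to $3\varepsilon_\text{SPAM}/2-\varepsilon_\text{SPAM}^2/8>\varepsilon_\text{SPAM}$, so a joint bound is needed there.

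Your route is shorter and fully elementary. The paper's version gives a condition uniform in $\theta$ under which both SPAM channels can be genuinely noisy. Applied to your $p_1=0$ case, it also shows you have slack: $p_2$ can be raised to $1-(1-\varepsilon_\text{SPAM}/4)^2$, which doubles the admissible range to $\theta\le-2\ln(1-\varepsilon_\text{SPAM}/4)$.
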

Note that this lemma holds for any $n \ge 1$, which will be crucial for the next section, where we deal with the multi-qubit case. This lemma demonstrates that by picking $\mathcal E_\text{prep}, \mathcal E_\text{meas}$ adversarially to be global depolarizing channels, we can generate a collection of error channels that still obey the SPAM constraint for $0 \le \theta \le -\ln(1 - \varepsilon_\text{SPAM}/4)$, indicating that $\theta$ can be of constant order in the worst case. This lemma is rigorously proven in Appendix \ref{chap:gauge}, however, a graphical demonstration is provided in Figure \ref{fig:error_transform}.

Having constrained $\theta$, we can now focus on showing the validity of the Lindbladian. To this end, we utilize the following lemma:
\begin{restatable}{lemma}{lemgaugesinglelindbladian}\label{lem:gauge_single_lindbladian}
    For any single-qubit Lindbladian $\mathcal L$, the transformed operator $\mathcal L' = e^{-\theta \Gamma} \circ \mathcal L \circ e^{\theta \Gamma}$ is a valid Lindbladian for all $\theta \ge 0$.
\end{restatable}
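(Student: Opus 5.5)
The plan is to work in the Pauli transfer matrix picture of Definition~\ref{def:ptm} and then rewrite $\mathcal L'$ as a nonnegative combination of two manifestly valid Lindbladians. First, compute the action of $\Gamma$ on the Pauli basis. Since $\Gamma(I)=0$ and, for example, $\Gamma(X)=\tfrac14[3X - X + X + X] = X$ (and likewise for $Y,Z$), the PTM of $\Gamma$ is $\mathrm{diag}(0,1,1,1)$. Hence $e^{\pm\theta\Gamma}$ has PTM $\mathrm{diag}(1,e^{\pm\theta},e^{\pm\theta},e^{\pm\theta})$.

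Next, write the PTM of $\mathcal L$ in block form. Because $\mathrm{Tr}(\mathcal L[A])=0$ for all $A$, the top row vanishes, so
$$\mathcal R(\mathcal L)=\begin{pmatrix}0 & 0\\ c & G\end{pmatrix},$$
where $c\in\mathbb R^3$ is the non-unital part and $G$ is the $3\times 3$ block acting on Bloch vectors. Conjugating by the diagonal matrices from the first step gives
$$\mathcal R(\mathcal L')=\begin{pmatrix}0&0\\ e^{-\theta}c & G\end{pmatrix}.$$
Thus $\mathcal L'$ is automatically trace-preserving and hermiticity-preserving, and it differs from $\mathcal L$ only in that its non-unital column is shrunk by $\lambda=e^{-\theta}\in(0,1]$. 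This also exhibits the scaling of $\mathrm{Re}[\chi_{0k}]$ and $\mathrm{Im}[\chi_{ij}]$ claimed in Proposition~\ref{prop:gauge_single}.

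The key step, and the only real obstacle, is showing that $\mathcal L'$ is still conditionally completely positive, i.e.\ that it has a genuine GKSL form. The plan is to introduce the antiunitary ``universal inversion'' $\Theta(\rho)=Y\bar\rho Y$, whose PTM on Hermitian inputs is $\mathrm{diag}(1,-1,-1,-1)$, and to set $\widetilde{\mathcal L}=\Theta\circ\mathcal L\circ\Theta$. This composite is linear, being a composition of two antilinear maps around a linear one. Moreover, if $\mathcal L$ has Hamiltonian $H$ and jump operators $L_j$, then $\widetilde{\mathcal L}$ is again of the form (\ref{eq:lindblad_diagonal}) with Hamiltonian $-Y\bar H Y$ and jump operators $Y\bar L_j Y$. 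I will verify this by conjugating each term of (\ref{eq:lindblad_diagonal}) directly. So $\widetilde{\mathcal L}$ is a valid Lindbladian, and its PTM is $\begin{pmatrix}0&0\\-c&G\end{pmatrix}$. Therefore
$$\mathcal L'=\frac{1+e^{-\theta}}{2}\,\mathcal L+\frac{1-e^{-\theta}}{2}\,\widetilde{\mathcal L},$$
which can be checked block by block: the $G$ blocks agree, and the non-unital columns combine to $\tfrac{1+\lambda}{2}c-\tfrac{1-\lambda}{2}c=\lambda c$.

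For $\theta\ge 0$, both weights in this decomposition are nonnegative. A nonnegative combination of GKSL generators is again of GKSL form: take the weighted Hamiltonians and the rescaled jump operators $\sqrt{w}\,L_j$. Hence $\mathcal L'$ generates a CPTP semigroup and is a valid Lindbladian. The points I would check carefully are the sign bookkeeping for $\Theta$ on $Y$ (complex conjugation flips $Y$, and conjugation by $Y$ then restores the overall sign so that all three Paulis map to their negatives) and that $\widetilde{\mathcal L}$ has exactly the same unital block $G$. This is precisely where the single-qubit restriction enters, since the universal inversion $r\mapsto -r$ is implementable by an antiunitary only for $n=1$.
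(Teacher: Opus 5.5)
Your proof is correct, and it takes a different route from the paper's at the key step. Your first part matches the paper exactly: $\mathcal R(\Gamma)=\mathrm{diag}(0,1,1,1)$, so conjugation only rescales the non-unital column, $\mathbf r\mapsto e^{-\theta}\mathbf r$, and leaves $\mathbf M$ unchanged.

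From there the paper moves to the semigroup. It notes that $\mathcal R(e^{t\mathcal L'})$ has the same linear block $e^{t\mathbf M}$ and translation $e^{-\theta}\mathbf d(t)$. It then argues, via the King--Ruskai ellipsoid picture, that moving the center of the image ellipsoid toward the origin keeps it inside the Bloch ball. That argument is short and visual, but as written it only certifies positivity. Containment of the image ellipsoid in the Bloch ball is necessary but not sufficient for complete positivity of a qubit map; the transpose is the standard counterexample. The paper also does not spell out why the shifted ellipsoid stays inside.

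Your decomposition $\mathcal L'=\frac{1+\lambda}{2}\mathcal L+\frac{1-\lambda}{2}\,\Theta\circ\mathcal L\circ\Theta$ works at the generator level and gives an explicit GKSL form. The Hamiltonian is $\frac{1+\lambda}{2}H-\frac{1-\lambda}{2}Y\bar HY$, and the jump operators are $\sqrt{(1+\lambda)/2}\,L_j$ and $\sqrt{(1-\lambda)/2}\,Y\bar L_jY$. So it establishes complete positivity outright.

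Your argument is in effect the rigorous version of the paper's convexity intuition. The ``unital version'' the paper appeals to is exactly $\frac12\bigl(e^{t\mathcal L}+\Theta\circ e^{t\mathcal L}\circ\Theta\bigr)$. Because $\mathcal L$ and $\Theta\circ\mathcal L\circ\Theta$ share the block $\mathbf M$ and $\mathbf d(t)$ is linear in $\mathbf r$, one even has $e^{t\mathcal L'}=\frac{1+\lambda}{2}e^{t\mathcal L}+\frac{1-\lambda}{2}\Theta\circ e^{t\mathcal L}\circ\Theta$. Your route also explains cleanly why the result is single-qubit, since no antiunitary realizes $P\mapsto -P$ on all non-identity Paulis for $n\ge 2$. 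This is consistent with the paper's two-qubit counterexample.

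When you write it up, state explicitly that $\mathcal L$ is Hermiticity-preserving, so the column and the block $G$ are real. That is what lets the antilinear $\Theta$ act as $\mathrm{diag}(1,-1,-1,-1)$ on each $\mathcal L(P)$.
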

This lemma follows intuitively from the geometric description of single-qubit CPTP maps (see Figure \ref{fig:lindblad_transform}). Specifically, every single-qubit map can be described as a mapping of the Bloch sphere to an ellipsoid contained entirely within \cite{king_minimal_2001}. The parameterization of this ellipsoid is intricately linked to the Pauli transfer matrix of the CPTP map. Most importantly for our purposes, the leftmost column of the Pauli transfer matrix encodes the translation of the center of the ellipsoid relative to the center of the Bloch sphere. Since the gauge transformation simply suppresses this column, we have that our gauge transformation effectively makes the Lindblad evolution more unital. Since the unital version of the Lindblad evolution operator is trivially contained inside the Bloch sphere, the convexity of the Bloch sphere dictates that shrinking the leftmost column by any factor $e^{-\theta}$ for $\theta \ge 0$ will still produce a valid Lindbladian.

\begin{figure}[tbp!]
    \centering
    \includegraphics[width=0.8\linewidth]{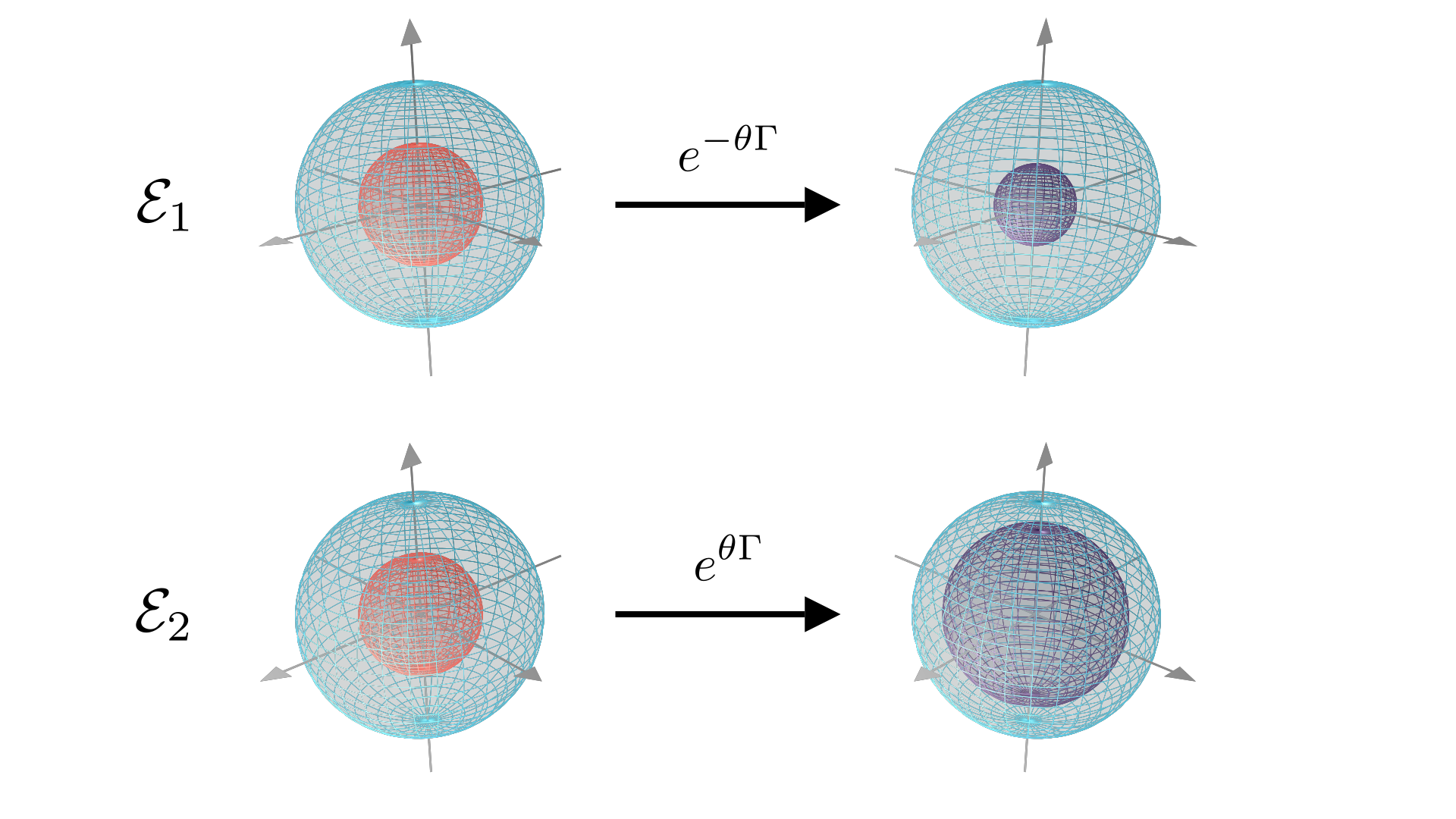}
    \caption{\textbf{Error channels after gauge transformation depicted for a single qubit}. The transformation $e^{-\theta \Gamma}$ acts as a depolarizing channel, compressing the image of $\mathcal E_1$, while $e^{\theta \Gamma}$ expands the image of $\mathcal E_2$. We can only choose $\theta$ such that $\mathcal E_2$ remains a quantum channel (i.e. the map does not expand outside the Bloch sphere) and $\mathcal E_1$ does not move too far from the identity channel $\mathcal I$.}
    \label{fig:error_transform}
\end{figure}

\begin{figure}[tbp]
    \centering
    \includegraphics[width=0.8\linewidth]{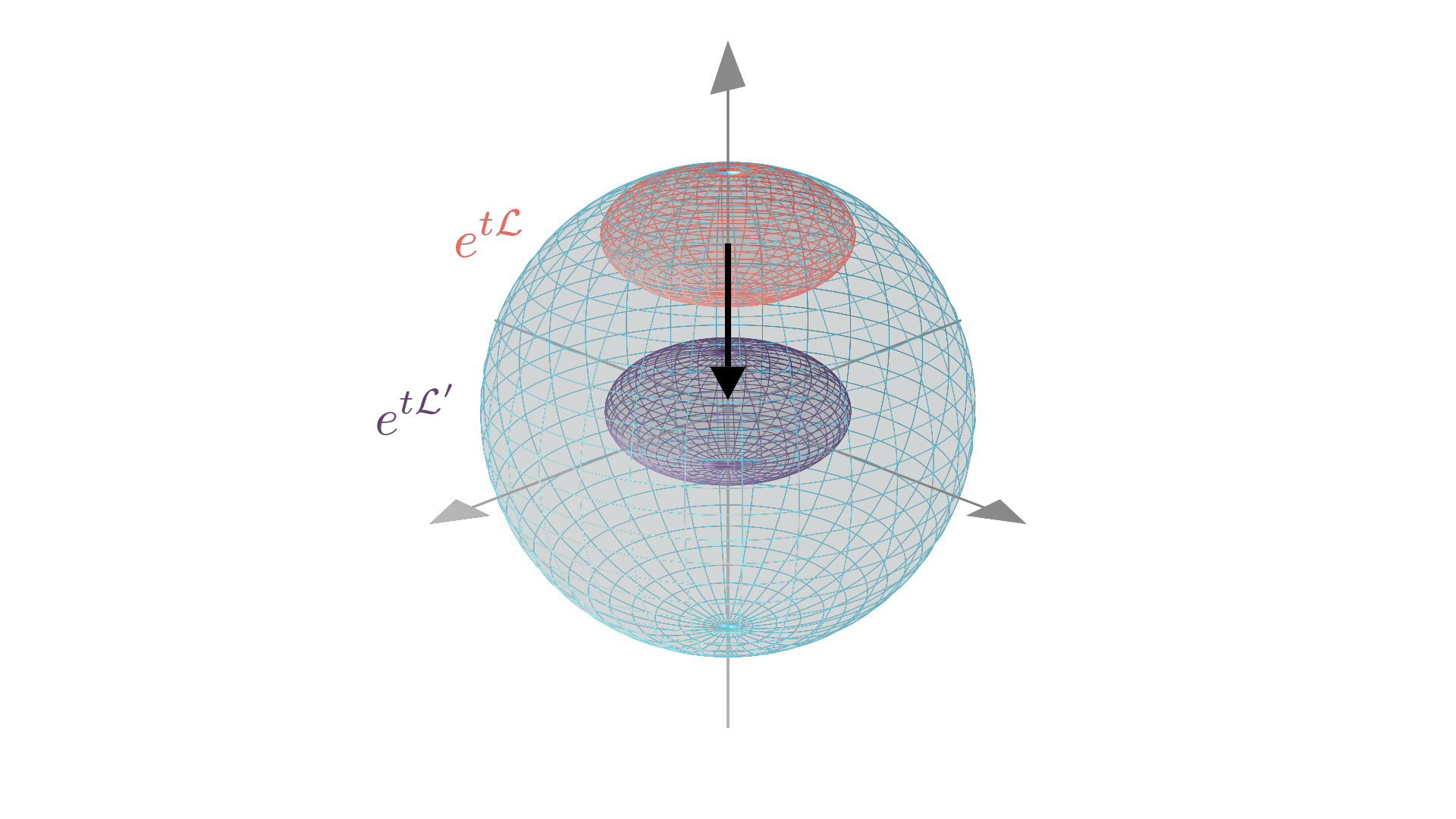}
    \caption{\textbf{Lindblad evolution after gauge transformation depicted for a single qubit}. The transformation $e^{-\theta \Gamma} \cdot e^{\theta \Gamma}$ shrinks the deviation of the center of the ellipsoid, shifting it towards the origin, effectively making the time evolution operator more unital. Since this ellipsoid always remains within the Bloch sphere for $\theta \ge 0$, we have that $\mathcal L'$ is always a valid Lindbladian in the single-qubit case.}
    \label{fig:lindblad_transform}
\end{figure}

To conclude the proof for Proposition \ref{prop:gauge_single}, we now analyze the largest possible shift that can occur in the coefficient values. From Lemma \ref{lem:gauge_error_channel}, we had that $0 \le \theta \le -\ln(1 - \varepsilon_\text{SPAM}/4),$
assuming the prescribed form of the noise. Consequently, we have that the multiplicative shift is given as
$$1 - e^{-\theta} = \varepsilon_\text{SPAM}/4$$
As mentioned previously, the only affected component of the Lindbladian was the leftmost column of the PTM. The elements in this column correspond directly to linear combinations of $\text{Im}[\chi_{ij}]$ and $\text{Re}[\chi_{0k}]$ for $P_i, P_j, P_k \in \{X, Y, Z\}$, with $i \ne j$. Since these components $\chi_{ab}$ can be changed by a multiplicative factor of $\mathcal O(\varepsilon_\text{SPAM})$, we have that if $\varepsilon_\text{SPAM} \sim O(1)$ as is typical, then we cannot learn these components to $\epsilon$-accuracy, thereby demonstrating Proposition \ref{prop:gauge_single}.

\subsection{Multi-Qubit Lindbladians}

Having established this degree of freedom in the single-qubit case, we are now ready to extend it to the $n$-qubit regime. As we increase the dimensionality of the system, we observe that every component of the leftmost column of the Pauli transfer matrix is composed of $4^n$ coefficients of the Chi matrix. Consequently, changing any of these components requires updating $4^n$ components. Naturally, this means that unlike the single-qubit case, the global depolarizing channel approach will not necessarily yield a valid Lindbladian when applied to every $n$-qubit Lindbladian. As a simple counterexample, consider the Lindbladian
$$\mathcal L[\rho] = \frac{1}{4}(XI + iYI)\rho(XI - iYI) + \frac{1}{4}\{II - ZI, \rho\}$$
This is a single-qubit Lindbladian extended to act trivially on another qubit. The only nontrivial element in the PTM here corresponds to $|ZI\rangle\rangle\langle\langle II|$ (here we denote $|P\rangle\rangle$ as the vector corresponding to Pauli $P$). Expanding this out in terms of an operator expansion, we have that
$$|ZI\rangle\rangle\langle\langle II| = ZI\sum_{i}P_i \rho P_i$$
Hence, we must add 16 Pauli terms to the Lindbladian to alter this value. Since the only Paulis in the original Lindbladian are $II, XI, YI, ZI$, which are closed under multiplication (up to a phase), this means that we would need to add Paulis that cannot be formed from products of Paulis in this set. Hence, the resulting superoperator cannot be a Lindbladian as the Chi matrix would no longer be PSD.

Therefore, rather than showing that there exists a nontrivial gauge transformation under which all Lindbladians transform into Lindbladians, we instead shift our focus to constructing families of Lindbladians that cannot be distinguished in the presence of SPAM error. In Appendix \ref{chap:gauge}, we demonstrate that by constructing Lindbladians consisting of a global depolarizing generator and depolarizing channels that have been left multiplied by specific Paulis, we can construct Lindbladians that are gauge-dependent, satisfying the following lemma:
\begin{restatable}{lemma}{lemgaugemultilindbladian}\label{lem:gauge_multi_lindbladian}
    There exists a Lindbladian $\mathcal L[\rho] = \sum_{a, b}\chi_{ab}P_a \rho P_b$ such that under the depolarizing transformation
    $$\mathcal L'(\theta) = e^{-\theta \Gamma} \circ \mathcal L \circ e^{\theta \Gamma},$$
    the resulting superoperator $\mathcal L'  = \sum_{a, b}\chi_{ab}'P_a \rho P_b$ is a Lindbladian for $\theta \ge 0$, and $\text{Im}[\chi_{ab}']$ differs from $\text{Im}[\chi_{ab}]$ by a multiplicative factor of $e^{-\theta}$ for $\{P_a, P_b\} = 0$ and $\text{Re}[\chi_{ab}']$ differs from $\text{Re}[\chi_{ab}]$ by a multiplicative factor of $e^{-\theta}$ for $[P_a, P_b] = 0$.
\end{restatable}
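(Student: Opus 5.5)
The idea is to pick a Lindbladian whose non-unital part lives entirely in the leftmost column of its Pauli transfer matrix. We then check that the depolarizing conjugation rescales exactly that column, and finally read off which Chi-matrix entries encode it.

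\textbf{Action of the gauge in the PTM.} First I will compute the action of $\Gamma$ in the Pauli transfer matrix picture (Definition \ref{def:ptm}). For any Pauli $P_j$ we have $\sum_{i}P_iP_jP_i = 4^n\delta_{j0}I$. Hence $\Gamma[\rho] = \rho - \text{Tr}(\rho)I/2^n$, so $\mathcal R(\Gamma) = \text{diag}(0,1,\dots,1)$ and $\mathcal R(e^{\pm\theta\Gamma}) = \text{diag}(1,e^{\pm\theta},\dots,e^{\pm\theta})$. Using multiplicativity of the PTM under composition, $\mathcal R(\mathcal L') = \mathcal R(e^{-\theta\Gamma})\mathcal R(\mathcal L)\mathcal R(e^{\theta\Gamma})$, and this has the following effect on the blocks:
\begin{itemize}
\item the entry $T_{00}$ is unchanged, and it vanishes anyway by trace preservation;
\item the top row $T_{0j}$ is zero by trace preservation;
\item the block $T_{ij}$ with $i,j\neq 0$ is unchanged;
\item the first column $T_{i0}$ with $i\neq 0$ is multiplied by $e^{-\theta}$.
\end{itemize}
Equivalently, write $\mathcal L = \mathcal L_u + \mathcal L_{nu}$ with $\mathcal L_{nu}[\rho] = \text{Tr}(\rho)A$, where $A = \mathcal L[I]/2^n$ is traceless and Hermitian. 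Then $\mathcal L' = \mathcal L_u + e^{-\theta}\mathcal L_{nu}$.

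\textbf{The witness Lindbladian.} I will take the replacement generator
$$\mathcal L[\rho] = \text{Tr}(\rho)\sigma - \rho, \qquad \sigma = \frac{I}{2^n} + A,$$
where $A = \sum_{k\neq 0} a_k P_k$ is chosen with some $a_k \neq 0$ so that $\sigma$ is a density matrix. For instance, take $\sigma = |0\rangle\langle 0|^{\otimes n}$, or a $\sigma$ whose support contains Paulis of both commutation types relative to the $P_i$ below. This is a valid Lindbladian because
$$e^{t\mathcal L}[\rho] = e^{-t}\rho + (1-e^{-t})\text{Tr}(\rho)\sigma$$
is CPTP for all $t\ge 0$. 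By the computation above,
$$\mathcal L'[\rho] = \text{Tr}(\rho)\sigma_\theta - \rho, \qquad \sigma_\theta = e^{-\theta}\sigma + (1-e^{-\theta})\frac{I}{2^n}.$$
Here $\sigma_\theta$ is a convex combination of density matrices for $\theta\ge 0$, so $\mathcal L'$ is again a replacement generator and hence a Lindbladian. The same choice can be phrased as a global depolarizing generator plus depolarizing channels left-multiplied by the Paulis $P_k$, matching the description in the text.

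\textbf{Chi coefficients.} Next I will expand $\mathcal L_{nu}$ in the Pauli Chi basis. Using $\text{Tr}(\rho)I = 2^{-n}\sum_i P_i\rho P_i$,
$$\mathcal L_{nu}[\rho] = 2^{-n}\sum_{k\neq 0}\sum_i a_k (P_kP_i)\rho P_i = 2^{-n}\sum_{k\ne 0}\sum_i a_k\,\omega_{k,i} P_{k\oplus i}\rho P_i,$$
with phases $\omega_{k,i}\in\{\pm1,\pm i\}$. Since $\{P_a\otimes P_b\}$ is a basis of superoperators, the Chi matrix is unique. Its entries therefore split into two parts: the $\theta$-independent contribution of $\mathcal L_u$, which is supported on the diagonal and on $\chi_{00}$, and $e^{-\theta}$ times the contribution of $\mathcal L_{nu}$.

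It remains to track the phase. Set $a = k\oplus i$ and $b = i$, so that $P_aP_b\propto P_k$. Because $a_k$ is real, $\omega_{k,i}$ is real when $P_k$ and $P_i$ commute and imaginary when they anticommute. Pairing $(a,b)$ with $(b,a)$ via Hermiticity preservation then shows the following:
\begin{itemize}
\item the off-diagonal entries touched by $\mathcal L_{nu}$ contribute only $\text{Re}[\chi_{ab}]$ when $[P_a,P_b]=0$;
\item they contribute only $\text{Im}[\chi_{ab}]$ when $\{P_a,P_b\}=0$.
\end{itemize}
In both cases these components scale by exactly $e^{-\theta}$, because $\mathcal L_u$ contributes nothing to them. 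Choosing $a_k\neq 0$ makes these components nonzero, so the scaling is visible.

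\textbf{Main obstacle.} I expect the main difficulty to be this phase bookkeeping. Specifically, one must verify that the commutation relation of $(P_a,P_b)$ coincides with that of $(P_k,P_b)$, which holds since $P_a \propto P_kP_b$. One must also check the sign conventions of $P_b = i^{b^x\cdot b^z}X^{b^x}Z^{b^z}$, so that the real and imaginary parts line up exactly as stated and no diagonal or $\chi_{00}$ term is inadvertently rescaled.
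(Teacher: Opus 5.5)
Your proposal is correct and takes essentially the paper's approach. The paper's witness $\frac{1}{4^n}\sum_j[P_j\rho P_j + RP_j\rho P_j] - \rho$ is exactly your replacement generator $\text{Tr}(\rho)\sigma - \rho$ with $\sigma = (I+R)/2^n$, and, like you, the paper proves $\mathcal L'(\theta)$ is a Lindbladian by convexity, tracks real versus imaginary parts through whether $R$ commutes with $P_j$, and varies $R$ to cover every Pauli pair; your PTM argument that the conjugation rescales only the non-unital part $\text{Tr}(\rho)A$, and your Hermiticity argument for the phases $\omega_{k,i}$, are cleaner justifications of steps the paper states informally (and, as you implicitly do, the $e^{-\theta}$ rescaling should be read only for off-diagonal $a \neq b$, since the diagonal Chi entries are invariant).
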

Taken together, Lemmas \ref{lem:gauge_error_channel} and \ref{lem:gauge_multi_lindbladian} give us the following theorem on unlearnability.
\begin{theorem}[Unlearnability]\label{thm:unlearnability}
    There exist SPAM error channels $\mathcal E_1, \mathcal E_2$ satisfying the threshold $\varepsilon_\text{SPAM}$ such that the following holds true:

    For every Pauli pair $P_i, P_j$, there exists a Lindbladian $\mathcal L = \sum_{i, j}\chi_{ij}P_i\rho P_j$ for which components $\text{Im}[\chi_{ij}]$ or $\text{Re}[\chi_{ij}]$ cannot be learned to arbitrary accuracy, depending on whether $P_i, P_j$ anticommute or commute, respectively.
\end{theorem}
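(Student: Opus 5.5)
The proof combines Lemma \ref{lem:gauge_error_channel} and Lemma \ref{lem:gauge_multi_lindbladian} to exhibit two distinct experimental configurations that produce identical statistics for every experiment of the form (\ref{eq:experiment}). First, I fix the SPAM channels $\mathcal E_1 = \mathcal E_\text{prep}$ and $\mathcal E_2 = \mathcal E_\text{meas}$ to be the $n$-qubit global depolarizing channels supplied by Lemma \ref{lem:gauge_error_channel}. This guarantees that for every $\theta \in [0, \theta_\text{max}]$, with $\theta_\text{max} = -\ln(1 - \varepsilon_\text{SPAM}/4)$, the transformed channels $e^{-\theta\Gamma}\circ\mathcal E_1$ and $\mathcal E_2 \circ e^{\theta\Gamma}$ are valid channels that still meet the SPAM budget.

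Next, for a given Pauli pair $(P_i, P_j)$ I take the Lindbladian $\mathcal L$ from Lemma \ref{lem:gauge_multi_lindbladian}. I will check that its construction can be chosen so that the relevant component is nonzero: $\text{Im}[\chi_{ij}] \neq 0$ if $\{P_i,P_j\}=0$, and $\text{Re}[\chi_{ij}]\ne 0$ if $[P_i,P_j]=0$. Concretely, I would use a global depolarizing generator plus a depolarizing-type term left-multiplied by a Pauli, arranged so that the pair $(i,j)$ appears in the support. If necessary, I would conjugate by a Clifford mapping a fixed reference pair to $(P_i,P_j)$, which preserves commutation type, as in Lemma \ref{lem:cliff}. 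The lemma then guarantees that $\mathcal L'(\theta) = e^{-\theta\Gamma}\circ\mathcal L\circ e^{\theta\Gamma}$ is a Lindbladian for all $\theta\ge 0$, and that the chosen component is scaled by $e^{-\theta}$.

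The indistinguishability step works as follows. Since $e^{\theta\Gamma}$ is a function of a depolarizing generator, it commutes with every unitary channel $\mathcal U_k$. We also have $e^{t\mathcal L'} = e^{-\theta\Gamma}e^{t\mathcal L}e^{\theta\Gamma}$. Substituting into (\ref{eq:gauge_experiment}), the inner factors $e^{\theta\Gamma}e^{-\theta\Gamma}$ telescope through each $\mathcal U_k$. The outer factors cancel against those absorbed into $\mathcal E'_\text{prep}$ and $\mathcal E'_\text{meas}$. Hence every experiment with $(\mathcal L', \mathcal E'_\text{prep}, \mathcal E'_\text{meas})$ reproduces the outcome distribution of $(\mathcal L, \mathcal E_\text{prep}, \mathcal E_\text{meas})$ exactly, regardless of the number of samples or the evolution times.

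Finally, I conclude by a two-point argument. Taking $\theta=0$ and $\theta=\theta_\text{max}$, the two admissible hypotheses have target components differing by $|\chi|(1-e^{-\theta_\text{max}}) = |\chi|\varepsilon_\text{SPAM}/4$, where $\chi$ denotes the relevant component. No estimator can be $\epsilon$-accurate on both hypotheses once $\epsilon < |\chi|\varepsilon_\text{SPAM}/8$. For constant $\varepsilon_\text{SPAM}$ and constant $|\chi|$, the component therefore cannot be learned to arbitrary accuracy. I expect the main obstacle to be the second step: ensuring that the family from Lemma \ref{lem:gauge_multi_lindbladian} covers an arbitrary prescribed pair $(P_i,P_j)$ with a nonvanishing component while remaining a valid Lindbladian. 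I would handle this through the Clifford-covariance reduction above, using the fact that Clifford conjugation commutes with $\Gamma$ and preserves the Lindblad form.
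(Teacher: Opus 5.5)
Your proposal is correct and is essentially the paper's proof: fix the depolarizing SPAM channels of Lemma~\ref{lem:gauge_error_channel}, take the Lindbladian of Lemma~\ref{lem:gauge_multi_lindbladian}, cancel the gauge factors through each $\mathcal U_k$ (valid because $e^{\theta\Gamma}$ commutes with unitary channels), and use the shift $1-e^{-\theta_{\max}}=\varepsilon_\text{SPAM}/4$. The Clifford-covariance reduction you flag as the main obstacle is not needed, and a single reference pair would not cover pairs containing $I$ since Cliffords fix $I$: the paper reaches any pair $P_i\neq P_j$ directly by choosing $R\propto P_iP_j$ in $\mathcal L[\rho]=4^{-n}\sum_k[P_k\rho P_k+RP_k\rho P_k]-\rho$, which gives $|\chi_{ij}|=4^{-n}\neq 0$ and hence a gap of $\varepsilon_\text{SPAM}/4^{n+1}$, nonzero but dependent on $n$.
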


\section{Learnable Coefficients}

Having established the gauge-dependent components of the Lindbladian, we now demonstrate how to learn the remaining components. The protocol is described in full in Figure \ref{fig:lindblad_learn_spam}.

\begin{figure}[tbp]
    \centering
    \includegraphics[width=0.8\linewidth]{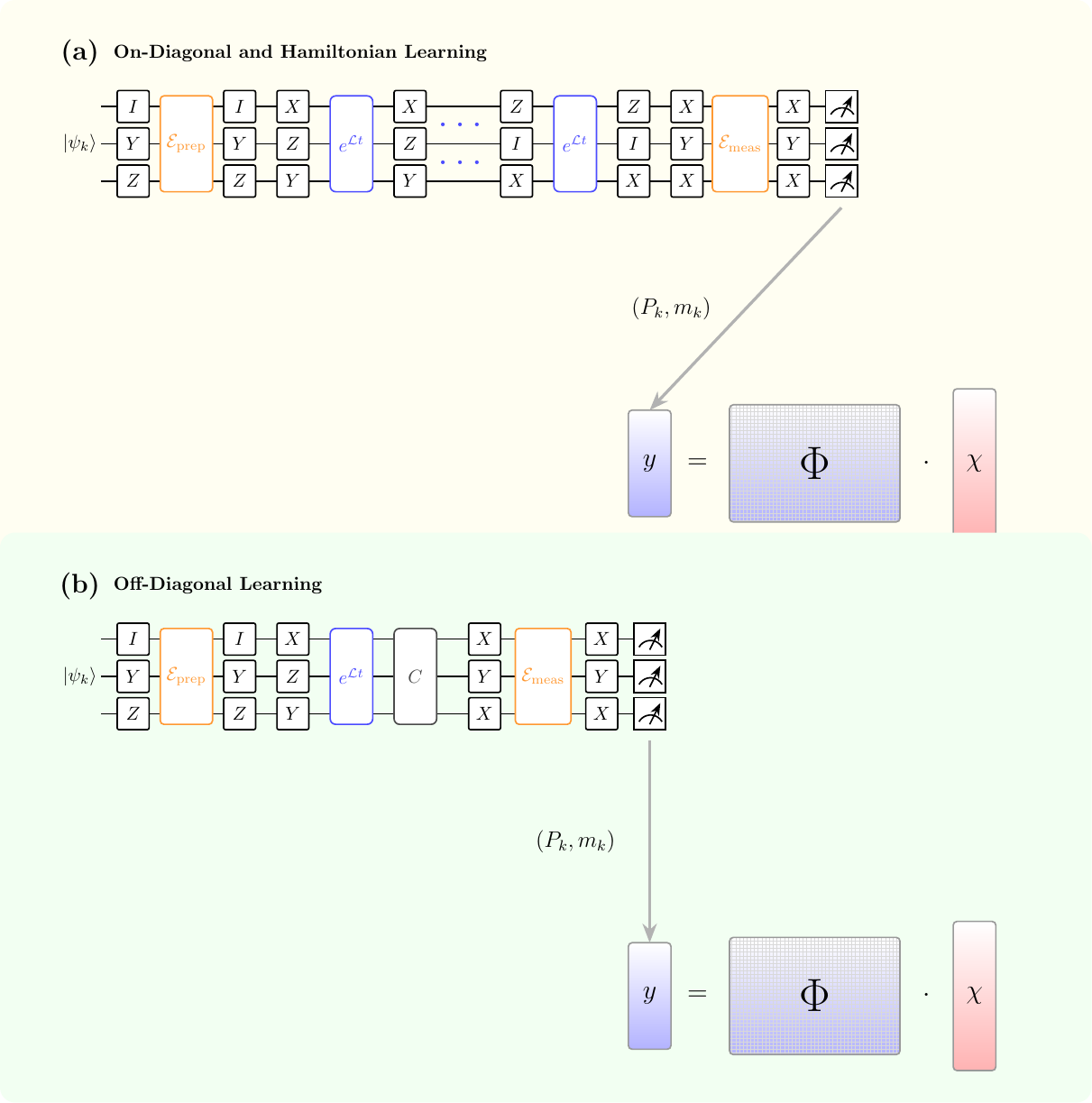}
    \caption{\textbf{SPAM-Robust Lindbladian Learning.} (a) Protocol for learning diagonal and Hamiltonian components of Lindbladian $\mathcal L = \sum_{i, j}\chi_{ij}P_i \rho P_j$ using eigenvalue $m_k$ from Pauli twirled evolution on a $P_k$ $+1$-eigenstate and compressed sensing. (b) Modified protocol for estimating off-diagonals by inserting a Clifford $C$ after Lindblad evolution.}
    \label{fig:lindblad_learn_spam}
\end{figure}

\subsection{Diagonals}

Just as before, we begin with the on-diagonal components. Without Bell sampling, we no longer possess the ability to test the support of the Lindbladian. Consequently, we must fall back on estimation procedures that utilize linear systems to recover the Lindbladian coefficients from strategic measurements. To see how this manifests for the diagonal components, we consider the simplified case where all the nonzero coefficients lie on the diagonal, meaning we have a Pauli Lindbladian: 
\begin{lemma}[SPAM-Robust Learning of Pauli Lindbladians]
    \label{lem:spam_pauli_channel}
    Given a Pauli Lindbladian $\mathcal L(\rho) = \sum_{P} \gamma_P[P \rho P - \rho]$, we can learn the coefficients $\gamma_P$ in SPAM-robust fashion.
\end{lemma}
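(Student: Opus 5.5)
The plan is to reduce the problem to SPAM-robust estimation of the eigenvalues of the Pauli-diagonal evolution, and then invert a Walsh--Hadamard-type linear map. A Pauli Lindbladian $\mathcal L(\rho) = \sum_P \gamma_P[P\rho P - \rho]$ is diagonal in the Pauli transfer matrix representation. For each Pauli $P_a$ we have
$$\mathcal L(P_a) = \sum_P \gamma_P\big((-1)^{\langle P, a\rangle} - 1\big)P_a = -2\lambda_a P_a, \qquad \lambda_a = \sum_{P : \langle P, a\rangle = 1}\gamma_P .$$
It follows that $e^{t\mathcal L}(P_a) = e^{-2\lambda_a t}P_a$, so the Pauli eigenvalues are $f_a(t) = e^{-2\lambda_a t}$.

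The first step is to estimate each $f_a(t)$ SPAM-robustly, as in randomized benchmarking or Pauli-channel cycle benchmarking. We prepare a $+1$ eigenstate of $P_a$ (a product stabilizer state rotated by a Clifford), evolve for times $t_1$ and $t_2$, and measure $P_a$. Pauli twirling the noisy preparation and measurement channels makes them Pauli channels, since the twirl commutes with the already Pauli-diagonal $e^{t\mathcal L}$. The measured expectation then factorizes as $A_a\, e^{-2\lambda_a t}$, where $A_a$ collects the (unknown) SPAM eigenvalues for $P_a$. Definition~\ref{def:spam} with $\varepsilon_\text{SPAM}<1$ bounds $A_a$ away from zero, roughly $A_a \ge 1-\varepsilon_\text{SPAM}$. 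Taking the ratio of the two expectations cancels $A_a$:
$$\lambda_a = \frac{1}{2(t_2 - t_1)}\ln\frac{\langle P_a\rangle_{t_1}}{\langle P_a\rangle_{t_2}} .$$
Choosing $t_2 - t_1 = \Theta(1/\lambda_{\max})$ keeps both signals of constant size. A standard Hoeffding argument, with error propagation through the logarithm, then gives each $\lambda_a$ to accuracy $\eta$ from $\mathcal O(\log(1/\delta)/\eta^2)$ shots, with no dependence on the SPAM strength beyond a constant factor.

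The second step is to invert the map $\lambda = \tfrac12(J - W)\gamma$, where $W_{a,P} = (-1)^{\langle P,a\rangle}$ is the symplectic Walsh--Hadamard matrix. Orthogonality of symplectic characters yields the inversion $\gamma_P = -\frac{2}{4^n}\sum_a (-1)^{\langle P,a\rangle}\lambda_a$ for $P\ne I$. This is exactly the standard Pauli-channel eigenvalue inversion, and $\gamma_I$ is irrelevant because it cancels in the Lindbladian. The lemma as stated asks only that the coefficients be learnable SPAM-robustly, so this exact inversion suffices in principle.

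The main obstacle is efficiency: the inversion uses all $4^n$ eigenvalues, and the errors accumulate. For the lemma itself I would note that estimation error in each $\lambda_a$ propagates to $\gamma_P$ with at most an $\mathcal O(1)$ amplification in $\ell_\infty$, because the inverse has entries of size $2/4^n$ over $4^n$ terms. This gives correctness. For the efficient $M$-sparse version I would instead sample $\mathrm{poly}(M,n)$ random $a$ and recover the sparse $\gamma$ by compressed sensing. This follows the approach of Pauli-channel learning: the symplectic Fourier matrix satisfies RIP with $\widetilde{\mathcal O}(Mn)$ random rows, consistent with the factor $n$ in the stated runtimes.
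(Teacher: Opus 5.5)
Your proposal is correct and follows essentially the same route as the paper: diagonalize $\mathcal L$ in the Pauli basis (eigenvalues $e^{-2\lambda_a t}$), Pauli-twirl the SPAM channels so the signal factorizes as $A_a e^{-2\lambda_a t}$ with $A_a \ge 1-\varepsilon_\text{SPAM}$, cancel $A_a$ with a ratio at two evolution times, and invert the commutation-pattern linear system, which you make explicit through the symplectic Walsh--Hadamard inverse (the paper's $\mathbf 1_n - 2\mathbf A_n = \mathbf H_{2n}$ relation and its compressed-sensing appendix). One small quantitative slip, which does not affect the qualitative lemma: with $t_2 - t_1 = \Theta(1/\lambda_{\max})$, propagating the error through the logarithm multiplies the shot count by $\lambda_{\max}^2$, so you need $\mathcal O(\lambda_{\max}^2\log(1/\delta)/\eta^2)$ shots (the paper's $\widetilde{\mathcal O}(M^2/\epsilon^2)$) rather than $\mathcal O(\log(1/\delta)/\eta^2)$.
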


\begin{proof}
    Since all the jump operators are Paulis, we can express any Pauli Lindbladian in the following form:
    $$\mathcal L[\rho] = \sum_P \gamma_P[P \rho P - \rho]$$
    In the Heisenberg picture, we therefore have that
    $$\mathcal L^*[Q] = \sum_P \gamma_P[P Q P - Q] = \sum_P \gamma_P[((-1)^{\langle P, Q\rangle} - 1)Q] = -2\sum_P \gamma_P\langle P, Q\rangle Q,$$
    where
    $$\langle P, Q\rangle  = \begin{cases}
        1 & \{P, Q\} = 0 \\
        0 & [P, Q] = 0
    \end{cases}$$
    Hence, we have that any $Q \in \mathcal P_n$ is an eigenoperator of $\mathcal L^\dag$. As such, we have that
    $$e^{t\mathcal L^*}[Q] = e^{-2t\sum_P \gamma_P\langle P, Q\rangle}Q$$
    Consequently, by choosing a $Q$ eigenstate with eigenvalue $+1$, we can compute the rates $\gamma_P$ by solving the resulting system. Note that the contributions above are from Paulis $P$ which anticommute with $Q$.

    We now account for the effects of SPAM error on the system. Suppose we have state preparation and measurement error channels $\mathcal E_1, \mathcal E_2$. If we now apply Pauli twirling to each of these channels, we get effective Pauli channels $\mathcal E_1', \mathcal E_2'$. Let us denote the eigenvalues corresponding to each Pauli $Q$ as $\lambda_{1, Q}$, $\lambda_{2, Q}$. If we now apply these noisy channels in tandem with the Lindbladian evolution, we have that
    $$(\mathcal E_2' \circ e^{t\mathcal L^*} \circ \mathcal E_1')[Q] = \lambda_{1, Q}\lambda_{2, Q}e^{-2t\sum_P \gamma_P\langle P, Q\rangle}Q$$
    We can now evolve for two different times $t_1, t_2$ to learn the eigenvalues corresponding to $Q$ and then divide them to remove the effect of the SPAM error. With enough choices of Paulis $Q$, we can then estimate all the coefficients since each linear combination of $\gamma_P$ coefficients is linearly independent.
\end{proof}
As we saw, by estimating ratios of Pauli eigenvalues to a certain accuracy threshold, we can compute the coefficients $\gamma_P$ by solving the resulting linear system. We can transform any Lindbladian into a Pauli Lindbladian through the application of Pauli twirling. We define the twirled Lindbladian as follows:
\begin{equation}
    \mathcal L_\text{tw} = \mathbb E_{P \sim \mathcal P_n}[P\mathcal L[P\rho P]P]
\end{equation}
In order to approximate evolution under $\mathcal L_\text{tw}$, we can evolve under $\mathcal L$ for short time and then twirl, repeating this procedure until we evolve for the full duration $t$. To bound the error incurred in this procedure, we utilize the following lemma:
\begin{restatable}[Lindbladian Pauli Twirling]{lemma}{lemlindbladpaulitwirl}\label{lem:lindblad_pauli_twirl}
    Given an $M$-sparse Lindbladian $\mathcal L$, let $\mathcal E_t[\rho] = e^{t\mathcal L_\text{tw}}$ be the evolution operator under the twirled Lindbladian. Define $\mathcal E'_{t, r}[\rho] = (\mathbb E_{P \sim \mathcal P_n}[Pe^{t\mathcal L/r}[P\rho P]P])^r$. We can then bound the diamond distance between these channels as follows:
    $$\|\mathcal E'_{t, r} - \mathcal E_{t}\|_\diamond \le \mathcal O\left(\frac{M^3t^2}{r}\right),$$
    for $t \le \mathcal O(r/M^{2})$.
\end{restatable}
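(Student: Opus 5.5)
We compare a single twirled short step with the ideal short step and then telescope over the $r$ steps. Write $\tau = t/r$ and let $\mathcal T[\cdot] = \mathbb E_{P \sim \mathcal P_n}[P\,\cdot\,P]$ denote the Pauli twirl superoperator, acting as $(\mathcal T \circ \Phi \circ \mathcal T)$ on a superoperator $\Phi$ in the sense of the definition of $\mathcal L_\text{tw}$. Define the twirled short-time channel $\mathcal G_\tau = \mathbb E_P[P e^{\tau\mathcal L}[P\rho P]P]$, so that $\mathcal E'_{t,r} = \mathcal G_\tau^r$ and $\mathcal E_t = (e^{\tau\mathcal L_\text{tw}})^r$. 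Both $\mathcal G_\tau$ and $e^{\tau \mathcal L_\text{tw}}$ are CPTP, since a mixture of conjugations of a CPTP map is CPTP and $\mathcal L_\text{tw}$ is a valid Lindbladian as a convex combination of conjugated Lindbladians. The diamond norm is submultiplicative and equals $1$ on channels, so the standard hybrid argument gives
$$\|\mathcal G_\tau^r - e^{r\tau\mathcal L_\text{tw}}\|_\diamond \le r\,\|\mathcal G_\tau - e^{\tau\mathcal L_\text{tw}}\|_\diamond .$$
The whole problem therefore reduces to a one-step bound of order $M^3\tau^2$, which yields $r\cdot M^3 t^2/r^2 = M^3t^2/r$.

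For the one-step bound we Taylor expand both maps. Since twirling is linear and commutes with taking the identity term, we have $\mathcal G_\tau = \mathcal I + \tau\mathcal L_\text{tw} + \mathbb E_P[P(e^{\tau\mathcal L} - \mathcal I - \tau\mathcal L)[P\cdot P]P]$. The twirl is a contraction in diamond norm, because it is an average of unitary conjugations on both sides. Hence the remainder is bounded by $\|e^{\tau\mathcal L}-\mathcal I-\tau\mathcal L\|_\diamond \le \sum_{k\ge2}\tau^k\|\mathcal L\|_\diamond^k/k!$. Similarly, $\|e^{\tau\mathcal L_\text{tw}}-\mathcal I-\tau\mathcal L_\text{tw}\|_\diamond$ is bounded by the same type of series in $\|\mathcal L_\text{tw}\|_\diamond \le \|\mathcal L\|_\diamond$. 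Both second-order remainders are $\mathcal O(\tau^2\|\mathcal L\|_\diamond^2)$ as long as $\tau\|\mathcal L\|_\diamond = \mathcal O(1)$. This is exactly where the hypothesis $t \le \mathcal O(r/M^2)$ enters, once combined with Lemma~\ref{lem:lindblad_diamond}.

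The main obstacle is the power of $M$. Lemma~\ref{lem:lindblad_diamond} only gives $\|\mathcal L\|_\diamond \le M^2$, and the naive remainder bound above would then produce $M^4\tau^2$ rather than $M^3\tau^2$. To reach $M^3$, I would bound $\|\mathcal L\|_\diamond$ more carefully using the structure established for the Lindbladian Chi matrix. Split $\mathcal L = \mathcal L_H + \mathcal L_D$. The Hamiltonian part has diamond norm at most $2\sum_j|h_j| = \mathcal O(M)$. For the dissipator, use the PSD structure of $\gamma$: each sandwich term $\sum_{a,b}\gamma_{ab}P_a\rho P_b$ is completely positive, and its diamond norm equals $\|\sum_{a,b}\gamma_{ab}P_bP_a\|_\infty \le \sum_{a,b}|\gamma_{ab}|$. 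Applying $|\gamma_{ab}|\le\sqrt{\gamma_{aa}\gamma_{bb}}$ together with Cauchy--Schwarz, this is at most $\sum_a \gamma_{aa}\cdot M$. The anticommutator terms are controlled by the same operator norm. I would then aim to show that one of the two factors in the second-order term $\tau^2\mathcal L^2$ costs only $\mathcal O(M)$ in the relevant norm, giving an $M\cdot M^2 = M^3$ product.

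If that refinement fails, my fallback is to cancel the $\mathcal I$ and first-order terms exactly and to estimate $\|\mathcal L^2\|_\diamond \le \|\mathcal L\|_\diamond\cdot\|\mathcal L\|_{\diamond}$, using the $\mathcal O(M)$ Hamiltonian bound and a $\sum_a\gamma_{aa}\le M$ trace bound on one factor and $M^2$ on the other. The validity condition $\tau \le \mathcal O(1/M^2)$ then keeps the higher-order tail dominated by the quadratic term, and the telescoping estimate completes the proof.
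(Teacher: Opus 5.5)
Your telescoping reduction and the Taylor bookkeeping are fine, but the proposal never derives the $M^3$. Once you bound the twirled remainder by $\|e^{\tau\mathcal L}-\mathcal I-\tau\mathcal L\|_\diamond$ via the contraction property of the twirl, the one-step error is controlled only through $\|\mathcal L\|_\diamond^2$. You would therefore need $\|\mathcal L\|_\diamond = \mathcal O(M^{3/2})$, and nothing you write gives that. Your completely positive estimate yields $\|\mathcal L_D\|_\diamond \le 2M\sum_a\gamma_{aa}$, and with $\sum_a\gamma_{aa}\le M$ this is $\mathcal O(M^2)$ again, no better than Lemma~\ref{lem:lindblad_diamond}. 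The fallback of using $M$ on one factor of $\|\mathcal L\|_\diamond\cdot\|\mathcal L\|_\diamond$ and $M^2$ on the other is not a valid estimate. Both factors are the same number: either you have an $\mathcal O(M)$ bound on it, and then you get $M^2$, or you do not, and you get $M^4$.

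The missing idea, which the paper uses, is to keep the twirl at second order instead of contracting it away. Since $\mathcal G_\tau = \sum_k \frac{\tau^k}{k!}(\mathcal L^k)_{\text{tw}}$ and $e^{\tau\mathcal L_{\text{tw}}} = \sum_k \frac{\tau^k}{k!}(\mathcal L_{\text{tw}})^k$, the one-step difference is $\frac{\tau^2}{2}[(\mathcal L^2)_{\text{tw}} - (\mathcal L_{\text{tw}})^2] + \mathcal O(\tau^3)$. Each term is bounded with $\|\mathcal B\|_\diamond \le \sum_{ij}|\chi_{ij}(\mathcal B)|$. The twirl projects $\chi(\mathcal L^2)$ onto its diagonal, and the $k$-th diagonal entry collects products $\chi_{ab}\chi_{a'b'}$ with $P_aP_{a'} \propto P_k \propto P_{b'}P_b$. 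Summing over $k$ therefore counts quadruples in $\mathcal S^4$ with $a\oplus a' = b\oplus b'$. There are at most $M^3$ of these, since three indices fix the fourth, whereas the untwirled $\mathcal L^2$ can have $M^4$ terms. Similarly $\|(\mathcal L_{\text{tw}})^2\|_\diamond \le M^2$, and each higher order costs an extra factor of roughly $\tau M^2$, which is where $t \le \mathcal O(r/M^2)$ enters.

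Alternatively, your own route can be repaired without the twirl. Trace preservation gives $\sum_{ij}\chi_{ij}P_jP_i = 0$. Hence the completely positive part $\Phi = \sum_{a,b\neq 0}\chi_{ab}P_a\cdot P_b$ satisfies $\Phi^*(I) = -\chi_{00}I - \sum_{c\neq 0}(\chi_{c0}+\chi_{0c})P_c$, whose operator norm is at most $2M$ under the entrywise normalization and $M$-sparsity. Adding the row/column-zero part gives $\|\mathcal L\|_\diamond \le 4M$, and your argument then yields the stronger bound $\mathcal O(M^2t^2/r)$.
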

We derive this lemma in Appendix \ref{chap:spam_lindblad_proofs} by considering a telescoping sum argument, where we then use the number of terms in the Lindbladian to bound the diamond distance. Using the Hölder Matrix Inequality \cite{baumgartner_inequality_2011}, we have that for any observable $O$,
$$\text{Tr}[O(\mathcal E'_{t, r}[\rho] - \mathcal E_t[\rho])] \le \|O\|_p\|\mathcal E'_{t, r}[\rho] - \mathcal E_t[\rho]\|_q,$$
where $1/p + 1/q = 1$ for $1 \le p, q \le \infty$. Choosing $p = \infty$ and $q = 1$, we have that $\|O\|_\infty = 1$ for any Pauli observable $O$, meaning that
$$\text{Tr}[O(\mathcal E[\rho] - \mathcal E'[\rho])] \le \|\mathcal E[\rho] - \mathcal E'[\rho]\|_1 \le \|\mathcal E - \mathcal E'\|_\diamond$$

Consequently, since we bounded the diamond distance before, we have that choosing sufficiently large $r$ allows us to control the error in the eigenvalue estimation. Nevertheless, we must first figure out the accuracy to which we must learn the eigenvalue in the first place. We can do this with the following lemma:

\begin{restatable}{lemma}{lemeigenuncertainty}\label{lem:eigen_uncertainty}
    Given $f_1 = Ae^{\gamma t}, f_2 = Ae^{2\gamma t}$, we have that to estimate $\gamma$ to accuracy $\mathcal O(\eta)$, we need to estimate $f_1, f_2$ to accuracy $\mathcal O(A\eta t e^{\gamma t}), \mathcal O(A\eta t e^{2\gamma t})$, respectively.
\end{restatable}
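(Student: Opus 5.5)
The plan is to write $\gamma$ explicitly in terms of the two measured quantities and then propagate errors to first order. Since $f_2/f_1 = e^{\gamma t}$, the SPAM prefactor $A$ cancels, and we define the estimator
$$\hat\gamma = \frac{1}{t}\ln\frac{\hat f_2}{\hat f_1} = \frac{1}{t}\left(\ln \hat f_2 - \ln \hat f_1\right).$$
This is exactly how Lemma \ref{lem:spam_pauli_channel} uses the two evolution times: the ratio removes the product $\lambda_{1,Q}\lambda_{2,Q}$, which plays the role of $A$ here. The proof therefore reduces to bounding $|\hat\gamma - \gamma|$ in terms of the additive errors $\delta_1 = \hat f_1 - f_1$ and $\delta_2 = \hat f_2 - f_2$.

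By the triangle inequality,
$$|\hat\gamma - \gamma| \le \frac{1}{t}\left|\ln\left(1 + \frac{\delta_2}{f_2}\right)\right| + \frac{1}{t}\left|\ln\left(1 + \frac{\delta_1}{f_1}\right)\right|.$$
I would then use the elementary bound $|\ln(1+x)| \le 2|x|$ for $|x| \le 1/2$. This gives
$$|\hat\gamma - \gamma| \le \frac{2}{t}\left(\frac{|\delta_1|}{A e^{\gamma t}} + \frac{|\delta_2|}{A e^{2\gamma t}}\right).$$
Suppose we enforce $|\delta_1| \le c\, A\eta t e^{\gamma t}$ and $|\delta_2| \le c\, A\eta t e^{2\gamma t}$ for a small constant $c$. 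Each term on the right is then at most $2c\eta$, so $|\hat\gamma - \gamma| \le 4c\eta = \mathcal O(\eta)$, which is the claim. The same computation read backwards shows these accuracies are also necessary up to constants: $\partial\gamma/\partial f_i = \pm 1/(t f_i)$, so an error of $\delta_i$ in $f_i$ moves $\hat\gamma$ by about $\delta_i/(t f_i)$ to first order.

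The only step needing care is the hypothesis $|x| \le 1/2$, i.e. $|\delta_i/f_i| \le 1/2$. Under the prescribed accuracies this holds whenever $c\,\eta t \le 1/2$, which is the regime of interest: we take $\eta \le \epsilon$ small and evolution times of order $1/\gamma$, so that $e^{\gamma t}$ is of constant order. I would state this condition explicitly, namely $\eta t = \mathcal O(1)$ with a sufficiently small constant. Outside this regime the relative error in $f_i$ is order one and the logarithm is not controlled; this is the main subtlety of the argument, not a computational obstacle.

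Finally, I would note where the lemma is used. For Pauli eigenvalue estimation we have $\gamma = -2\sum_P \gamma_P\langle P,Q\rangle \le 0$, so $e^{2\gamma t}$ is the smaller of the two factors. The required absolute precision $A\eta t e^{2\gamma t}$ on $f_2$ therefore sets the sample cost. This is exactly why it is stated separately from the requirement on $f_1$, and it is what the later complexity analysis will feed into Hoeffding bounds.
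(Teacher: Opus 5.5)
Your proposal is correct and follows the paper's route: the paper also recovers $\gamma = \frac{1}{t}\ln(f_2/f_1)$ and propagates errors to first order, noting that the relative errors $\delta f_1/f_1$ and $\delta f_2/f_2$ add and must each be of order $\eta t$, which gives the stated absolute accuracies $\mathcal O(A\eta t e^{\gamma t})$ and $\mathcal O(A\eta t e^{2\gamma t})$. Your version replaces that first-order heuristic with the explicit bound $|\ln(1+x)|\le 2|x|$ and states the validity regime $\eta t = \mathcal O(1)$ (and uses $A>0$, which holds here since $A=\lambda_{1,Q}\lambda_{2,Q}\ge 1-\varepsilon_\text{SPAM}$), so it is a slightly more careful rendering of the same argument.
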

The uncertainty analysis for this result is presented in Appendix \ref{chap:spam_lindblad_proofs}. Dedicated error analysis of the linear system in Appendix \ref{chap:linear} demonstrates that we need to estimate $-2\sum_P \gamma_P \langle P, Q \rangle$ to accuracy $\mathcal O(\epsilon)$. Taken together, this means that if we choose the maximum evolution time $t \sim \mathcal O(1/M)$ (since $\gamma \sim \mathcal O(M)$ in the worst case), we need to learn the eigenvalues at the two times to accuracy $\mathcal O(A\epsilon/M)$, where $A = \lambda_{1, Q}\lambda_{2, Q}$. Consequently, the smaller the eigenvalue is, the more accurately we have to estimate this coefficient, as expected. To quantify how small these eigenvalues can possibly be, we use the SPAM error bound. To this end, we have the following lemma:
\begin{restatable}{lemma}{lemspameigenvaluebound}\label{lem:spam_eigenvalue_bound}
    Given the SPAM error bound $\varepsilon_\text{SPAM}$ as stated in Definition \ref{def:spam}, the eigenvalues for the twirled state preparation and measurement error channels, notated $\mathcal E_1', \mathcal E_2'$, respectively, the absolute value of the product of their eigenvalues is lower bounded by
    $$|\lambda_{1, Q}\lambda_{2, Q}| \ge 1 - \varepsilon_\text{SPAM}$$
    for $\varepsilon_\text{SPAM} \le 1$.
\end{restatable}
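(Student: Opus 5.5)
The plan is to reduce the statement to a bound on each twirled eigenvalue separately, $|\lambda_{i,Q} - 1| \le \|\mathcal E_i - \mathcal I\|_\diamond$, and then combine the two bounds using the SPAM constraint of Definition \ref{def:spam}.

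\textbf{Step 1: identify the twirled eigenvalue.} Pauli twirling $\mathcal E \mapsto \mathbb E_{P}[P\,\mathcal E[P\rho P]\,P]$ kills all off-diagonal entries of the Pauli transfer matrix of Definition \ref{def:ptm}. It keeps the diagonal ones, because conjugating by $P$ multiplies $T_{ij}$ by $(-1)^{\langle P,P_i\rangle + \langle P,P_j\rangle}$, and averaging over $P$ gives $\delta_{ij}$. Hence the eigenvalue of the twirled channel $\mathcal E_i'$ on $Q$ is
\[
\lambda_{i,Q} = \frac{1}{2^n}\text{Tr}\big(Q\,\mathcal E_i(Q)\big).
\]
This is the diagonal PTM entry; I will state the normalization as $\text{Tr}(Q\cdot)/2^n$ so that $\lambda = 1$ for the identity channel. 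For $Q = I$ we get $\lambda_{i,I} = 1$ by trace preservation, so from now on take $Q \ne I$.

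\textbf{Step 2: bound $|\lambda_{i,Q} - 1|$ by the diamond distance.} Subtracting, we have $\lambda_{i,Q} - 1 = 2^{-n}\text{Tr}\big(Q\,(\mathcal E_i - \mathcal I)(Q)\big)$. For $Q \ne I$, decompose $Q = \Pi_+ - \Pi_-$ into its eigenprojectors, each of rank $2^{n-1}$. Define the density matrices $\rho_\pm = \Pi_\pm/2^{n-1}$, so that $Q = 2^{n-1}(\rho_+ - \rho_-)$. By the Hölder inequality used above, with $\|Q\|_\infty = 1$,
\[
|\lambda_{i,Q} - 1| \le \frac{1}{2}\Big(\|(\mathcal E_i - \mathcal I)\rho_+\|_1 + \|(\mathcal E_i - \mathcal I)\rho_-\|_1\Big) \le \|\mathcal E_i - \mathcal I\|_\diamond .
\]
The last inequality holds because the diamond norm dominates the induced trace norm on states. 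Writing $a = \|\mathcal E_1 - \mathcal I\|_\diamond$ and $b = \|\mathcal E_2 - \mathcal I\|_\diamond$, this gives $\lambda_{1,Q} \ge 1 - a$ and $\lambda_{2,Q} \ge 1 - b$. The twirled eigenvalues are real, since the channels are Hermiticity preserving and $Q$ is Hermitian.

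\textbf{Step 3: combine.} Since $a + b \le \varepsilon_\text{SPAM} \le 1$, both $1-a$ and $1-b$ are nonnegative. Therefore
\[
|\lambda_{1,Q}\lambda_{2,Q}| \ge (1-a)(1-b) = 1 - a - b + ab \ge 1 - \varepsilon_\text{SPAM}.
\]
The hypothesis $\varepsilon_\text{SPAM} \le 1$ is used exactly here, to rule out sign flips that would break the multiplication of the two lower bounds.

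The main obstacle is Step 1: justifying carefully that the twirled channel is Pauli-diagonal, with eigenvalue equal to the original diagonal PTM entry, and keeping the $2^n$ versus $4^n$ normalization consistent with Definition \ref{def:ptm}. Once that is settled, Steps 2 and 3 are routine norm estimates.
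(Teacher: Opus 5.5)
Your proof is correct, but it takes a different route from the paper's.

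\textbf{The paper's route.} The paper works with the twirled channels throughout. It writes $\mathcal E_i' = \sum_P p_P\, P\cdot P$ and uses the exact identity $\|\mathcal E_i' - \mathcal I\|_\diamond = 2(1-p_I)$. It then bounds every eigenvalue below by $p_I - \sum_{P\ne I}p_P = 1 - \|\mathcal E_i' - \mathcal I\|_\diamond$. Finally, it needs a separate helper lemma (twirling does not increase the diamond distance to $\mathcal I$) to get back to $\|\mathcal E_i - \mathcal I\|_\diamond$ and the SPAM constraint.

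\textbf{Your route.} You identify $\lambda_{i,Q}$ with the diagonal PTM entry of the \emph{untwirled} channel. You then bound $|\lambda_{i,Q} - 1|$ directly by H\"older, using $Q = 2^{n-1}(\rho_+ - \rho_-)$.

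\textbf{Comparison.} Your version needs neither the Pauli-channel diamond-norm identity nor the twirling contraction lemma. It uses only linearity, trace preservation and Hermiticity preservation, not the probability-vector structure of a Pauli channel, so it is more self-contained. The paper's version is more structural: once the twirled noise is a Pauli channel, both the distance to $\mathcal I$ and the eigenvalue bound $2p_I - 1$ are read off from the single error rate $p_I$. That ties the statement to the Pauli error model the experiment actually sees.

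Your Step 3 also states explicitly that $1-a$ and $1-b$ are nonnegative, which the paper uses silently when it multiplies its two lower bounds. Your proof also avoids the paper's typo of writing $\mathcal E_1'$ twice.
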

Consequently, performing the same Hoeffding analysis as before, we choose a sample complexity of $N = \widetilde{\mathcal O}(M^2/\epsilon^2)$ to estimate these eigenvalues to the desired accuracy. Using the established compressed sensing result \cite{haviv_restricted_2017} stated in Theorem \ref{thm:sub_sample}, we have that $K = \mathcal O(Mn\log^2 M)$ probes are sufficient to learn these coefficients, giving us the following proposition.

\begin{restatable}[Complexity of SPAM-Robust On-Diagonal Coefficient Learning]{proposition}{propspamondiagonal}\label{prop:spam_on_diagonal}
    Given an $M$-sparse Lindbladian $\mathcal L$ with dissipator support $\hat {\mathcal S}$, we can learn the diagonal components of $\chi(\mathcal L)$ to $\epsilon$-accuracy in SPAM-robust fashion in total evolution time
    $$T = {\mathcal O}\left(\frac{M^2n\log^2 M}{\epsilon^2}\right)$$
    with high probability.
\end{restatable}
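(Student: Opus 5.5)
The plan is to reduce the diagonal learning problem to the Pauli-Lindbladian setting of Lemma \ref{lem:spam_pauli_channel} via twirling. Note first that Pauli twirling kills every off-diagonal entry of $\chi(\mathcal L)$, since $P_a P \rho P P_b P$ averaged over $P$ vanishes unless $a=b$. Hence $\mathcal L_\text{tw}[\rho]=\sum_k \chi_{kk}P_k\rho P_k$. Because $\chi_{00}=-\sum_{k\neq 0}\chi_{kk}$, this is exactly a Pauli Lindbladian with $\gamma_{P_k}=\chi_{kk}$, and it is $M$-sparse.

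The experiment has the following shape. For each probe Pauli $Q$:
\begin{enumerate}
\item Prepare a $+1$ eigenstate of $Q$.
\item Apply the interleaved twirled evolution $\mathcal E'_{t,r}$ for times $t$ and $2t$.
\item Measure $Q$, with random Pauli twirls on the preparation and measurement as well.
\end{enumerate}
The resulting expectation values are $f_1=\lambda_{1,Q}\lambda_{2,Q}e^{\mu_Q t}$ and $f_2=\lambda_{1,Q}\lambda_{2,Q}e^{2\mu_Q t}$, up to twirling error, where $\mu_Q=-2\sum_P\gamma_P\langle P,Q\rangle$. The ratio $f_2/f_1$ cancels the SPAM factor, giving $\mu_Q$.

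I fix $t=\Theta(1/M)$. Then $|\mu_Q|=\mathcal O(M)$, so $e^{\mu_Q t}=\Theta(1)$ and the signal does not decay.

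The accuracy requirements come from three earlier results:
\begin{itemize}
\item By Lemma \ref{lem:eigen_uncertainty}, we need each $\mu_Q$ to accuracy $\mathcal O(\epsilon)$. This is the accuracy required by the compressed-sensing stability analysis of Appendix \ref{chap:linear}.
\item By Lemma \ref{lem:eigen_uncertainty}, that in turn requires $f_1,f_2$ to accuracy $\mathcal O(A\epsilon t)=\mathcal O(A\epsilon/M)$, with $A=\lambda_{1,Q}\lambda_{2,Q}$.
\item By Lemma \ref{lem:spam_eigenvalue_bound}, $A\ge 1-\varepsilon_\text{SPAM}=\Omega(1)$, so the target accuracy is $\mathcal O(\epsilon/M)$.
\end{itemize}
There are two error sources against this target. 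The first is twirling bias. By Lemma \ref{lem:lindblad_pauli_twirl} and Hölder, the bias in each expectation is at most $\mathcal O(M^3t^2/r)=\mathcal O(M/r)$. Choosing $r=\Theta(M^2/\epsilon)$ makes this $\mathcal O(\epsilon/M)$, and the condition $t\le\mathcal O(r/M^2)$ holds. Twirls are free, since they are instantaneous Pauli gates and do not count toward evolution time. The second is statistical error. Hoeffding with a union bound over probes gives $N=\widetilde{\mathcal O}(M^2/\epsilon^2)$ shots per probe and time.

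For recovery, note that $\mu_Q$ is (up to the factor $-2$) the Walsh–Hadamard-type transform $\sum_P\gamma_P\langle P,Q\rangle$ of the $M$-sparse vector $\gamma$, evaluated at $Q$. I invoke the RIP result Theorem \ref{thm:sub_sample} from \cite{haviv_restricted_2017}. With $K=\mathcal O(Mn\log^2M)$ uniformly random probes $Q$, the measurement matrix satisfies RIP with high probability. Standard $\ell_1$ recovery (basis pursuit / LASSO, polynomial time) then returns $\hat\gamma$ with $\|\hat\gamma-\gamma\|_\infty\le\mathcal O(\epsilon)$, given per-probe noise $\mathcal O(\epsilon)$. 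This error transfer is where Appendix \ref{chap:linear} enters. One caveat is that the transform is $\pm1$-valued after centering: $\langle P,Q\rangle=(1-(-1)^{\langle P,Q\rangle})/2$. I would also recover the constant offset $\sum_P\gamma_P$, which is $\mu_Q$ for $Q=I$-like statistics, or alternatively use the symplectic character matrix directly. Finally, $\chi_{00}$ is $-\sum_k\hat\chi_{kk}$, which has error $\mathcal O(M\epsilon)$; rescaling $\epsilon$ handles this, or it can be treated as a derived quantity.

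The total evolution time is $K\cdot N\cdot 3t=\mathcal O(Mn\log^2M)\cdot\widetilde{\mathcal O}(M^2/\epsilon^2)\cdot\mathcal O(1/M)$. This gives $\mathcal O(M^2n\log^2M/\epsilon^2)$, treating the Hoeffding log factor as absorbed under constant failure probability.

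I expect the main obstacle to be the compressed-sensing step. One difficulty is making the noise-to-coefficient error transfer give $\ell_\infty$ accuracy $\epsilon$ without extra $\sqrt M$ factors; I would first try $\ell_2$ RIP stability with per-probe noise $\epsilon$, which yields $\|\cdot\|_2\le\mathcal O(\epsilon\sqrt{K/K})$ after normalization. The other difficulty is correctly fitting the $0/1$ commutation matrix into the framework of \cite{haviv_restricted_2017}.
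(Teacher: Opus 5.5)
Your proposal is correct and follows the paper's proof essentially step for step. The paper twirls to a Pauli Lindbladian, fixes $t=\Theta(1/M)$, and uses Lemmas \ref{lem:eigen_uncertainty} and \ref{lem:spam_eigenvalue_bound} to require per-eigenvalue accuracy $\mathcal O(\epsilon/M)$; it then sets $r=\Theta(M^2/\epsilon)$ via Lemma \ref{lem:lindblad_pauli_twirl}, takes $N=\widetilde{\mathcal O}(M^2/\epsilon^2)$ shots by Hoeffding, and uses $K=\mathcal O(Mn\log^2 M)$ random probes via Theorem \ref{thm:sub_sample}, exactly as you do. For $\chi_{00}$, go with your character-matrix option, which is what the paper does: it uses $\mathbf 1_n-2\mathbf A_n=\mathbf H_{2n}$ together with $\mathbf 1_n\boldsymbol\gamma=0$, so $\chi_{00}$ is recovered directly as a coordinate of the signal. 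Your ``rescale $\epsilon$'' fallback would instead cost an extra factor of $M^2$ in evolution time and break the stated bound.
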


Furthermore, the algorithm for Proposition \ref{prop:spam_on_diagonal} is given as follows.
\begin{algorithm}[H]
\caption{SPAM-Robust On-Diagonal Lindbladian Coefficient Learning}\label{alg:spam_on_diagonal}
\begin{algorithmic}[1]
\REQUIRE Query access to $\mathcal E_t[\cdot] = e^{t \mathcal L}[\cdot]$, accuracy threshold $\epsilon$, sparsity bound $M$
\STATE Initialize all $\hat \chi_{kk} = 0$.
\STATE Set $t \leftarrow \mathcal O(1/M)$
\STATE Set $r \leftarrow \mathcal O(M^2/\epsilon)$
\STATE Set $K \leftarrow \mathcal O(Mn\log^2 M)$
\STATE Set $N \leftarrow \widetilde {\mathcal O}\left(M^2/\epsilon^2\right)$ 
\FOR{$k = 1, \dots, K$}
\STATE Set $p_k^{(1)}, p_k^{(2)} \leftarrow 0$
\STATE Sample Pauli $Q_k \in \mathcal P_n$
    \FOR{$\ell = 1, \dots, N$}
        \STATE Prepare state $\rho_k = \frac{1 + Q_k}{2}$, twirling before and after state preparation.
        \STATE Evolve under $\mathcal E_{t, r}' = (\mathbb E_{P \sim \mathcal P_n}[P\mathcal E_{t/r}[P \rho P]P])^r$
        \STATE Measure $Q_k$ to obtain $m_k$, twirling before and after, updating $p_k^{(1)} \leftarrow p_k^{(1)} + m_k/N$
    \ENDFOR
    \FOR{$\ell = 1, \dots, N$}
        \STATE Prepare state $\rho_k = \frac{1 + Q_k}{2}$, twirling before and after state preparation.
        \STATE Evolve under $\mathcal E_{2t, r}'$
        \STATE Measure $Q_k$ to obtain $m_k$, twirling before and after, updating $p_k^{(2)} \leftarrow p_k^{(2)} + m_k/N$
    \ENDFOR
\ENDFOR
\STATE Let $\gamma_k \leftarrow \frac{1}{t}\ln\left(p_k^{(2)}/p_k^{(1)}\right)$ for all $k$
\STATE Solve compressed sensing problem given $\{Q_k, \gamma_k\}_{k \in [K]}$ using $\ell_1$-optimization to recover $\hat \chi_{kk}$ and support $\hat {\mathcal S}$
\RETURN $\hat \chi_{kk}, \hat {\mathcal S}$
\ENSURE Nonzero learned coefficients $\hat \chi_{kk}$ and dissipator support $\hat {\mathcal S}$
\end{algorithmic}
\end{algorithm}

\subsection{Off-Diagonals}

Having demonstrated how to learn the on-diagonals, we now address the off-diagonals. One potential approach is to conjugate the Lindbladian by $T$ and Clifford gates and apply the same twirling procedure as before, allowing us to evolve for a longer time whilst enabling the off-diagonal coefficients to show up on the diagonals. However, this approach only works for $\text{Re}[\chi_{ij}]$ where $\{P_i, P_j\} = 0$. Intuitively, this is because there can be no unitary transformation such that $P_i \mapsto \frac{P_i \pm iP_j}{2}$ for commuting $P_i, P_j$ since unitary transformations preserve eigenvalues.

Consequently, we instead refer to the approach we utilized in the noiseless case, where we simply applied linear combinations of Paulis before performing Bell measurements to extract the off-diagonal components. Since the Pauli-Lindblad evolution we were performing previously was twirled to remove all the off-diagonal contributions, we now need to evolve without suppressing these contributions. However, this also means that we cannot evolve for as much time since the higher-order terms will become significant. To this end, we consider the evolution
\begin{equation}\label{eq:spam_off_diag}
    \mathbb E_{P \sim \mathcal P_n}[P (\mathcal C_{ij}\circ e^{\mathcal Lt})[\rho] P]
\end{equation}
to learn either the real or imaginary component of the coefficient $\chi_{ij}$. Here, we have that
$$(\mathcal C_{ij}\circ e^{\mathcal Lt})[\rho] \approx \mathcal C_{ij}[\rho + t\mathcal L[\rho] + \dots] \approx \frac{1}{2}P_i \rho P_i + \frac{1}{2}P_j \rho P_j + t\sum_{k}\chi_{kk}'P_k \rho P_k + (\text{non-diag})$$
We denote the eigenvalue corresponding to this effective channel as $\lambda_Q$. Since the error decays in the same fashion as it did for the noiseless leaning protocol, we again choose $t = \mathcal O(\epsilon/M^2)$. In Appendix \ref{chap:spam_lindblad_proofs}, we demonstrate that this choice of $t$ is not only sufficient to suppress the Taylor series error, but also to measure the eigenvalues themselves to the desired accuracy. However, it should be noted that for this procedure, rather than evolving from times $t$ and $2t$, we instead opt for times $0$ and $t$ to measure the SPAM contribution directly. This is because the eigenvalue can be close to zero for this procedure, making it difficult to estimate the square of the Pauli channel eigenvalue, like we were able to do last time. Using the same Hoeffding analysis as before as well as $K = \mathcal O(Mn\log^2 M)$ Pauli probes for each of the $\mathcal O(M^2)$ off-diagonals, this gives us the following proposition:
\begin{restatable}[Complexity of SPAM-Robust Off-Diagonal Coefficient Learning]{proposition}{propspamoffdiagonal}\label{prop:spam_off_diagonal}
    Given an $M$-sparse Lindbladian $\mathcal L$ with dissipator support $\hat {\mathcal S}$, we can learn the off-diagonal components of $\chi(\mathcal L)$ to $\epsilon$-accuracy in SPAM-robust fashion in total evolution time
    $$T = \mathcal O\left(\frac{M^5n \log^2 M}{\epsilon^3}\right)$$
    with high probability.
\end{restatable}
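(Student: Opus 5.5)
The argument follows the noiseless off-diagonal analysis, with Bell sampling replaced by twirled Pauli-eigenvalue estimation and compressed sensing as in Proposition \ref{prop:spam_on_diagonal}. Fix a pair $(i,j)$ with $i,j$ in the learned support (at most $\mathcal O(M^2)$ pairs, counting Hamiltonian pairs $(0,k)$). Consider the twirled channel in (\ref{eq:spam_off_diag}). It is a Pauli channel whose diagonal coefficients are, to first order in $t$, $\frac12(\delta_{k,i}+\delta_{k,j})+t\chi'_{kk}$. Here $\chi'_{kk}$ are the entries of Table \ref{tab:transformation_2} in the Clifford rows. Only the learnable combination $\operatorname{Re}[\chi_{ij}]$ (anticommuting) or $\operatorname{Im}[\chi_{ij}]$ (commuting) enters the difference $\chi'_{00}-\chi'_{kk}$ with $P_k\propto P_iP_j$. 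This is consistent with Theorem \ref{thm:unlearnability}, so no nonunitary Kraus maps are needed.

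\textbf{Step 1 (Taylor control).} Bound the second-order remainder of $\mathcal C_{ij}\circ e^{t\mathcal L}$ in diamond norm by $\mathcal O(M^4t^2)$ using Lemma \ref{lem:lindblad_diamond}. Only $\mathcal O(M^2)$ terms actually reach a fixed Pauli eigenvalue, so the effective error is $\mathcal O(M^2t^2)$, as in the noiseless analysis. Choosing $t=\mathcal O(\epsilon/M^2)$ makes this at most a small constant times $\epsilon t$. By the Hölder argument after Lemma \ref{lem:lindblad_pauli_twirl}, the same bound transfers to every Pauli expectation.

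\textbf{Step 2 (SPAM cancellation).} For a probe $Q$, the measured quantity at time $s$ is $\lambda_{1,Q}\lambda_{2,Q}\,\mu_Q(s)$, where $\mu_Q$ is the eigenvalue of the twirled channel. The channel at $s=0$ is the fixed Clifford twirl with eigenvalue $\mu_Q(0)=\pm1$ or $0$-type mixtures. Because $\mu_Q(s)$ may be near zero, I would not take the ratio $\mu(2t)/\mu(t)$. Instead I would measure at $s=0$ using the Clifford alone, which yields $A=\lambda_{1,Q}\lambda_{2,Q}\mu_Q(0)$ directly. Rather than dividing, I would estimate $\lambda_{1,Q}\lambda_{2,Q}$ from a reference run with no Clifford, then subtract: $(\hat f(t)-\hat f(0))/(\hat A_{\mathrm{ref}}t)\approx \partial_t\mu_Q$. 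Lemma \ref{lem:spam_eigenvalue_bound} gives $|\lambda_{1,Q}\lambda_{2,Q}|\ge 1-\varepsilon_\text{SPAM}$, so the division is stable. The derivative is a linear functional $\sum_P \langle P,Q\rangle$-type combination of the $\chi'_{kk}$, exactly as in Lemma \ref{lem:spam_pauli_channel}.

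\textbf{Step 3 (sampling and sensing).} To get $\partial_t\mu_Q$ to accuracy $\epsilon$ (the scale required by the linear-system analysis of Appendix \ref{chap:linear}), each expectation must be estimated to accuracy $\mathcal O(\epsilon t)=\mathcal O(\epsilon^2/M^2)$. By Hoeffding this needs $N=\widetilde{\mathcal O}(M^4/\epsilon^4)$ shots per probe. Each shot uses evolution time $t=\mathcal O(\epsilon/M^2)$, so the cost per probe is $\widetilde{\mathcal O}(M^2/\epsilon^3)$. Compressed sensing (Theorem \ref{thm:sub_sample}) on the $\mathcal O(M)$-sparse vector $\chi'_{\cdot\cdot}$ needs $K=\mathcal O(Mn\log^2M)$ random $Q$. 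This recovers $\chi'_{00}$ and $\chi'_{kk}$, and hence the target component as $\frac{1}{2}(\chi'_{00}-\chi'_{kk})$. Multiplying by the $\mathcal O(M^2)$ pairs gives $T=\mathcal O(M^5n\log^2M/\epsilon^3)$, and a union bound gives high probability.

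\textbf{Main obstacle.} The delicate point is Step 2 combined with Step 1. The Clifford-induced zeroth-order term means $\mu_Q(0)$ can vanish for some probes, and the Taylor remainder must be shown to be small \emph{relative to $\epsilon t$ in each probed eigenvalue}, not merely in diamond norm. I would first try to show that the remainder restricted to the $Q$-eigenvalue only involves products of Paulis in $\hat{\mathcal S}$, giving the $\mathcal O(M^2t^2)$ count. I would also check that the $\ell_1$ recovery guarantee tolerates this bias as bounded noise.
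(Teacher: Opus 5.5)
Your proposal follows the paper's proof essentially step for step. It uses the same twirled Clifford-conjugated channel with eigenvalue $\lambda_Q=a+bt$, the same choice $t=\mathcal O(\epsilon/M^2)$ from balancing $\epsilon t$ against an $M^2t^2$ remainder, and SPAM removal via a zero-evolution reference run bounded below by Lemma~\ref{lem:spam_eigenvalue_bound}; the counts $N=\widetilde{\mathcal O}(M^4/\epsilon^4)$ shots, $K=\mathcal O(Mn\log^2 M)$ probes and $\mathcal O(M^2)$ pairs also match.

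The only difference is that the paper uses just two runs: $y_0=\lambda_{1,Q}\lambda_{2,Q}$ with no Clifford or evolution, and $y_1$ with both. It then subtracts the analytically known zeroth-order value $a=\frac12[(-1)^{\langle P_i,Q\rangle}+(-1)^{\langle P_j,Q\rangle}]$, so your Clifford-only run is redundant but harmless. The paper likewise simply asserts the eigenvalue-level $M^2t^2$ remainder bound that you flag as the main obstacle.
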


The corresponding algorithm for this proposition is given as follows:
\begin{algorithm}[H]
\caption{SPAM-Robust Off-Diagonal Lindbladian Coefficient Learning}\label{alg:spam_off_diagonal}
\begin{algorithmic}[1]
\REQUIRE Query access to $\mathcal E_t[\cdot] = e^{t \mathcal L}[\cdot]$, accuracy threshold $\epsilon$, sparsity bound $M$, learned dissipator support $\hat {\mathcal S}$
\STATE Initialize all $\hat \chi_{kk} = 0$.
\STATE Set $t \leftarrow \mathcal O(\epsilon/M^2)$
\STATE Set $K \leftarrow \mathcal O(Mn\log^2 M)$
\STATE Set $N \leftarrow \widetilde {\mathcal O}\left(M^4/\epsilon^4\right)$ 
\STATE Define $\hat {\mathcal S}^{(2)} = \{a \oplus b: a,b \in \hat{\mathcal S}\}$
\STATE Define coefficient support $\hat{ \mathcal S}' = (\mathcal S \times \mathcal S) \cup (\hat {\mathcal S}^{(2)} \times \{0\}) \cup (\{0\} \times \hat {\mathcal S}^{(2)})$
\FOR{$(a, b) \in \hat{ \mathcal S}'$ where $i < j$}
    \FOR{$k = 1, \dots, K$}
    \STATE Set $p_k^{(0)}, p_k^{(1)} \leftarrow 0$
    \STATE Sample Pauli $Q_k \in \mathcal P_n$
        \FOR{$\ell = 1, \dots, N$}
            \STATE Prepare state $\rho_k = \frac{1 + Q_k}{2}$, twirling before and after state preparation.
            \STATE Evolve under $\mathcal C_{ab} \circ \mathcal E_{t}$
            \STATE Measure $Q_k$ to obtain $m_k$, twirling before and after, updating $p_k^{(1)} \leftarrow p_k^{(1)} + m_k/N$
        \ENDFOR
        \FOR{$\ell = 1, \dots, N$}
            \STATE Prepare state $\rho_k = \frac{1 + Q_k}{2}$, twirling before and after state preparation.
            \STATE Measure $Q_k$ to obtain $m_k$, twirling before and after, updating $p_k^{(0)} \leftarrow p_k^{(0)} + m_k/N$
        \ENDFOR
    \ENDFOR
    \STATE Let $\gamma_k \leftarrow \frac{1}{t}\ln\left(p_k^{(1)}/p_k^{(0)}\right)$ for all $k$
    \STATE Solve compressed sensing problem given $\{Q_k, \gamma_k\}_{k \in [K]}$ using $\ell_1$-optimization to recover $\hat \chi_{jj}'$
    \STATE Compute $\hat \chi_{ab} = \frac{i^{\langle a, b \rangle}}{2}(\hat \chi_{kk}' - \hat \chi_{00}')$ for $k = a \oplus b$.
\ENDFOR
\RETURN $\hat \chi_{ab}$
\ENSURE Nonzero learned gauge-independent off-diagonal coefficients $\hat \chi_{ab}$
\end{algorithmic}
\end{algorithm}

\subsection{Hamiltonian Coefficients}

The last part of the Lindbladian that we can always estimate without worrying about gauge-degrees of freedom are the Hamiltonian terms. Just as before, we consider the case where the Hamiltonian terms are outside $\{a \oplus b : a, b \in \mathcal S\}$. To do this, we need to consider a modification to Lemma \ref{lem:lindblad_pauli_twirl}:

\begin{restatable}[Second-Order Lindbladian Pauli Twirling]{lemma}{lemlindbladpaulitwirlquad}\label{lem:lindblad_pauli_twirl_quad}
    Given an $M$-sparse Lindbladian $\mathcal L$, let $\mathcal E_{t, r}[\rho] = e^{t\mathcal L_\text{tw} + \frac{t^2}{2r}[(\mathcal L^2)_\text{tw} - (\mathcal L_\text{tw})^2]}$. Define $$\mathcal E'_{t, r}[\rho] = (\mathbb E_{P \sim \mathcal P_n}[Pe^{t\mathcal L/r}[P\rho P]P])^r$$
    We can then bound the diamond distance between these channels as follows:
    $$\|\mathcal E'_{t, r} - \mathcal E_{t, r}\|_\diamond \le \mathcal O\left(\frac{M^5t^3}{r^2}\right),$$
    for $t \le \mathcal O(r/M^2)$
\end{restatable}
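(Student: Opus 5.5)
Let $\mathcal T[\mathcal A] = \mathbb E_{P \sim \mathcal P_n}[\mathcal P \circ \mathcal A \circ \mathcal P]$ denote the twirl, where $\mathcal P[\rho] = P\rho P$, so that $\mathcal A_\text{tw} = \mathcal T[\mathcal A]$. Set $\tau = t/r$. One step of the twirled protocol is $\mathcal F_\tau = \mathcal T[e^{\tau\mathcal L}]$, and $\mathcal E'_{t,r} = \mathcal F_\tau^r$. The target channel is $\mathcal G_\tau^r$, where $\mathcal G_\tau = e^{\tau \mathcal K}$ and
$$\mathcal K = \mathcal L_\text{tw} + \frac{\tau}{2}\left[(\mathcal L^2)_\text{tw} - (\mathcal L_\text{tw})^2\right].$$
Indeed $r\tau\mathcal K = t\mathcal L_\text{tw} + \frac{t^2}{2r}[(\mathcal L^2)_\text{tw} - (\mathcal L_\text{tw})^2]$, so $\mathcal G_\tau^r = \mathcal E_{t,r}$.

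The plan mirrors the telescoping argument of Lemma \ref{lem:lindblad_pauli_twirl}, carried one order further in the Taylor expansion. Telescoping gives
$$\mathcal F_\tau^r - \mathcal G_\tau^r = \sum_{k=0}^{r-1}\mathcal F_\tau^{k}(\mathcal F_\tau - \mathcal G_\tau)\mathcal G_\tau^{r-1-k}.$$
The outer factors $\mathcal F_\tau^k$ are convex combinations of CPTP maps, hence CPTP with diamond norm $1$. The factors $\mathcal G_\tau^{r-1-k}$ are controlled with the crude bound $\|e^{s\mathcal K}\|_\diamond \le e^{s\|\mathcal K\|_\diamond}$. It therefore suffices to show
$$\|\mathcal F_\tau - \mathcal G_\tau\|_\diamond \le \mathcal O(M^5\tau^3) \quad \text{and} \quad \|\mathcal G_\tau^{j}\|_\diamond = \mathcal O(1) \ \text{for } j\le r.$$
Summing the first bound $r$ times gives $r\cdot M^5 t^3/r^3 = M^5t^3/r^2$, which is the claim.

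\textbf{Local error.} Twirling is linear and preserves composition order inside each term, so
$$\mathcal F_\tau = \mathcal I + \tau\mathcal L_\text{tw} + \frac{\tau^2}{2}(\mathcal L^2)_\text{tw} + \mathcal R_F.$$
Expanding $\mathcal G_\tau$ to second order gives
$$\mathcal G_\tau = \mathcal I + \tau\mathcal L_\text{tw} + \frac{\tau^2}{2}(\mathcal L^2)_\text{tw} - \frac{\tau^2}{2}(\mathcal L_\text{tw})^2 + \frac{\tau^2}{2}(\mathcal L_\text{tw})^2 + \mathcal R_G.$$
The two $(\mathcal L_\text{tw})^2$ terms cancel, so the difference begins at third order. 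This cancellation is exactly why the correction term $\frac{\tau}{2}[(\mathcal L^2)_\text{tw} - (\mathcal L_\text{tw})^2]$ was chosen. The remainder $\mathcal R_F$ is bounded by $\frac{\tau^3}{6}\|\mathcal L\|_\diamond^3 e^{\tau\|\mathcal L\|_\diamond}$, using $\|\mathcal T[\mathcal A]\|_\diamond \le \|\mathcal A\|_\diamond$.

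\textbf{Main obstacle.} The naive bound from Lemma \ref{lem:lindblad_diamond} gives $\|\mathcal L\|_\diamond^3 \le M^6$, one power of $M$ too many. To reach $M^5$, I would bound the third-order terms through the Chi-matrix structure instead of multiplying norms. Because the jump operators and Hamiltonian are built from at most $M$ Paulis, the relevant cubic products ($\mathcal L^3$ twirled, and the cross terms in $\mathcal R_G$ such as $\mathcal L_\text{tw}(\mathcal L^2)_\text{tw}$) can be organised so that each Pauli pair contributes with bounded multiplicity. This is the same counting as the Taylor-error argument in Appendix \ref{chap:support_learn_proofs}: $\mathcal O(M)$ pairs multiply to any given Pauli, giving $\mathcal O(M^2)$ per second-order factor. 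Applying Lemma \ref{lem:superop_diagonal_placeholder_avoid} is not available, so the concrete tool is Lemma \ref{lem:superop_diamond} term by term. If this refinement proves too delicate, I would try using $\|\mathcal L_\text{tw}\|_\diamond \le \mathcal O(M)$, since the twirl keeps only the $\mathcal O(M)$ diagonal terms. Together with $\|(\mathcal L^2)_\text{tw}\|_\diamond \le \mathcal O(M^3)$, products like $\mathcal L_\text{tw}\cdot(\mathcal L^2)_\text{tw}$ then come out at $M^4$ and $\mathcal K^3$ at $M^5$-type scaling. I expect this counting to be the step requiring the most care.

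\textbf{Stability of $\mathcal G_\tau^j$.} Here $\tau\mathcal K$ has norm $\mathcal O(\tau M + \tau^2 M^{3})$. With $t \le \mathcal O(r/M^2)$ we get $\tau \le \mathcal O(1/M^2)$, so
$$j\,\tau\,\|\mathcal K\|_\diamond \le t\,\mathcal O(M) + \frac{t^2}{r}\,\mathcal O(M^3),$$
which is $\mathcal O(1)$ when $t \lesssim 1/M$. For larger $t$ the crude exponential bound fails. In that regime I would instead argue that $\mathcal L_\text{tw}$ is a genuine Pauli Lindbladian, whose semigroup is contractive, and treat the $\tau$-correction perturbatively via Duhamel's formula. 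Alternatively, I would state the lemma in the regime where this exponential factor is $\mathcal O(1)$, which is the regime used downstream.
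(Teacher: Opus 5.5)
\textbf{Gap: the $M^5$ count for the twirled cubic and higher terms is never established.} Your skeleton matches the paper's: telescoping, then a third-order local expansion in which the $(\mathcal L_\text{tw})^2$ terms cancel. But the step that carries all the content, $\|\mathcal F_\tau-\mathcal G_\tau\|_\diamond=\mathcal O(M^5\tau^3)$, is not established, and neither route you sketch reaches it.

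As written, your remainder bound $\frac{\tau^3}{6}\|\mathcal L\|_\diamond^3e^{\tau\|\mathcal L\|_\diamond}$ proves the lemma only with $M^6$ in place of $M^5$. Counting cannot be applied inside that integral remainder, because the twirl acts on $\mathcal L^3e^{s\mathcal L}$, which is not sparse. Your fallback via $\|\mathcal L_\text{tw}\|_\diamond\le M$ and $\|(\mathcal L^2)_\text{tw}\|_\diamond\le M^3$ controls only $\mathcal R_G$, where it gives $\mathcal O(M^4\tau^3)$. It says nothing about $(\mathcal L^3)_\text{tw}$, since the twirl of a product is not the product of twirls. A per-Pauli count in the style of Appendix~\ref{chap:support_learn_proofs} bounds each diagonal entry of $\chi(\mathcal L^3)$ by $\mathcal O(M^4)$. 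However, up to $M^3$ entries can be nonzero, so summing over entries does not give $M^5$.

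The missing idea is a global count, applied order by order. Write $\mathcal F_\tau=\sum_{k\ge0}\frac{\tau^k}{k!}(\mathcal L^k)_\text{tw}$ and expand $\mathcal L^k$ into terms $\chi_{a_1b_1}\cdots\chi_{a_kb_k}\,P_{a_k}\cdots P_{a_1}\rho P_{b_1}\cdots P_{b_k}$. The twirl annihilates a term unless $a_1\oplus\cdots\oplus a_k=b_1\oplus\cdots\oplus b_k$. So once $2k-1$ indices are chosen from the support (at most $M$ choices each), the last one is determined. Applying Lemma~\ref{lem:superop_diamond} term by term gives $\|(\mathcal L^k)_\text{tw}\|_\diamond\le M^{2k-1}$, hence
\[
\|\mathcal R_F\|_\diamond\le\sum_{k\ge3}\frac{\tau^kM^{2k-1}}{k!}\le\frac{(\tau M^2)^3}{6M}\,e^{\tau M^2}=\mathcal O(M^5\tau^3).
\]
This is exactly where the hypothesis $t\le\mathcal O(r/M^2)$, i.e.\ $\tau M^2=\mathcal O(1)$, is used. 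The paper asserts this count in one sentence. Its stated exponent $2k+1$, read literally, would give $M^7$ at third order; $2k-1$ is what matches both its $M^3$ at second order in Lemma~\ref{lem:lindblad_pauli_twirl} and the claimed $M^5$.

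\textbf{Stability of $\mathcal G_\tau^j$: your concern is valid, but there is a simpler fix.} The paper skips this point: its step $\|\mathcal E'_{t,r}-\mathcal E_{t,r}\|_\diamond\le r\|\mathcal E'_{t/r,1}-\mathcal E_{t/r,1}\|_\diamond$ treats $\mathcal G_\tau$ as a contraction. Yet $\mathcal K$ need not generate CPTP maps, because $(\mathcal L^2)_\text{tw}-(\mathcal L_\text{tw})^2$ can have negative Pauli rates. A single Hermitian jump $L\propto X+Z$ already gives a negative $Y$ rate.

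You need neither Duhamel's formula nor the restriction $t=\mathcal O(1/M)$. Once the local bound is proved,
\[
\|\mathcal G_\tau\|_\diamond\le\|\mathcal F_\tau\|_\diamond+\|\mathcal G_\tau-\mathcal F_\tau\|_\diamond\le1+CM^5\tau^3,
\]
so $\|\mathcal G_\tau^{j}\|_\diamond\le e^{CM^5t^3/r^2}$ for all $j\le r$. Your telescoping sum is then at most $C\frac{M^5t^3}{r^2}e^{CM^5t^3/r^2}$. This is $\mathcal O(M^5t^3/r^2)$ whenever $M^5t^3/r^2=\mathcal O(1)$, which covers the downstream choice $t=\mathcal O(1/M)$, $r=\Theta(M^4/\epsilon^2)$.
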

Hence, we have that the trotterized evolution that we performed previous can also be used to approximate evolution under $\mathcal L_\text{tw} + \frac{t}{2r}[(\mathcal L^2)_\text{tw} - (\mathcal L_\text{tw})^2]$ for time $t$. Since everything is twirled here, we have that this is still diagonal in the Pauli basis. Writing the Hamiltonian as $H = \sum_j h_j P_j$, we have that for the Hamiltonian terms outside the support of the dissipator, the coefficient of the term $P_j \rho P_j$ is given as $\frac{h_p^2t}{r}$. Hence, to estimate $h_p$ to $\mathcal O(\epsilon)$-accuracy, we need to estimate this coefficient to $\mathcal O(\epsilon^2t/r)$-accuracy. Consequently, we have that the compressed sensing problem now involves an $\mathcal O(M^2)$-sparse solution vector and coefficient estimation to $\mathcal O(\epsilon^2t/r)$-accuracy. Performing the same error analysis as before, we now find that we need to learn each eigenvalue to accuracy $\mathcal O(\epsilon^4/M^6)$, which by the same Hoeffding derivation gives us the following result:
\begin{restatable}[Complexity of SPAM-Robust Hamiltonian Coefficient Learning]{proposition}{propspamhamiltonian}\label{prop:spam_hamiltonian}
    Given an $M$-sparse Lindbladian $\mathcal L$ with dissipator support $\hat {\mathcal S}$, we can learn the Hamiltonian components of $\chi(\mathcal L)$ to $\epsilon$-accuracy in SPAM-robust fashion in total evolution time
    $$T = {\mathcal O}\left(\frac{M^{13}n\log^2 M}{\epsilon^8}\right)$$
    with high probability.
\end{restatable}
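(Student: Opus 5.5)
The argument parallels Proposition \ref{prop:spam_on_diagonal}, with Lemma \ref{lem:lindblad_pauli_twirl_quad} in place of Lemma \ref{lem:lindblad_pauli_twirl} so that the second-order Hamiltonian contribution shows up on the diagonal.

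\textbf{Effective generator.} Write the twirled second-order generator as $\mathcal G = \mathcal L_\text{tw} + \frac{t}{2r}[(\mathcal L^2)_\text{tw} - (\mathcal L_\text{tw})^2]$. Since it is Pauli-diagonal, every Pauli $Q$ is an eigenoperator of $\mathcal G^*$. For $P_j \notin \{a\oplus b: a,b\in\mathcal S\}$ there is no first-order dissipator contribution. Expanding $\mathcal L^2$ and twirling, the only surviving $P_j\rho P_j$ term comes from $\mathcal L_H^2$, namely $2h_j^2 P_j\rho P_j$ up to the $-\rho$ normalization. Hence the diagonal coefficient of $P_j$ in $\mathcal G$ is $\frac{t}{r}h_j^2$.

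The diagonal of $\mathcal G$ is then supported on $\hat{\mathcal S}\cup\mathcal S_H$, which has size $\mathcal O(M^2)$. Because the dissipator-support diagonals are already known to $\epsilon$ from Proposition \ref{prop:spam_on_diagonal}, we can subtract their contribution from each probe and leave a sparse vector to recover.

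\textbf{Error budget.}
\begin{enumerate}[label=(\roman*)]
    \item \emph{Target accuracy.} Recovering $h_j$ to accuracy $\epsilon$ from $h_j^2 t/r$ requires estimating that coefficient to accuracy $\mathcal O(\epsilon^2 t/r)$. This is the $\epsilon^2$ loss intrinsic to reading $h_j$ off a quadratic quantity.
    \item \emph{Linear system.} By the linear-system analysis of Appendix \ref{chap:linear}, together with the compressed sensing guarantee of Theorem \ref{thm:sub_sample} at sparsity $\mathcal O(M^2)$, each probe exponent must be known to the same order of accuracy. This uses $K=\mathcal O(M^2 n\log^2 M)$ random Pauli probes $Q_k$.
    \item \emph{Eigenvalue accuracy.} Lemma \ref{lem:eigen_uncertainty} converts exponent accuracy into eigenvalue accuracy at times $t$ and $2t$. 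We choose $t=\mathcal O(1/M)$ so that $e^{\gamma t}$ stays of constant order. Lemma \ref{lem:spam_eigenvalue_bound} gives $|\lambda_{1,Q}\lambda_{2,Q}|\ge 1-\varepsilon_\text{SPAM}$, so the SPAM factor $A$ costs only a constant.
    \item \emph{Trotter bias.} Lemma \ref{lem:lindblad_pauli_twirl_quad} bounds the diamond-norm error by $\mathcal O(M^5t^3/r^2)$. By Hölder this bounds the bias in every Pauli expectation value, and it must lie below the target eigenvalue accuracy.
    \item \emph{Balancing $r$.} A larger $r$ suppresses the bias but shrinks the signal $h_j^2t/r$, so $r$ must be balanced against the target. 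The $t/r$ factor times the $\epsilon^2$ loss times the $1/M$ from step (iii) should give a required eigenvalue accuracy $\eta=\mathcal O(\epsilon^4/M^6)$, after also absorbing the lower-order terms.
\end{enumerate}

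\textbf{Sampling and total cost.} Hoeffding's bound with a union bound over the $K$ probes and two times gives $N=\widetilde{\mathcal O}(1/\eta^2)=\widetilde{\mathcal O}(M^{12}/\epsilon^8)$ shots per probe and time. Each shot uses evolution time $\mathcal O(t)=\mathcal O(1/M)$. The total is then $T=\mathcal O(K\cdot N\cdot t)$, and the powers of $M$ from $K$, $N$ and $t$ should combine to $\mathcal O(M^{13}n\log^2M/\epsilon^8)$.

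\textbf{Main obstacle.} The hardest step is the choice of $r$ in steps (iv) and (v). The $\frac{t}{2r}$ term in $\mathcal G$ is itself the signal, so $r$ cannot simply be taken large. We must check that $r=\Theta(M^2/\epsilon^{2})$ or similar both satisfies $t\le\mathcal O(r/M^2)$ and keeps the $M^5t^3/r^2$ bias below $\eta$.

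A related difficulty is that the cross-terms in $(\mathcal L^2)_\text{tw}-(\mathcal L_\text{tw})^2$, involving dissipator pairs multiplying to $P_j$, must vanish for $P_j$ outside $\{a\oplus b\}$. Contributions with Paulis inside that set should be removed using the already-learned coefficients. I would verify both points first by explicitly expanding $\mathcal L^2$ in the Chi representation.
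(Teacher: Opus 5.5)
Your outline follows the paper's proof step for step: the twirled second-order generator with diagonal coefficient $h_j^2t/r$ outside $\{a\oplus b\}$, exponent accuracy $\epsilon^2t/r$, $t=\mathcal O(1/M)$, $K=\mathcal O(M^2n\log^2M)$, and $T=K\cdot N\cdot t$. The problem is that the one computation the proof consists of is left as ``should give'', and your tentative resolution of it fails. By Lemma \ref{lem:eigen_uncertainty}, exponent accuracy $\epsilon^2t/r$ means eigenvalue accuracy $\epsilon^2t^2/r$, and the bias from Lemma \ref{lem:lindblad_pauli_twirl_quad} must sit below it. The paper balances
\[
\frac{\epsilon^2t^2}{r}\sim\frac{M^5t^3}{r^2}\quad\Longrightarrow\quad r\sim\frac{M^5t}{\epsilon^2},\qquad \eta\sim\frac{\epsilon^4t}{M^5}.
\]
The bias-to-signal ratio is $M^5t/(\epsilon^2 r)$, so this is the smallest admissible $r$, and any larger $r$ only shrinks the signal further. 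At $t\sim1/M$ this gives $r\sim M^4/\epsilon^2$ (the value in Algorithm \ref{alg:spam_hamiltonian}) and $\eta\sim\epsilon^4/M^6$. Your candidate $r=\Theta(M^2/\epsilon^2)$ asks for accuracy $\epsilon^4/M^4$ but leaves a bias of $\epsilon^4/M^2$, which is too large by a factor of $M^2$. Your stated $\eta=\epsilon^4/M^6$ is correct, but it is consistent only with $r\sim M^4/\epsilon^2$.

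Second, drop the subtraction of already-learned coefficients from the probes. Proposition \ref{prop:spam_on_diagonal} gives the first-order diagonals only to accuracy $\epsilon$. Subtracting them therefore leaves an error of order $\epsilon$ (up to $M\epsilon$ summed over $\hat{\mathcal S}$) in every probe exponent. The Hamiltonian signal there is $h_j^2t/r\le\epsilon^2/M^5$ and must be resolved to $\epsilon^2t/r\sim\epsilon^4/M^5$, so the residual swamps it. Removing the second-order terms on $\{a\oplus b\}$ with learned coefficients has the same defect.

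The paper instead solves a single compressed-sensing problem for the entire twirled diagonal. That diagonal is $\mathcal O(M^2)$-sparse because of the product set $\{a\oplus b: a,b\in\mathcal S\}$, not because of $\hat{\mathcal S}\cup\mathcal S_H$, which has size $\mathcal O(M)$. The Hamiltonian Paulis targeted here lie outside that product set, so they are separate coordinates. The RIP recovery guarantee then bounds their error by a constant times the per-probe exponent noise, however large the first-order entries are. Hamiltonian Paulis inside $\{a\oplus b\}$ are not this procedure's job; they are learned with the off-diagonals in Algorithm \ref{alg:spam_off_diagonal}. With these two fixes, your $K$, $N\sim M^{12}/\epsilon^8$ and $T=K\cdot N\cdot t$ reproduce the stated bound.
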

As we can see, this quadratic order terms are much more difficult to estimate due to their suppression under twirling. Consequently, the runtime of the algorithm will be bottlenecked by the Hamiltonian learning protocol.

The algorithm is nearly identical to the on-diagonal learning one, with the only difference being different parameter choices:
\begin{algorithm}[H]
\caption{SPAM-Robust Hamiltonian Coefficient Learning}\label{alg:spam_hamiltonian}
\begin{algorithmic}[1]
\REQUIRE Query access to $\mathcal E_t[\cdot] = e^{t \mathcal L}[\cdot]$, accuracy threshold $\epsilon$, sparsity bound $M$, learned dissipator support $\hat {\mathcal S}$
\STATE Initialize all $\hat \chi_{ij} = 0$.
\STATE Set $t \leftarrow \mathcal O(1/M)$
\STATE Set $r \leftarrow \mathcal O(M^4/\epsilon^2)$
\STATE Set $K \leftarrow \mathcal O(M^2n\log^2 M)$
\STATE Set $N \leftarrow \widetilde {\mathcal O}\left(M^{12}/\epsilon^8\right)$ 
\FOR{$k = 1, \dots, K$}
\STATE Set $p_k^{(1)}, p_k^{(2)} \leftarrow 0$
\STATE Sample Pauli $Q_k \in \mathcal P_n$
    \FOR{$\ell = 1, \dots, N$}
        \STATE Prepare state $\rho_k = \frac{1 + Q_k}{2}$, twirling before and after state preparation.
        \STATE Evolve under $\mathcal E_{t, r}' = (\mathbb E_{P \sim \mathcal P_n}[P\mathcal E_{t/r}[P \rho P]P])^r$
        \STATE Measure $Q_k$ to obtain $m_k$, twirling before and after, updating $p_k^{(1)} \leftarrow p_k^{(1)} + m_k/N$
    \ENDFOR
    \FOR{$\ell = 1, \dots, N$}
        \STATE Prepare state $\rho_k = \frac{1 + Q_k}{2}$, twirling before and after state preparation.
        \STATE Evolve under $\mathcal E_{2t, r}'$
        \STATE Measure $Q_k$ to obtain $m_k$, twirling before and after, updating $p_k^{(2)} \leftarrow p_k^{(2)} + m_k/N$
    \ENDFOR
\ENDFOR
\STATE Let $\gamma_k \leftarrow \frac{1}{t}\ln\left(p_k^{(2)}/p_k^{(1)}\right)$ for all $k$
\STATE Solve compressed sensing problem given $\{Q_k, \gamma_k\}_{k \in [K]}$ using $\ell_1$-optimization to recover all $\hat \chi_{ij}$
\RETURN $\hat \chi_{ij}$
\ENSURE Nonzero learned coefficients $\hat \chi_{ij}$, including the Hamiltonian terms
\end{algorithmic}
\end{algorithm}

\section{Full SPAM-Robust Learning Algorithm}

Combining the three previous algorithms, we obtain the full protocol for learning Lindbladians in SPAM-robust fashion:
\begin{algorithm}[H]
\caption{SPAM-Robust Sparse Lindbladian Learning}\label{alg:spam_robust_lindblad_learn}
\begin{algorithmic}[1]
\REQUIRE Query access to $\mathcal E_t[\cdot] = e^{t \mathcal L}[\cdot]$, accuracy threshold $\epsilon$, sparsity bound $M$
\STATE Run Algorithm \ref{alg:spam_on_diagonal} to recover the on-diagonals $\hat \chi_{kk}$ and dissipator support set $\hat{\mathcal S}$
\STATE Run Algorithm \ref{alg:spam_off_diagonal} to learn off-diagonals $\hat \chi_{ab}$
\STATE Run Algorithm \ref{alg:spam_hamiltonian} to estimate off-diagonals $\hat \chi_{ab}$ that have not yet been learned
\RETURN $(\hat \chi_{ab})$
\ENSURE Learned Chi matrix $(\hat \chi_{ab})$ for Lindbladian $\mathcal L$
\end{algorithmic}
\end{algorithm}

Furthermore, we have that the runtime is given by the following theorem:
\begin{theorem}
    Given an unknown $M$-sparse, $n$-qubit Lindbladian $\mathcal L[\rho] = \sum_{i, j} \chi_{ij}P_i\rho P_j$, we can learn the gauge-independent components of the chi matrix $\hat \chi_{ij}$ in the presence of SPAM-error $\varepsilon_\text{SPAM} \le 1 - \mathcal O(1)$ in evolution time
    $$T = \mathcal O\left(\frac{M^{13}n \log^2 M}{\epsilon^8}\right)$$
    such that for all gauge-independent $\chi_{ij}$,
    $$|\hat \chi_{ij} - \chi_{ij}| \le \epsilon,$$
    with high probability.
\end{theorem}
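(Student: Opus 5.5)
The theorem follows by analyzing Algorithm~\ref{alg:spam_robust_lindblad_learn} as a composition of its three subroutines. I would bound the accuracy and evolution time of each separately, then combine them with a union bound over failure probabilities. The plan has three steps.

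\textbf{Step 1: on-diagonal terms.} I invoke Proposition~\ref{prop:spam_on_diagonal}. Algorithm~\ref{alg:spam_on_diagonal} returns $\hat\chi_{kk}$ within $\epsilon$ and a support $\hat{\mathcal S}$ with $|\hat{\mathcal S}| = \mathcal O(M)$ in time $\mathcal O(M^2 n\log^2 M/\epsilon^2)$. The ingredients are:
\begin{itemize}
\item Lemma~\ref{lem:lindblad_pauli_twirl}, which controls the Trotterized-twirl error with $r=\mathcal O(M^2/\epsilon)$;
\item Lemma~\ref{lem:eigen_uncertainty} together with Lemma~\ref{lem:spam_eigenvalue_bound}, which show that the ratio $p^{(2)}/p^{(1)}$ cancels the SPAM factor $\lambda_{1,Q}\lambda_{2,Q}$. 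This factor is bounded below by the constant $1-\varepsilon_\text{SPAM}=\Omega(1)$; this is exactly where the hypothesis $\varepsilon_\text{SPAM}\le 1-\mathcal O(1)$ enters.
\end{itemize}
Compressed sensing with $K=\mathcal O(Mn\log^2 M)$ random Pauli probes then recovers the $M$-sparse rate vector stably.

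\textbf{Step 2: off-diagonal terms.} For every pair in $\hat{\mathcal S}'$, I invoke Proposition~\ref{prop:spam_off_diagonal}. I would stress that only the transformations realizable by the Cliffords $C_{ij}$ of Lemma~\ref{lem:cliff} are used. These are exactly the ones yielding $\text{Re}[\chi_{ij}]$ for anticommuting pairs and $\text{Im}[\chi_{ij}]$ for commuting pairs. By Theorem~\ref{thm:unlearnability}, these are precisely the gauge-independent off-diagonal components, so the complementary (Kraus-based) components need not be estimated. Measuring at times $0$ and $t$ divides out the SPAM eigenvalue. The total cost is $\mathcal O(M^5 n\log^2M/\epsilon^3)$.

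\textbf{Step 3: Hamiltonian terms outside $\{a\oplus b\}$.} I use Lemma~\ref{lem:lindblad_pauli_twirl_quad}. With $r=\mathcal O(M^4/\epsilon^2)$, the second-order twirl produces a diagonal coefficient $h_p^2t/r$ for each such Pauli. Applying Proposition~\ref{prop:spam_hamiltonian} gives cost $\mathcal O(M^{13}n\log^2 M/\epsilon^8)$. Note that only $|h_p|$ is obtained this way. The sign (phase) is itself gauge-independent, since it is an $\text{Im}[\chi_{0p}]$ with a commuting pair. It can be fixed by one additional Clifford experiment of Step~2 type restricted to the now-known support, which costs no more than Step~2.

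\textbf{Combining.} Summing the three evolution times, Step~3 dominates, giving $T=\mathcal O(M^{13}n\log^2M/\epsilon^8)$. Setting each subroutine's failure probability to $\delta/3$ only changes logarithmic factors, so a union bound gives overall success with high probability.

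\textbf{Main obstacle.} I expect the main obstacle to be error propagation across subroutines. Step~2 uses the learned support $\hat{\mathcal S}$ and diagonal estimates, and Step~3 must separate the Hamiltonian contribution $h_p^2 t/r$ from residual dissipator contributions already estimated in Steps~1--2. My approach would be to run Steps~1 and~2 at accuracy $\epsilon'=\epsilon^2 t/(c\,r)$ only for the subtraction needed in Step~3. If that is too costly, I would instead restrict Step~3's compressed sensing to Paulis outside $\hat{\mathcal S}^{(2)}$, where the dissipator contributes nothing at second order, so no subtraction is needed. The latter avoids the inflated cost and is the route I would pursue first.
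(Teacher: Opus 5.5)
Your proposal matches the paper's argument: run Algorithms \ref{alg:spam_on_diagonal}, \ref{alg:spam_off_diagonal} and \ref{alg:spam_hamiltonian} in sequence, invoke Propositions \ref{prop:spam_on_diagonal}, \ref{prop:spam_off_diagonal} and \ref{prop:spam_hamiltonian}, and observe that the Hamiltonian step dominates. Your extra Clifford experiment fixing the sign of $h_p$, which the coefficient $h_p^2t/r$ cannot determine, patches a point the paper glosses over. The ``main obstacle'' you anticipate does not arise, because the compressed-sensing step recovers the entire $\mathcal O(M^2)$-sparse diagonal vector jointly, including the large first-order $\chi_{kk}$ entries, with an $\ell_2$ error independent of their size, so you simply read off the entries outside $\hat{\mathcal S}^{(2)}$; you must keep the entries inside $\hat{\mathcal S}^{(2)}$ among the unknowns rather than dropping them, since they enter every probe's exponent at a size far exceeding the $\epsilon^2t/r$ signal.
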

Consequently, we can indeed estimate all the gauge-independent components of the Lindbladian using evolution time $T = \text{poly}(M, n, 1/\epsilon)$. Nevertheless, one important caveat of this technique is that solving the resulting linear system requires nontrivial classical postprocessing in general. In particular, to solve the $\ell_1$ optimization problem used to compute the compressed sensing solution via a technique like LARS \cite {efron_least_2004}, we require a computational cost of $\mathcal O(KN^2) = \widetilde{\mathcal O}(16^nMn)$, which is exponential in the number of qubits.

To get around this issue, we can also consider the case where the support of both the Lindbladian and Hamiltonian are known a priori, which is the case under certain reasonable physical assumptions such as $k$-locality. Firstly, this means that the separate Hamiltonian term measuring procedure is no longer required, as these terms can simply be measured during the typical cross-term analysis. Furthermore, from a computational perspective, knowing the support transforms the $\ell_1$ optimization problem into a more tractable $\ell_2$ optimization problem, which gives us polynomial runtime, as analyzed in Appendix \ref{chap:linear}. This then gives us the following theorem:
\begin{theorem}
    Given an unknown $n$-qubit Lindbladian $\mathcal L[\rho] = \sum_{i, j} \chi_{ij}P_i\rho P_j$ with known support superset $\mathcal S$ for the dissipator and $\mathcal S_H$ for the Hamiltonian, where $|\mathcal S \cup \mathcal S_H| \le M$ we can learn the gauge-independent components of the chi matrix $\hat \chi_{ij}$ in the presence of SPAM error $\varepsilon_\text{SPAM} \le 1 - \mathcal O(1)$ in evolution time
    $$T = \mathcal O\left(\frac{M^5n \log^2 M}{\epsilon^3}\right)$$
    and classical post-processing time $\mathcal O(M^5n \log^2 M)$ such that for all gauge-independent $\chi_{ij}$,
    $$|\hat \chi_{ij} - \chi_{ij}| \le \epsilon,$$
    with high probability.
\end{theorem}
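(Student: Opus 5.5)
The argument reuses the SPAM-robust pipeline above, with one change: the support is now given, so the whole Hamiltonian-specific branch (Proposition \ref{prop:spam_hamiltonian}, which dominates the $M^{13}/\epsilon^8$ cost) and the $\ell_1$ compressed-sensing recovery can both be removed.

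\textbf{Step 1 (Hamiltonian coefficients handled as off-diagonals).} I would first observe that, once $\mathcal S_H$ is known, each Hamiltonian coefficient $\text{Im}[\chi_{0k}]$ with $k\in\mathcal S_H$ is just an off-diagonal entry $(0,k)$ of the Chi matrix. By Table \ref{tab:transformation_2}, the Clifford $C_{0k}$ moves this entry into linear order on the diagonal. This works for exactly the gauge-independent pairs, namely $\text{Re}$ for anticommuting and $\text{Im}$ for commuting Paulis. Since the identity commutes with every $P_k$, $C_{0k}\propto I+iP_k$ picks out $\text{Im}[\chi_{0k}]=-h_k$, so no quadratic-order twirling is needed. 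Consequently the coefficient support to be probed is
\[
(\mathcal S\times\mathcal S)\cup(\{0\}\times(\mathcal S_H\cup\mathcal S^{(2)}))\cup((\mathcal S_H\cup\mathcal S^{(2)})\times\{0\}),
\]
and Algorithm \ref{alg:spam_off_diagonal} can be run on it with $t=\mathcal O(\epsilon/M^2)$. The Taylor and Hoeffding analysis behind Proposition \ref{prop:spam_off_diagonal} then applies unchanged and gives evolution time $\mathcal O(M^5 n\log^2 M/\epsilon^3)$. The diagonal entries come from Algorithm \ref{alg:spam_on_diagonal}, whose cost $\mathcal O(M^2 n\log^2 M/\epsilon^2)$ from Proposition \ref{prop:spam_on_diagonal} is lower order. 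SPAM robustness follows as before: twirling turns SPAM into Pauli channels, Lemma \ref{lem:spam_eigenvalue_bound} bounds their eigenvalues away from zero, and the SPAM factor cancels in the ratio of evolved to unevolved measurements.

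\textbf{Step 2 (replace $\ell_1$ by least squares).} With the support known, each probe $Q_k$ yields a linear equation in at most $\mathcal O(M)$ unknowns, namely the diagonal rates $\chi'_{jj}$ for $j$ in the known support. The coefficients are $\pm2$-type symplectic indicators $\langle P_j,Q_k\rangle$. I would stack the $K=\mathcal O(Mn\log^2M)$ equations into a $K\times M$ matrix $A$ and solve by least squares, $\hat x=(A^\dagger A)^{-1}A^\dagger \hat\gamma$. The per-entry accuracy $\mathcal O(\epsilon)$ of the measured rates must then transfer to the solution. Appendix \ref{chap:linear} gives this, provided $\sigma_{\min}(A)/\sqrt K$ is bounded below by a constant with high probability. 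I would argue this as follows. For random $Q$, the columns $\langle P_j,Q\rangle$ are pairwise-independent fair bits for distinct nonzero $j$, so the Gram matrix $A^\dagger A/K$ concentrates around a well-conditioned matrix, $\tfrac14(I+J)$ after centering. A matrix Chernoff bound then needs only $K=\mathcal O(M\log M)$ probes, which fits within the stated $K$.

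\textbf{Step 3 (post-processing cost).} Forming $A^\dagger A$ costs $\mathcal O(KM^2)$. Each entry of $A$ is a symplectic product computed in $\mathcal O(n)$ time. The resulting $M\times M$ system is solved in $\mathcal O(M^3)$. I would tally this per off-diagonal block and per diagonal run against the $\mathcal O(M^5n\log^2M)$ bound. Because the $\mathcal O(M^2)$ pair-specific systems all share the same support matrix structure, the factorization can be reused across them.

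The main obstacle is Step 2's conditioning argument together with its interaction with the shifted rates $\chi'$ produced by $\mathcal C_{ab}$. After conjugation, the support is that of $C\mathcal L$, a Clifford image of $\mathcal S$, so I need the conditioning bound to hold uniformly over all $\mathcal O(M^2)$ rotated supports. I would handle this by a union bound over the pairs, which adds only a $\log M$ factor.
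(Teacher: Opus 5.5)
Your proposal is correct and follows the paper's overall structure. The Hamiltonian coefficients $\text{Im}[\chi_{0k}]$ for $k\in\mathcal S_H$ are treated as commuting-pair off-diagonals and learned by the Clifford procedure of Algorithm~\ref{alg:spam_off_diagonal}. This removes the $M^{13}/\epsilon^8$ branch, so the evolution time is that of Proposition~\ref{prop:spam_off_diagonal}. The $\ell_1$ program is then replaced by least squares on the support-restricted columns.

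The genuine difference is how you justify the conditioning of that restricted matrix.
\begin{itemize}
\item The paper gets $\sigma_{\min}$ of the truncated matrix from the $(M,\delta)$-RIP of the subsampled Walsh--Hadamard matrix (Theorem~\ref{thm:sub_sample}). This costs $K=\mathcal O(Mn\log^2 M)$ probes. In exchange it holds uniformly over every $M$-sparse support, so one probe set serves all $\mathcal O(M^2)$ pair-specific systems without a union bound. The same property also underwrites the unknown-support analysis.
\item You use pairwise independence of the symplectic bits for a fixed known support, then matrix Chernoff, then a union bound over pairs. This is elementary and self-contained. Uniformity over all $M$-subsets of the $4^n$ Paulis is what produces the $\log N=\mathcal O(n)$ factor in RIP, and you avoid it, so you need only $K=\mathcal O(M\log M)$.
\end{itemize}
Your smaller $K$ would tighten both the evolution-time and post-processing bounds by roughly $n\log M$. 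You should use it in your Step~3 tally. If you charge $\mathcal O(n)$ per entry of $A$, as you do, the paper's $K$ gives $\mathcal O(M^4 n^2\log^2 M)$ over all pairs just to build the matrices. That exceeds the stated post-processing bound when $n\gg M$.

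Three smaller points.
\begin{itemize}
\item $\tfrac14(I+J)$ is the uncentered second moment of the $0/1$ indicators, not a centered matrix. It has condition number $M+1$, so it is not well-conditioned. What you need, and have, is $\lambda_{\min}=1/4$. Using $\pm1$ entries instead (the form $\lambda_Q$ actually takes in the off-diagonal probes, with an all-ones identity column) gives exactly $I$.
\item The rate support of $\mathcal C_{ab}\circ\mathcal L$ is the translate $\{a\oplus c,\ b\oplus c : c\in\mathcal S\cup\{0\}\}$, not a Clifford image of $\mathcal S$. It is still known and of size $\mathcal O(M)$, so your argument is unaffected.
\item Reusing one factorization across pairs is unjustified, because the column sets differ from pair to pair. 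It is also unnecessary for the bound.
\end{itemize}
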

This then gives us a similar protocol to Algorithm \ref{alg:spam_robust_lindblad_learn}, where instead of using $\ell_1$ optimization, we can use $\ell_2$ optimization constraining the measurement matrix based on the support sets given. Furthermore, the Hamiltonian learning protocol is no longer necessary, since we can instead learn all the off diagonals using $\hat{\mathcal S}_H$ and Algorithm \ref{alg:spam_off_diagonal}. For detailed analysis of the time complexity of solving the linear system with knowledge of the support, see Appendix \ref{chap:linear}.

\chapter{Conclusions and Outlook}

In this thesis, we developed the first efficient ansatz-free Lindbladian learning algorithm with polynomial-time classical postprocessing. We achieved this by using Bell sampling to probe the support of both the dissipative and Hamiltonian parts of the Lindbladian and then measure the corresponding coefficients based on the probability of sampling each outcome. Furthermore, we provide the first characterization of Lindbladian gauge dependence in the presence of unstructured SPAM error as well as the first efficient SPAM-robust procedure for learning gauge-independent components.

From an experimental standpoint, Lindbladian learning offers a method of characterizing Markovian noise in quantum devices \cite{severin_learning_2026}. Unlike data-intensive techniques such as quantum process tomography, this algorithm facilitates sample- and time-efficient reconstruction of sparse, nonlocal error models without assuming structural knowledge \textit{a priori}. Nevertheless, by its very definition, the Lindbladian framework is fundamentally limited to memoryless dynamics. Capturing non-Markovian effects ubiquitously observed on spin-boson and superconducting systems instead requires alternative methods such as memory kernel or neural network-based approaches \cite{li_probing_2025, scarpetta_machine_2025}.
 
Regarding broader theoretical applications, Lindbladian learning presents a potential candidate for achieving quantum advantage in Markov random field learning. Markov random fields (MRFs) represent the dependency structure of high-dimensional probability distributions via graphical models, where edges encode conditional dependencies between random variables and joint distributions are defined by clique potentials on the vertices \cite{chow_approximating_1968}. Classically, learning $k$-local MRFs from i.i.d. samples is constrained by the noisy parity barrier, necessitating a runtime of $n^{\Theta(k)}$ \cite{hamilton_information_2017, bresler_structure_2014}. While recent breakthroughs have demonstrated that sampling from Glauber dynamics can bypass this barrier to achieve learning in time $\mathcal O(n^2 \log n)$ assuming constant locality \cite{gaitonde_bypassing_2025}, these classical methods struggle as the interactions become non-local or highly complex. By leveraging the ability to learn sparse, nonlocal Lindbladians efficiently, one could potentially reconstruct the generators of these dynamics efficiently. This offers a pathway to learning nonlocal MRFs that may be classically intractable.

Despite these applications and theoretical promises, there remain several important technical challenges. In particular, ansatz-free Lindbladian learning appears to be bottlenecked by Hamiltonian support learning, which requires $1/\epsilon^4$ scaling of evolution time to learn due to the dominance of the dissipative terms and the inability to directly suppress them. Consequently, it remains an open question whether the Hamiltonian terms can be learned more efficiently in the presence of Lindbladian noise \cite{ivashkov_ansatz-free_2026}. Given that rescaling and inverting dynamics enables Heisenberg-limited scaling of noiseless ansatz-free Hamiltonian learning \cite{hu_ansatz-free_2025}, it remains to be seen whether techniques for approximately inverting Lindbladian dynamics, such as the continuous Petz recovery map, can provide analogous improvements in the noisy setting \cite{kwon_reversing_2022}.

Similar challenges also arise in the SPAM-robust procedure, where the need to simultaneously estimate the coefficients to high precision and have the error be sufficiently small produces an evolution time on the order of $1/\epsilon^8$. While this could be reduced to $1/\epsilon^3$ by assuming knowledge of the Hamiltonian support a priori, it remains to be seen whether we can learn the Hamiltonian terms in less time using techniques such as Prony's method. Moreover, while we outline a universal learning algorithm that works for all Lindbladians, there could be certain coefficients that are still learnable for large classes of Lindbladians. Consequently, it remains an open question how these components can be learned, and how exactly gauge-dependence restricts the applicability of these approaches.

Future research directions will focus on resolving these open questions and addressing the practical constraints of near-term quantum hardware. First, an important objective is to investigate the aforementioned approaches for improving the scaling of Hamiltonian learning in noisy settings. Specifically, it remains to be seen whether approximate inversion techniques, spectral estimation methods such as Prony's method, or alternative dynamical protocols can reduce the evolution-time requirements for learning Hamiltonian components. 

Beyond these error-scaling optimizations, another direction of practical interest is investigating the specific gauge degrees of freedom that arise when the experimenter's control is strictly restricted to noiseless single-qubit unitaries. While we assumed access to arbitrary ideal $n$-qubit control operations for our analysis, multi-qubit gates suffer from poor fidelity in practice and are deeply limited by hardware topology. Characterizing the emergent gauge degrees of freedom under a restricted single-qubit control architecture will therefore provide a more experimentally relevant notion of learnability.

\noindent\textit{Note.}
During the completion of this thesis, we became aware of concurrent independent work by \cite{romanov_learning_2026}, which also explores ansatz-free Lindbladian learning, with a focus on using adaptive quantum error correction to recover Heisenberg-limited scaling for learning the Hamiltonian component and standard-quantum-limited scaling for learning the entire Lindbladian.

\newpage

\bibliographystyle{unsrt}
\bibliography{references}

\appendix

\chapter{Noiseless Learning Error and Complexity Analysis}
\label{chap:support_learn_proofs}

In this chapter, we will rigorously prove the central results of the Lindblad learning algorithm in Chapter \ref{chap:noiseless_lindblad_learn}. We begin by deriving the complexity results for support learning, after which we then prove the results for coefficient learning.

\section{Support Learning Sample Complexity and Time Evolution}

In the main text, we developed a method to learn the support of the Lindbladian by performing Bell sampling at different time resolutions and checking what Paulis we sample. The aim of this section is to rigorously justify the support learning complexities claimed in Proposition \ref{prop:support_learn} and Corollary \ref{cor:dissipator_support_learn}.

\propsupportlearn*

\cordissipatorsupportlearn*

\begin{proof}
    To find the optimal evolution time, we must first perform error analysis on the probabilities of sampling Paulis in the support of the Lindbladian. We begin by analyzing the support of the jump operators. Analyzing the Taylor series error, we have that
    \begin{align*}
        \varepsilon_\text{Taylor} &\le \mathcal O\left(\frac{t^2}{2}\langle \Psi|(I \otimes P_k)(I \otimes \mathcal L^2)[|\Psi\rangle \langle \Psi|](I \otimes P_k)|\Psi\rangle\right) \\
        &= \mathcal O\left(\frac{t^2}{2}\sum_{a, b}\chi_{ab}\chi_{a'b'}\langle \Psi|(I \otimes P_k)(I \otimes P_aP_{a'})|\Psi\rangle \langle \Psi|(I \otimes P_{b'}P_b)(I \otimes P_k)|\Psi\rangle\right) \\
        &= \mathcal O\left(\frac{t^2}{2}\sum_{a, b}\chi_{ab}\chi_{a'b'}\langle \Psi|(I \otimes P_kP_aP_{a'})|\Psi\rangle \langle \Psi|(I \otimes P_{b'}P_bP_k)|\Psi\rangle\right)
    \end{align*}
    Since there are at most $\lfloor M / 2 \rfloor$ Pauli pairs $P_j, P_j' \in \text{supp}(\mathcal L)$ s.t. $P_jP_j' \propto P_k$, we have that assuming $\epsilon \le |\chi_{ab}| \le 1$,
    $$\varepsilon_\text{Taylor} \le \mathcal O(M^2t^2)$$
    for $t \ll 1$. In order for the error to be deemed insignificant, it must affect the probability of sampling a given Pauli by at most a constant factor times the probability. Since the worst-case probability is $\text{Pr}[P_k] \approx \epsilon t$, we require
    $$\varepsilon_\text{Taylor} \le \mathcal O(M^2t^2) \le \frac{\epsilon t}{C} \implies t \le \frac{\epsilon}{CM^2}$$
    for some large constant $C \gg 1$. Since the $\ell$th degree diagonal term in the time-evolution operator can receive contributions from at most $\mathcal O(M^{2(\ell - 1)})$ terms, we have that any choice of $t < \mathcal O(1/M^2)$ is sufficient for the contribution from the higher-order terms to be asymptotically insignificant.
    
    Hence, choosing $t = \Theta(\epsilon/M^2)$, we have that
    $$\text{Pr}[P_k] \ge \Omega(\epsilon t) = \Omega\left(\frac{\epsilon^2}{M^2}\right)$$
    We now find the optimal sample complexity to cover the entire support of the Lindblad jump operators with high probability. Using a coupon-collector style argument, we have that the probability of never sampling one of the Paulis in the support is at most
    $$\text{Pr}[\text{missing Pauli}] \le M\left(1 - \Omega\left(\frac{\epsilon^2}{M^2}\right)\right)^N \le \delta$$
    Hence, we have that to succeed with probability at least $1 - \delta$, we require sample complexity
    $$N = \mathcal O\left(\frac{M^2\log (M/\delta)}{\epsilon^2}\right)$$
    Since we evolve for time $t = \mathcal O(\epsilon/M^2)$, we have that the total time evolution required is
    $$T = \mathcal O \left(\frac{\log (M/\delta)}{\epsilon}\right)$$
    Consequently, we can learn the entire support of the jump operators in Heisenberg-limited evolution time, up to logarithmic factors in $M$.
    
    We can now perform the same analysis for the Hamiltonian terms that are not supported by the jump operators. Just as before, we consider the leading order error term, which, using the same reasoning as before, scales as
    $$\varepsilon_\text{Taylor}^H \le \mathcal O(M^4t^3)$$
    Since $\Pr[P_k] \approx \frac{h_k^2t^2}{2}$, we have that the worst-case error is given as $\Pr[P_k] \approx \frac{\epsilon^2t^2}{2}$. Consequently, we have that for some large constant $C_H \gg 1$, we require
    $$\varepsilon_\text{Taylor}^H \le \mathcal O(M^4t^3) \le \frac{\epsilon^2t^2}{2} \implies t \le \mathcal O\left(\frac{\epsilon^2}{M^4}\right)$$
    Hence, we have that the probability of sampling a Hamiltonian element is lower bounded as
    $$\Pr[P_k] \ge \Omega\left(\frac{\epsilon^6}{M^8}\right)$$
    Consequently, repeating the same coupon-collector proof from before, we have that the required sample complexity is
    $$N = \mathcal O\left(\frac{M^8 \log (M / \delta)}{\epsilon^6}\right)$$
    Given the size of each time-step, we therefore have that the total evolution time required is given as
    $$T = \mathcal O\left(\frac{M^4 \log(M/\delta)}{\epsilon^4}\right)$$
    As mentioned previously, we will assume $\delta \sim \mathcal O(1)$ to be a very small constant. 
\end{proof}

\section{Oversampling the Support}

Having established these complexities, we now demonstrate that oversampling only introduces logarithmic overhead via Lemma \ref{lem:support_size}.

\lemsupportsize*

\begin{proof}
    We begin by analyzing the number of non-identity Paulis we sample for the dissipator support. From Algorithm \ref{alg:dissipator_support_learn} we have that the sample complexity is chosen to be
    $$N = \mathcal O\left(\frac{M^2 \log (M/\delta)}{\epsilon^2}\right)$$
    Based on the Taylor expansion we performed, we have that the sum of the Taylor error bounds for each individual Pauli in the support of the dissipator forms an upper bound on the probability of sampling outside the support of the dissipator. In particular, since $\varepsilon_\text{Taylor} \le \mathcal O(M^2t^2)$ for each Pauli, we have that the probability of sampling outside the support is upper bounded as
    $$p_\text{out} \le \mathcal O(M^3t^2) = \mathcal O(\epsilon^2/M)$$
    Consider a set of indicators $\{X_i\}_{i \in [N]}$ where $X_i = 1$ if sample $i$ lies outside the support and 0 otherwise. We have from the multiplicative Chernoff bound that
    $$\Pr\left(\sum_{i = 1} ^N X_i \ge \alpha Np_\text{out}\right) \le \exp\left(-\frac{(\alpha - 1)^2Np_\text{out}}{\alpha + 1}\right)$$
    Since $Np_\text{out} \le \Theta(M \log M)$, we have that
    $$\Pr\left(M' \ge \alpha \cdot \Theta(M \log M)\right) \le \exp\left(-\Theta(\alpha M \log M)\right),$$
    where $M' = \sum_i X_i$ is the number of Paulis sampled outside the support. Consequently, we have that with high probability, $M' \le \mathcal O(\alpha M \log M)$ for some sufficiently large constant $\alpha$. Consequently, we have that
    $$|\hat {\mathcal S}| \le M + M' = \mathcal O(M \log M)$$
    with high probability.
    
    For the Hamiltonian support, we have that
    $$N_H = \mathcal O\left(\frac{M^8 \log (M/\delta)}{\epsilon^6}\right)$$
    Here, we have that since the second-order terms in the Lindbladian are all in $\mathcal S_H$, we simply care about the third-order error term, meaning
    $$p_\text{out, $H$} \le \mathcal O(M^5t_H^3) = \mathcal O(\epsilon^6/M^7)$$
    Performing the same Chernoff analysis as before with $N_Hp_\text{out, $H$} \le \mathcal O(M \log M)$, we get the same bound for the number $M'_H$ of Paulis sampled outside the true support set $\mathcal S_H$:
    $$\Pr\left(M'_H \ge \alpha \cdot \Theta(M \log M)\right) \le \exp\left(-\Theta(\alpha M \log M)\right),$$
    Consequently, we have that
    $$|\hat{\mathcal S}_H| \le |\hat {\mathcal S}|^2 + M_H' \le |\hat {\mathcal S}|^2 + \mathcal O(M \log M)$$
    with high probability.
\end{proof}

\section{Diagonal Coefficient Learning Complexity}

Having learned the support, we now focus on the evolution time required to learn the coefficients. Starting with the diagonal coefficients, we now rigorously recover the time complexity stated in Proposition \ref{prop:on_diagonal}.

\propondiagonal*

\begin{proof}
    We previously observed that by evolving for time $t = \mathcal O(\epsilon/M^2)$, the probability of sampling a given diagonal coefficients $p_{kk}$ differs from its linear order approximation $\chi_{kk}t$ by at most $\frac{\epsilon^2}{CM^2}$ for some large constant $C \gg 1$. Since we have to rescale $\chi_{kk}t$ by $\frac{M^2}{\epsilon}$, we have that with high probability,
    $$|\tilde \chi_{kk} - \chi_{kk}| \le \frac{\epsilon}{C},$$
    where $\tilde \chi_{kk}$ is the true empirical coefficient. Consequently, if we can obtain an estimate $\hat \chi_{kk}$ for $\tilde \chi_{kk}$ with accuracy threshold at most $\left(1 - \frac{1}{C}\right)\epsilon$, we can get
    $$|\hat \chi_{kk} - \chi_{kk}| \le \epsilon,$$
    with high probability, as desired. Hence, this procedure allows us to learn these coefficients to $\epsilon$-accuracy. 
    
    Since we know that this time-resolution is sufficient to learn this coefficient to $\epsilon$-accuracy, we now consider the sample complexity necessary to achieve this. Since learning each coefficient to $\epsilon$-accuracy implies learning the individual probabilities $\chi_{kk}t$ to $\mathcal O(\epsilon^2/M^2)$ accuracy, Hoeffding's inequality states that
    $$\Pr\left(|\hat \chi_{kk} - \chi_{kk}|t \ge \frac{\epsilon^2}{C'M^2}\right) \le 2\exp\left(-2N\frac{\epsilon^4}{C'^2M^4}\right)$$
    Taking a union bound over all $k \in \hat {\mathcal S}$, we have that
    $$\Pr\left(\forall k: |\hat \chi_{kk} - \chi_{kk}|t \ge \frac{\epsilon^2}{C'M^2}\right) \le \mathcal O(M \log M) \cdot \exp\left(-2N\frac{\epsilon^4}{C'^2M^4}\right) \le \delta$$
    Consequently, we have to learn all the diagonal coefficients to $\epsilon$-accuracy with probability at least $1 - \delta$, we require sample complexity
    $$N = \mathcal O\left(\frac{M^4 \log (M/\delta)}{\epsilon^4}\right)$$
    With our choice of time resolution, this means that the total evolution time required for this step is
    $$T = \mathcal O\left(\frac{M^2 \log (M/\delta)}{\epsilon^3}\right)$$
\end{proof}

\section{Off-Diagonal Coefficient Complexity}

Now that we have learned the diagonal coefficients, we focus on the off-diagonals. For the off-diagonal analysis, we begin proving that the normalization factor is close to $1/2$, as stated in Lemma \ref{lem:normalization}.

\lemnormalization*

\begin{proof}
    We had that the renormalization factor as
    $$\kappa_{ij}^\pm = \text{Tr}(K^\pm_{ij} e^{\mathcal Lt}[\rho] (K^\pm_{ij})^\dag)),$$ 
    which corresponds to the probability of applying $K_{ij}^+$ and $K_{ij}^-$, respectively. We can write these probabilities in term of expectation values as follows. Using (\ref{eq:kraus_normalization}), we have that $\kappa_{ij}^\pm$ must be one of either $\frac{1 + \langle P_k\rangle}{2}$ or $\frac{1 - \langle P_k\rangle}{2}$. 
    
    In order to bound this expectation value, we use the previously derived fact that for Pauli $P_k$,
    $$\text{Tr}((\mathcal E - \mathcal E')[\rho] P_k) \le \|\mathcal E - \mathcal E'\|_\diamond$$
    Taking $\mathcal E = e^{t\mathcal L}$ and $\mathcal E' = \mathcal I$, this means that we can bound the distance between the expectation value before and after evolution as $\|\mathcal e^{\mathcal Lt} - \mathcal I\|_\diamond$. Since $\mathcal L$ is $M$-sparse, we have that $\chi(\mathcal L)$ as at most $\mathcal O(M^2)$ nonzero elements, all of which have magnitude at most 1, giving us that
    $$\|\mathcal L\|_\diamond \le \mathcal O(M^2) \implies \|e^{t\mathcal L} - \mathcal I\|_\diamond \le e^{\mathcal O(tM^2)} - 1$$
    Taking $t = \mathcal O(\epsilon/M^2)$ means that this bound effectively becomes $O(\epsilon)$. For any Bell state, we have that $\langle P_k \rangle = 0$ for $P_k \ne I$, as operating with a Pauli on one side of the Bell pairs produces an orthogonal $n$-qubit Bell state. Consequently, this means that
    $$|\text{Tr}(e^{t\mathcal L}[\rho]P_k)| = \mathcal O(\epsilon)$$
    Consequently, we have that
    $$\left|\kappa_{ij}^\pm - \frac{1}{2}\right| \le \mathcal O(\epsilon)$$
\end{proof}

Having shown that $\kappa_{ij}^\pm$ is close to $1/2$, we now show that the evolution time required matches that of Proposition \ref{prop:off_diagonal}.

\propoffdiagonal*

\begin{proof}
    To ensure that we estimate $\kappa_{ij}^\pm$ to the desired accuracy, we perform the following analysis. We have that the probability $p_k$ of sampling a given Pauli $P_k$ can be estimated to the following accuracy:
    $$|\hat \kappa_{ij}^\pm \hat p_k - \kappa_{ij}^\pm p_k| \le |\hat \kappa_{ij}^\pm - \kappa_{ij}^\pm|\hat p_k + \kappa_{ij}^\pm|\hat p_k - p_k| \le \mathcal O(\epsilon t)$$
    Assuming we measure $\hat p_k$ to $\mathcal O(\epsilon t)$-accuracy, this means that taking $\hat p \sim \mathcal O(1)$ in the maximal case implies that a sufficient accuracy to estimate $\hat \kappa_{ij}^\pm$ is $\mathcal O(\epsilon t) = \mathcal O(\epsilon^2/M^2)$. Consequently, the number of samples originally used to estimate the coefficients is itself sufficient to estimate the probabilities of choosing each ancilla qubit to the desired accuracy.
    
    Altogether, we have that each off-diagonal coefficient can be learned using
    $$N = \mathcal O\left(\frac{M^4 \log (1/\delta)}{\epsilon^4}\right)$$
    samples, or
    $$T = \mathcal O\left(\frac{M^2 \log (1/\delta)}{\epsilon^3}\right)$$
    evolution time. Since we have to learn $\mathcal O(M^2)$ off-diagonals, we have that the total sample complexity and evolution time are given as follows:
    $$N = \mathcal O\left(\frac{M^6 \log(1/\delta)}{\epsilon^4}\right)$$
    $$T = \mathcal O\left(\frac{M^4 \log(1/\delta)}{\epsilon^3}\right)$$
\end{proof}

\chapter{Gauge Degree of Freedom Analysis}
\label{chap:gauge}

In this chapter, we will rigorously prove the unlearnability results in Chapter \ref{chap:spam_lindblad_learn}. We begin by deriving the transformed error channels before working with the transformed Lindbladians.

\section{Error Channel Gauge Transformation}

We now set about proving Lemma \ref{lem:gauge_error_channel} by showing that there exist transformed error channels that satisfy the desired conditions.

\lemgaugeerrorchannel*

\begin{proof}
    As stated before, we consider the transformation 
    \begin{align*}
        \mathcal E_1 &\mapsto \mathcal E_1' = e^{-\theta \Gamma} \circ \mathcal E_1 \\
        \mathcal E_2 &\mapsto \mathcal E_2' = \mathcal E_2 \circ e^{\theta \Gamma}
    \end{align*}
    In the $n$-qubit case, we define the generator $\Gamma$ by
    $$\Gamma[\rho] = \frac{1}{4^n}\left[(4^n - 1)\rho - \sum_{i \ne 0}P_i \rho P_i\right]$$
    Computing the Pauli transfer matrix of this generator, we have that
    $$\mathcal R(\Gamma) = \begin{pmatrix}
        0 & 0 \\
        0 & \mathbf I
    \end{pmatrix} \implies \mathcal R(e^{-\theta \Gamma}) = \begin{pmatrix}
        1 & 0 \\
        0 & e^{-\theta}\mathbf I
    \end{pmatrix}$$
    For this mapping to produce a gauge transformation, we must have that the transform channels are still CPTP and that they still satisfy the SPAM-robustness criterion:
    $$\| \mathcal E_1' - \mathcal I \|_\diamond + \| \mathcal E_2' - \mathcal I \|_\diamond \le \varepsilon_\text{SPAM}$$
    Since our error channels are uniformly depolarizing, we can express them as follows:
    $$\mathcal E_i[\cdot] = (1 - p_i)[\cdot] + p_i\text{Tr}[\cdot]\frac{I}{2^n}$$
    We can bound the diamond distance between $\mathcal E_i[\rho]$ and $\mathcal I$ as follows:
    $$\|\mathcal E_i - \mathcal I\|_\diamond = 2p_i\left(1 - \frac{1}{4^n}\right) \le 2p_i$$
    Consequently, we can choose $p_1 + p_2 \le \frac{\varepsilon_\text{SPAM}}{2}$ to ensure that the SPAM constraint is satisfied. We now check when $\mathcal E_1', \mathcal E_2'$ are CPTP. We have that
    $$\mathcal R(\mathcal E_i) = \begin{pmatrix}
        1 & 0 \\
        0 & (1 - p_i)\mathbf I
    \end{pmatrix}$$
    Hence, we have that
    \begin{align*}
        \mathcal R(\mathcal E_1') &= \begin{pmatrix}
            1 & 0 \\
            0 & e^{-\theta}(1 - p_1)\mathbf I
        \end{pmatrix} \\
        \mathcal R(\mathcal E_2') &= \begin{pmatrix}
            1 & 0 \\
            0 & e^{\theta}(1 - p_2)\mathbf I
        \end{pmatrix}
    \end{align*}
    Hence, we have that this transformation simply yields new depolarizing channels with probabilities $p_1', p_2'$ given as follows:
    \begin{align*}
        p_1' &= 1 - e^{-\theta}(1 - p_1) \\
        p_2' &= 1 - e^{\theta}(1 - p_2)
    \end{align*}
    Here, we impose two constraints on $p_1'$ and $p_2'$. Firstly, we need to satisfy the SPAM constraint, meaning that
    $$p_1' + p_2' = 2 - e^{-\theta}(1 - p_1) - e^{\theta}(1 - p_2) \le \frac{\varepsilon_\text{SPAM}}{2}$$
    Secondly, while $0 \le p_1' \le 1$, we have that $p_2'$ can be negative if $\theta$ is left unconstrained. Hence, we must have that
    $$p_2' = 1 - e^{\theta}(1 - p_2) \ge 0 \implies \theta \le -\ln(1 - p_2)$$
    Using the first derivation condition, we find that the $\theta$ that maximizes the LHS of the first inequality is given as:
    $$\theta^* = \frac{1}{2}\ln\left(\frac{1 - p_1}{1 - p_2}\right)$$
    Substituting this into the original expression, we have that
    \begin{align*}
        p_1' + p_2' &\le 2 - e^{-\theta^*}(1 - p_1) - e^{\theta^*}(1 - p_2) \\ 
        &= 2 - e^{-\frac{1}{2}\ln\left(\frac{1 - p_1}{1 - p_2}\right)}(1 - p_1) - e^{\frac{1}{2}\ln\left(\frac{1 - p_1}{1 - p_2}\right)}(1 - p_2) \\
        &= 2 - \sqrt{\frac{1 - p_2}{1 - p_1}}(1 - p_1) - \sqrt{\frac{1 - p_1}{1 - p_2}}(1 - p_2) \\
        &= 2 - 2\sqrt{(1 - p_1)(1 - p_2)} \\
        &\le \frac{\varepsilon_\text{SPAM}}{2}
    \end{align*}
    Hence, we must have that
    \begin{align*}
        2 - 2\sqrt{(1 - p_1)(1 - p_2)} &\le \frac{\varepsilon_\text{SPAM}}{2} \\
        (1 - p_1)(1 - p_2) &\ge \left(1 - \frac{\varepsilon_\text{SPAM}}{4}\right)^2 \\
        1 - p_1 - p_2 + p_1p_2 &\ge 1 -  \frac{\varepsilon_\text{SPAM}}{2} + \frac{\varepsilon_\text{SPAM}^2}{16} \\
        p_1 + p_2 - p_1p_2 &\le \frac{\varepsilon_\text{SPAM}}{2} - \frac{\varepsilon_\text{SPAM}^2}{16}
    \end{align*}
    We can observe that this bound is saturated by choosing $p_1 = p_2 = \varepsilon_\text{SPAM}/4$, which also satisfies the necessary constraint for the original channel to be SPAM-robust. For this choice of $p_1, p_2$, we have that
    $$0 \le \theta \le -\ln(1 - \varepsilon_\text{SPAM}/4)$$
    Hence, for this choices of parameters, this transformation produces valid quantum channels $\mathcal E_1', \mathcal E_2'$ that satisfy the SPAM-robust requirement.
\end{proof}

\section{Single-Qubit Lindbladian Gauge Transformation}

Having shown the validity of the error channel, we now demonstrate the validity of the transformed Lindbladian. Starting with the single-qubit case, we now prove Lemma \ref{lem:gauge_single_lindbladian}.

\lemgaugesinglelindbladian*

\begin{proof}
    Recall that for any operator $A$,
    $$\text{Tr}[\mathcal L(A)] = 0$$
    Hence, we have that the Pauli transfer matrix of the Lindbladian is given as
    $$\mathcal R(\mathcal L) = \begin{pmatrix}
        0 & 0 \\
        \mathbf r & \mathbf M
    \end{pmatrix}$$
    Consequently, we have that
    $$\mathcal R(e^{-\theta \Gamma}\mathcal Le^{\theta \Gamma}) = \begin{pmatrix}
        0 & 0 \\
        e^{-\theta}\mathbf r & \mathbf M
    \end{pmatrix}$$
    In other words, we have that
    $$\mathcal L'[\rho] = \mathcal L[\rho] - (1 - e^{-\theta})(\mathbf r \cdot \mathbf P)\mathcal D[\rho],$$
    where $\mathbf P = (X, Y, Z)$ and $\mathcal D$ is a uniform depolarizing channel. We can show that this PTM corresponds to a Lindbladian geometrically. Computing the PTM for $e^{t\mathcal L}$, we have that
    $$\mathcal R(e^{t\mathcal L}) = \begin{pmatrix}
        1 & \mathbf 0 \\
        \int_0 ^t e^{(t - s)\mathbf M} \mathbf r \; ds & e^{t\mathbf M}
    \end{pmatrix}$$
    Let us denote $\mathbf d(t) = \int_0 ^t e^{(t - s)\mathbf M} \mathbf r \; ds$. In general, we have that a CPTP map transforms the Bloch sphere into a rotated ellipse that lies entirely within said sphere, with the center being located at $\mathbf d(t)$ \cite{king_minimal_2001}. Consequently, shifting the center to $e^{-\theta}\mathbf d(t)$ means that the ellipsoid still lies within the Bloch sphere. Hence, this is a valid Lindbladian.
\end{proof}

\section{Multi-Qubit Lindbladian Gauge Transformation}

Since there exist Lindbladians that are not mapped to valid Lindbladians under this type of gauge transformation, we instead focus on proving the existence of Lindbladians that can be transformed. To this end, we prove Lemma \ref{lem:gauge_multi_lindbladian} as follows:

\lemgaugemultilindbladian*

\begin{proof}
    Suppose we have a Lindbladian $\mathcal L$ where
    $$\mathcal L(I) = R$$
    for some $R \in \mathcal P_n$. We have that in the PTM representation, this element can be expressed as the following operator expansion by performing left multiplication on a uniformly depolarizing channel:
    $$|R\rangle\rangle\langle\langle I| = \frac{1}{4^n}\sum_{j}RP_j\rho P_j$$
    Consequently, if we want $\mathcal L'(I) = e^{-\theta}R$, we can subtract this expression from the original Lindbladian giving us the following:
    $$\mathcal L'[\rho] = \mathcal L[\rho] - \frac{1 - e^{-\theta}}{4^n}\sum_{j}RP_j\rho P_j$$
    To ensure that $\mathcal L'$ is a Lindbladian, we can have the coefficients for all the terms being subtracted be equal to $\frac{1}{4^n}$ in the original $\mathcal L$, as this means that choosing $\theta > 0$ means we effectively supress these terms exponentially, with them going to 0 as $\theta \to \infty$.

    To get terms of the form $RP_j\rho P_j$, we can choose jump operators of the form
    $$L_j = \frac{1}{2^{n + 1/2}}(P_j + RP_j)$$
    Note that these jump operators need not be unique. In particular, consider $P_k = \omega RP_j$. We have that
    $$L_k = \frac{1}{2^{n + 1/2}}(P_k + RP_k) = \frac{\omega}{2^{n + 1/2}}(P_j + RP_j) = \omega L_j$$
    Hence, we have that $L_k$ and $L_j$ are effectively identical jump operators, as they differ only by a phase. Consequently, to account for this double counting, we include the additional factor of $1/\sqrt{2}$. Consequently, we have that
    $$L_j\rho L_j^\dag + L_k \rho L_k^\dag = \frac{1}{4^n}(P_j + RP_j)\rho (P_j + P_jR) = \frac{1}{4^n}[P_j \rho P_j + RP_j \rho P_j + RP_k \rho P_k + P_k \rho P_k]$$
    Hence, we have that these $L_j, L_k$ jump operators collectively capture both the $RP_j \rho P_j$ and $RP_k \rho P_k$ terms. Hence, if we now iterate over all the Pauli terms with this construction, we would have that
    $$\sum_j L_j \rho L_j^\dag = \frac{1}{4^n}\sum_j[P_j\rho P_j + RP_j \rho P_j]$$
    Hence, the gauge transformation simply suppresses the second term here. Nevertheless, we have not yet accounted for the normalization terms $-\frac{1}{2}\{L_j^\dag L_j, \rho\}$. Evaluating $L_j ^\dag L_j$, we have that
    $$L_j^\dag L_j = \frac{1}{2^{2n + 1}}(P_j + P_jR)(P_j + RP_j) = \frac{1}{4^n}(I + P_jRP_j)$$
    We note that $P_jRP_j = \pm R$ depending on whether $P_j$ and $R$ commute or not. Since $R$ commutes with exactly half of the Paulis, we have that
    $$-\frac{1}{2}\sum_j \{L_j^\dag L_j, \rho\} = -\frac{1}{2}\{I, \rho\} = -\rho$$
    Hence, we have that the normalization term only affects the on-diagonal of the Chi matrix. Altogether, this means that which this choice of jump operators, the entire Lindbladian $\mathcal L$ is given as
    $$\mathcal L[\rho] = \frac{1}{4^n}\sum_j[P_j\rho P_j + RP_j \rho P_j] - \rho$$
    Consequently, we have that the transformed Lindbladian is given as
    $$\mathcal L'(\theta)[\rho] = \frac{1}{4^n}\sum_j[P_j\rho P_j + e^{-\theta}RP_j \rho P_j] - \rho$$
    Showing that this particular $\mathcal L'$ is still a Lindbladian follows in straightforward fashion. In particular, taking $\theta \to \infty$, we have that
    $$\mathcal L'(\infty)[\rho] = \frac{1}{4^n}\sum_j[P_j\rho P_j - \rho],$$
    which is simply a Pauli Lindbladian. We can then write $\mathcal L'(\theta)[\rho]$ as the following convex combination:
    $$\mathcal L'(\theta)[\rho] = e^{-\theta}\mathcal L[\rho] + (1 - e^{-\theta})\mathcal L'(\infty)[\rho]$$
    Since $\mathcal L(\theta)[\rho]$ is a convex combination of Lindbladians for all $\theta \ge 0$, we have that $\mathcal L'(\theta)[\rho]$ is a valid Lindbladian.

    We now find the components of this Lindbladian that are gauge-dependent. In particular, we can see that when $[R, P_j] = 0$, we have that $RP_j = \pm P_k$ for some $P_k \in \mathcal P_n$. Hence, we have that the term $RP_j \rho P_k + RP_k \rho P_k$ affects the real component of $\chi_{jk}$. On the other hand, if $\{R, P_j\} = 0$, we have that $RP_j = \pm iP_k$. Consequently, we have that $RP_j \rho P_k + RP_k \rho P_k$ affects the imaginary component of $\chi_{jk}$. Since the commutation relation between $P_j$ and $R$ is identical to that between $P_j$ and $P_k$, we have that $\text{Re}[\chi_{jk}]$ for commuting $P_j, P_k$ and $\text{Im}[\chi_{jk}]$ for anticommuting $P_j, P_k$ are the gauge-dependent coefficients.

    Furthermore, since we can perform this Lindbladian construction for all $R \in \mathcal P_n$, we have that this extends to all choices of $P_j, P_k$.
\end{proof}

\chapter{SPAM-Robust Learning Error Analysis}\label{chap:spam_lindblad_proofs}

Having demonstrated which components are unlearnable, we now consider how to learn the remaining components in evolution time $T = \text{poly}(M, n, 1/\epsilon)$. In this section, we will prove all the results pertaining to SPAM-robust learning presented in Chapter \ref{chap:spam_lindblad_learn}.

\section{Twirled Lindblad Evolution}

To bound the error in simulating evolution under the twirled Lindbladian, we prove Lemma \ref{lem:lindblad_pauli_twirl}.

\lemlindbladpaulitwirl*

\begin{proof}
    Using a telescoping sum argument, we have that
    $$\|\mathcal E'_{t, r} - \mathcal E_{t}\|_\diamond \le r\|\mathcal E'_{t/r, 1} - \mathcal E_{t/r}\|_\diamond$$
    Expanding this out to second order, we have that
    $$\|\mathcal E'_{t/r, 1} - \mathcal E_{t/r}\|_\diamond = \|\mathbb E_{P \sim \mathcal P_n}[Pe^{t\mathcal L/r}[P\cdot P]P] - e^{t\mathcal L_\text{tw}/r}\|_\diamond = \mathcal O\left(\frac{t^2}{r^2}\|(\mathcal L^2)_\text{tw} - \mathcal L_\text{tw}^2\|_\diamond\right)$$
    In order to bound this diamond norm term, we use the fact that for a superoperator $\mathcal A: X \mapsto AX$, $\|\mathcal A\|_\diamond \le \|A\|_\infty$. Consequently, we have that for any superoperator $\mathcal B[\rho] = \sum_{i, j}\chi_{ij}P_i \rho P_j$, we have  that
    $$\|\mathcal B\|_\diamond \le \sum_{i, j}|\chi_{ij}|$$
    Since each element of the chi matrix of $\mathcal L$ has absolute value at most 1 and at most $\mathcal O(M^2)$ terms contribute to each element of the chi matrix of $\mathcal L^2$, we have that $\|(\mathcal L^2)_\text{tw} \|_\diamond \le \mathcal O(M^3)$ and $\|(\mathcal L_\text{tw})^2 \|_\diamond \le \mathcal O(M^2)$.
    Hence, we have altogether that
    $$\|\mathcal E'_{t, r} - \mathcal E_{t}\|_\diamond \le \mathcal O\left(\frac{M^3t^2}{r}\right)$$
    Note that since $k$th order error terms scale as $\frac{t^kM^{2k + 1}}{r^k}$, we have that this bound holds for $t/r \le \mathcal O(M^{-2})$
\end{proof}

\section{On-Diagonal Uncertainty Analysis}

Having bounded the Trotterization error, we now determine the accuracy to which we need to measure the exponential eigenvalue in order to recover the exponent to the desired accuracy. In the main text, we accomplished this using Lemma \ref{lem:eigen_uncertainty}. We now prove this as follows:

\lemeigenuncertainty*

\begin{proof}
    For $f = e^{\gamma t}$, we have that to estimate $\gamma$ to accuracy on the order of $\eta$, we need to estimate $f$ to accuracy
    $$\delta f \approx \left|\frac{\delta f}{\delta \gamma}\right|\delta \gamma \sim \mathcal O\left(\eta te^{\gamma t}\right)$$
    This means that if we evolve for times $t$ and $2t$, we need to estimate $\frac{f_2}{f_1}$ to accuracy $\mathcal O(\eta te^{\gamma t})$. We have that
    $$\left|\frac{\delta (f_2/f_1)}{f_2/f_1}\right| \approx \frac{\delta f_2}{f_2} + \frac{\delta f_1}{f_1} \sim \eta t \implies \delta f_1 \sim \mathcal O(A\eta te^{\gamma t}), \delta f_2 \sim \mathcal O(A\eta te^{2\gamma t})$$
\end{proof}

Since $A = \lambda_{1, Q}\lambda_{2, Q}$, we now demonstrate how to lower bound these quantities via Lemma \ref{lem:spam_eigenvalue_bound}. To accomplish this, we first prove the following helper lemma:
\begin{lemma}[Twirling Diamond Distance]
    Given a CPTP map $\mathcal E$, and its Pauli-twirled version $\mathcal E_\text{tw}[\cdot] = \frac{1}{4^n}\sum_i P_i \mathcal E[P_i \cdot P_i]P_i$, we have that
    $$\| \mathcal E_\text{tw} - \mathcal I \|_\diamond \le \| \mathcal E - \mathcal I \|_\diamond,$$
    where $\mathcal I$ is the identity channel.
\end{lemma}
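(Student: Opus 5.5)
The plan is to write the difference $\mathcal E_\text{tw} - \mathcal I$ itself as a twirl and then use convexity of the diamond norm together with its unitary invariance. The key observation is that the identity channel is invariant under Pauli twirling: for every $P_i \in \mathcal P_n$ we have $P_i \mathcal I[P_i \cdot P_i] P_i = \mathcal I[\cdot]$, since $P_i^2 = I$. Consequently
$$\mathcal E_\text{tw} - \mathcal I = \frac{1}{4^n}\sum_i \mathcal P_i \circ (\mathcal E - \mathcal I) \circ \mathcal P_i,$$
where $\mathcal P_i[\cdot] = P_i \cdot P_i$ denotes the unitary (and self-inverse) Pauli conjugation channel.

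Next I apply the triangle inequality for the diamond norm to this average. This gives
$$\|\mathcal E_\text{tw} - \mathcal I\|_\diamond \le \frac{1}{4^n}\sum_i \|\mathcal P_i \circ (\mathcal E - \mathcal I) \circ \mathcal P_i\|_\diamond.$$
It then suffices to show that each summand equals $\|\mathcal E - \mathcal I\|_\diamond$, because the average of $4^n$ equal terms is that common value.

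For the unitary invariance, I use the definition of the diamond norm from the notation section. Write $\Phi = \mathcal E - \mathcal I$ and pick $X$ with $\|X\|_1 \le 1$. Then
$$\big((\mathcal P_i\circ\Phi\circ\mathcal P_i)\otimes \mathcal I\big)X = (P_i\otimes I)\,\big[(\Phi\otimes\mathcal I)\,X'\big]\,(P_i\otimes I),$$
where $X' = (P_i\otimes I)X(P_i\otimes I)$ still satisfies $\|X'\|_1 = \|X\|_1 \le 1$. The trace norm is invariant under conjugation by the unitary $P_i\otimes I$, and the map $X\mapsto X'$ is a bijection of the trace-norm unit ball. Hence the maximum defining the diamond norm is unchanged, giving $\|\mathcal P_i\circ\Phi\circ\mathcal P_i\|_\diamond = \|\Phi\|_\diamond$.

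No step here is a serious obstacle. The only point needing care is the unitary invariance, which amounts to checking the bijection of the unit ball carefully; Lemma~\ref{lem:superop_diamond} alone would give only the inequality $\le$, which already suffices for the claim.
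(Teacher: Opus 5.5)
Your proposal is correct and takes the same route as the paper. Both arguments rewrite $\mathcal E_\text{tw} - \mathcal I$ as the Pauli twirl of $\mathcal E - \mathcal I$ using the twirl-invariance of $\mathcal I$, apply the triangle inequality, and use unitary invariance of the diamond norm to see that each of the $4^n$ summands equals $\|\mathcal E - \mathcal I\|_\diamond$. Your explicit check of that invariance, via the bijection $X \mapsto (P_i \otimes I)X(P_i \otimes I)$ of the trace-norm unit ball, fills in the step the paper only cites.
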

\begin{proof}
    Using triangle inequality, the invariance of diamond distance under unitary transformations, and the fact that $\mathcal I$ is invariant under twirling, we have that
    \begin{align*}
        \| \mathcal E_\text{tw} - \mathcal I \|_\diamond &= \left\| \frac{1}{4^n}\sum_i P_i \mathcal E[P_i \cdot P_i]P_i - \mathcal I \right\|_\diamond \\
        &= \left\| \frac{1}{4^n}\sum_i P_i (\mathcal E - \mathcal I)[P_i \cdot P_i]P_i \right\|_\diamond \\
        &\le \frac{1}{4^n}\sum_i \left\|P_i (\mathcal E - \mathcal I)[P_i \cdot P_i]P_i \right\|_\diamond \\
        &= \frac{1}{4^n}\sum_i \left\|\mathcal E - \mathcal I \right\|_\diamond \\
        &= \left\|\mathcal E - \mathcal I \right\|_\diamond
    \end{align*}
\end{proof}
With this, we are now ready to prove Lemma \ref{lem:spam_eigenvalue_bound}:

\lemspameigenvaluebound*

\begin{proof}
    For any Pauli channel $\mathcal E = \sum_P p_PP\rho P$, we have that
    $$\|\mathcal E - \mathcal I\|_\diamond = 2(1 - p_I) \implies p_I = 1 - \frac{1}{2}\|\mathcal E - \mathcal I\|_\diamond$$
    since diamond distance is twice the total variation distance of the weights. Since each eigenvalue is a linear combination of the Pauli weights with $+/-$ factors based on commutation relations, we have that the smallest eigenvalue can be lower bounded as
    $$\lambda_\text{min} \ge p_I - \sum_{P \ne I}p_P = 2p_I - 1 = 1 - \|\mathcal E - \mathcal I\|_\diamond$$
    Consequently, for the SPAM error channels $\mathcal E_1', \mathcal E_2'$, we have that
    $$\lambda_{1, Q}\lambda_{2, Q} \ge 1 - \|\mathcal E_1' - \mathcal I\|_\diamond - \|\mathcal E_1' - \mathcal I\|_\diamond \ge 1 - \|\mathcal E_1 - \mathcal I\|_\diamond - \|\mathcal E_1 - \mathcal I\|_\diamond \ge 1 - \varepsilon_\text{SPAM}$$
    Consequently, so long as $\varepsilon_\text{SPAM} < 1$, we have that
    $$|\lambda_{1, Q}\lambda_{2, Q}| \ge 1 - \varepsilon_\text{SPAM}$$
\end{proof}

Choosing $t = \mathcal O(1/M)$ and $\eta = \mathcal O(\epsilon)$, we now can prove Proposition \ref{prop:spam_on_diagonal}:

\propspamondiagonal*

\begin{proof}
    Assuming that $\varepsilon_\text{SPAM} < 1 - \mathcal O(1)$, we have that by Lemmas \ref{lem:eigen_uncertainty} and \ref{lem:spam_eigenvalue_bound}, we just need to learn each eigenvalue to accuracy $\mathcal O(\epsilon/M)$. Revising the Trotter step analysis in Lemma \ref{lem:lindblad_pauli_twirl}, this means that
    $$\frac{M^3t^2}{r} \sim \frac{\epsilon}{M} \implies r \sim \frac{M^2}{\epsilon}$$
    By Hoeffding's inequality, we need to sample
    $$N = \mathcal O\left(\frac{M^2 \log (1 / \delta)}{\epsilon^2}\right)$$
    to estimate each eigenvalue. Since each experiment uses $t = O(1/M)$ time, this means that the time per probe is given as $\mathcal O\left(\frac{M \log (1 / \delta)}{\epsilon^2}\right)$. Intuitively, the number of Pauli probes scales approximately linearly with the size of the support up to logarithmic factors. In particular, in Appendix \ref{chap:linear}, we show that using $K = \mathcal O(Mn\log^2 M)$ probes is sufficient to solve for the coefficients.
    
    Consequently, we have that the overall evolution time required to learn the diagonal coefficients is given as
    $$T = \mathcal O \left(\frac{M^2n\log^2 M}{\epsilon^2}\right)$$
\end{proof}

\section{Off-Diagonal Uncertainty Analysis}

Performing this same analysis for Proposition \ref{prop:spam_off_diagonal}, we have the following:

\propspamoffdiagonal*

\begin{proof}
    We begin by expanding out
    $$(\mathcal C_{ij}\circ e^{\mathcal Lt})[\rho] \approx \mathcal C_{ij}[\rho + t\mathcal L[\rho] + \dots] \approx \frac{1}{2}P_i \rho P_i + \frac{1}{2}P_j \rho P_j + t\sum_{k}\chi_{k}'P_k \rho P_k + (\text{non-diag})$$
    
    For small time $t$, we therefore have that the eigenvalues of the twirl of this are given as
    $$\lambda_Q = \frac{1}{2}[(-1)^{\langle P_i, Q\rangle} + (-1)^{\langle P_j, Q\rangle}] + t\sum_{k}(-1)^{\langle P_k, Q \rangle}\chi_{kk}' = a + bt,$$
    Since the eigenvalue can potentially be extremely small here, we need to instead measure $A = \lambda_{1, Q}\lambda_{2, Q}$ directly without evolving. Denoting estimates $y_1 = A\lambda_Q$ and $y_0 = A$, we have that
    $$\delta \lambda_Q \sim \frac{\delta y_1}{A} + \frac{\delta y_0}{A}|\lambda_Q| = \mathcal O(\epsilon t)$$
    Since $|\lambda_Q| \le 1$, we have that choosing $\delta y_1, \delta y_0 \sim \mathcal O(A \epsilon t)$ is sufficient. Since we require $\epsilon t \sim M^2t^2$, we set $t = \epsilon/M^2$ just as we did for the ideal learning protocol. Hence, we will estimate these eigenvalues to accuracy $\epsilon^2/M^2$. This gives us a sample complexity of
    $$N = \mathcal O\left(\frac{M^4 \log(1/\delta)}{\epsilon^4}\right)$$
    and evolution time of
    $$T = \mathcal O\left(\frac{M^2 \log(1/\delta)}{\epsilon^3}\right)$$
    per Pauli probe. Since the number of coefficients we have to estimate is of the same order as before, we require $K = \mathcal O(Mn \log^2 M)$, giving us a total evolution time per coefficient of
    $$T = \mathcal O\left(\frac{M^3n \log^2 M}{\epsilon^3}\right)$$
    Since we have $\mathcal O(M^2)$ off-diagonals to estimate, this gives us a total evolution time of
    $$T = \mathcal O\left(\frac{M^5n \log^2 M}{\epsilon^3}\right)$$
\end{proof}

\section{Hamiltonian Uncertainty Analysis}

For the Hamiltonian terms, we need to extend the twirling result as demonstrated in Lemma \ref{lem:lindblad_pauli_twirl_quad}.

\lemlindbladpaulitwirlquad*

\begin{proof}
    Using a telescoping sum argument, we have that
    $$\|\mathcal E'_{t, r} - \mathcal E_{t, r}\|_\diamond \le r\|\mathcal E'_{t/r, 1} - \mathcal E_{t/r, 1}\|_\diamond$$
    Expanding this out to third order, we have that
    \begin{align*}
        \|\mathcal E'_{t/r, 1} - \mathcal E_{t/r}\|_\diamond &= \|\mathbb E_{P \sim \mathcal P_n}[Pe^{t\mathcal L/r}[P\cdot P]P] - e^{t\mathcal L_\text{tw}/r + \frac{t^2}{2r^2}[(\mathcal L^2)_\text{tw} - (\mathcal L_\text{tw})^2]}\|_\diamond \\ 
        &\le \mathcal O\left(\frac{t^3M^5}{r^3}\right)
    \end{align*}
    The bound here is obtained from the fact that each of the diagonal elements of $\mathcal L^k$ is of order at most $\mathcal O(M^{2k})$, meaning that twirling the Lindbladian and taking the sum gives a bound of $\mathcal O(M^{2k + 1})$. 
    Hence, we have altogether that
    $$\|\mathcal E'_{t, r} - \mathcal E_{t, r}\|_\diamond \le \mathcal O\left(\frac{M^5t^3}{r^2}\right)$$
\end{proof}

Using the same error analysis we did for Proposition \ref{prop:spam_on_diagonal}, we can compute the evolution time for learning the Hamiltonian terms, as described in Proposition \ref{prop:spam_hamiltonian}.

\propspamhamiltonian*

\begin{proof}
    Substituting $\epsilon \to \epsilon^2t/r$ in the on-diagonal derivation means that we need to estimate the exponential eigenvalues to accuracy $\mathcal O(\epsilon^2t^2/r)$. Since the diamond distance that we computed in Lemma \ref{lem:lindblad_pauli_twirl_quad} bounds the accuracy we can estimate the eigenvalue, we need
    $$\frac{\epsilon^2t^2}{r} \sim \frac{M^5t^3}{r^2} \implies r \sim \frac{M^5t}{\epsilon^2}$$
    Consequently, this means that we need to estimate each eigenvalue to accuracy $\mathcal O(\epsilon^4t/M^5)$. Choosing the maximal $t \sim \mathcal O(1/M)$, this means that the required estimation accuracy is $\mathcal O(\epsilon^4/M^6)$.
    Hence, Hoeffding's inequality gives us a sample complexity of
    $$N = \mathcal O\left(\frac{M^{12} \log (1/\delta)}{\epsilon^8}\right)$$
    Furthermore, because we evolve for time $t = \mathcal O(1/M)$, this means that each Pauli probe requires evolution time
    $$T = \mathcal O\left(\frac{M^{11} \log (1/\delta)}{\epsilon^8}\right)$$
    However, since our signal has potential support on $\mathcal O(M^2)$ possible Paulis, we require $K = \mathcal O(M^2 n \log^2 M)$ Pauli probes. Hence, the total evolution time is given as
    $$T = \mathcal O\left(\frac{M^{13}n \log^2 M}{\epsilon^8}\right)$$
\end{proof}

\chapter{Solving the Linear System}\label{chap:linear}

In the main text, we mentioned that the linear system for the coefficients obtained via the SPAM-robust technique can be solved efficiently. We will explicitly demonstrate this result in this chapter. From before, we had that each Pauli probe for a given Pauli $Q$ allowed us to noisily estimate the linear combination $-2t\sum_P \langle P, Q\rangle\gamma_P$.

\section{Compressed Sensing Problem}

This type of linear system we aim to solve falls under the umbrella of compressed sensing \cite{candes_stable_2006}. In particular, the compressed sensing problem as framed as follows:
\begin{problem}[Compressed Sensing]
    Suppose we are given noisy data $\mathbf y = \mathbf A\mathbf x_0 + \mathbf e$, where $\mathbf A \in \mathbb C^{K \times N}$, where $K \ll N$, $x_0 \in \mathbb C^N$ is taken to be $M$-sparse, meaning that there are at most $M$ nonzero coefficients in $\mathbf x_0$, and the error satisfies $\|\mathbf e\|_2 \le \eta$. How do we recover a $M$-sparse signal $x^\sharp$ such that
    $\|\mathbf x^\sharp - \mathbf x_0\|_2 \le \epsilon$, for some threshold $\epsilon$?
\end{problem}
Note that $\|\mathbf x^\sharp - \mathbf x_0\|_2 \le \epsilon \implies \|\mathbf x^\sharp - \mathbf x_0\|_\infty \le \epsilon$, meaning that the solution to this problem is sufficient for our purposes. This problem can be formulated as the following convex program:
$$\min \|\mathbf x\|_1 \quad \text{subject to} \quad \|\mathbf A\mathbf x - \mathbf y\|_2 \le \eta$$
Under certain assumptions on $\mathbf A$, it can be shown that solving this convex program yields a solution $\mathbf x^\sharp$ s.t.
$$\|\mathbf x^\sharp - \mathbf x_0\| \le C \eta$$
for some constant $C$ \cite{candes_stable_2006}. In particular, if $\mathbf A$ obeys the restricted isometry property (RIP), the bound above holds. This property is defined as follows:
\begin{definition}[Restricted Isometry Property]
    Let $\mathbf A \in \mathbb C^{K \times N}$ and $1 \le M \le K$ be an integer. If for every $M$-sparse vector $\mathbf v \in \mathbb C^N$ (i.e. $\|\mathbf v\|_0 \le M$), we have that
    $$(1 - \delta)\|\mathbf v\|_2^2 \le \|\mathbf A\mathbf v\|_2^2 \le (1 + \delta)\|\mathbf v\|_2^2,$$
    then $\mathbf A$ is said to be $(M, \delta)$-RIP.
\end{definition}
Consequently, if we can show that the measurement matrix used for the Pauli probes satisfies RIP with high probability for a reasonable number of measurements, we can determine a sufficient bound on error for the eigenvalue exponents to recover the Lindbladian coefficients to the desired accuracy.

\section{Relation with the Walsh-Hadamard Transform}

In order to demonstrate that our measurement matrix satisfies RIP, we consider its relation to the Walsh-Hadamard matrix.

\begin{definition}[Walsh-Hadamard Operator]
    Using the Hadamard operator $\mathbf H_1 = X + Z$, we define the Walsh-Hadamard operator $\mathbf H_n = \mathbf H_1^{\otimes n}$.
\end{definition}

We can then relate these Hadamard operators to the Pauli pattern matrix as follows.

\begin{proposition}
    The $n$-qubit Pauli Pattern matrix $\mathbf A_n = (a_{ij})$, where $a_{ij} = \langle P_i, P_j\rangle$ is related to the Walsh-Hadamard operator via
    $$\mathbf 1_n - 2\mathbf A_n = \mathbf H_{2n},$$
    for all $n \in \mathbb N$, where $\mathbf 1_n$ is a $4^n \times 4^n$ matrix of ones.
\end{proposition}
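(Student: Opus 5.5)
We prove the identity entrywise. Indexing each Pauli by its $2n$-bit string $b=(b^x,b^z)$, the entry of the $\pm1$ matrix $\mathbf 1_n - 2\mathbf A_n$ at position $(i,j)$ is $1-2\langle i,j\rangle = (-1)^{\langle i,j\rangle}$, since $\langle i,j\rangle\in\{0,1\}$. The goal is therefore to show that $(-1)^{\langle i,j\rangle}$ is the $(i,j)$ entry of the $2n$-fold tensor power of the single-qubit Hadamard pattern. We should state explicitly that ``$\mathbf H_1 = X+Z$'' is read as the unnormalized $2\times 2$ matrix $\begin{pmatrix}1&1\\1&-1\end{pmatrix}$, whose entry at $(u,v)$ is $(-1)^{uv}$. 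Consequently $\mathbf H_{2n}$ has entries $(-1)^{u\cdot v}$ for $u,v\in\mathbb F_2^{2n}$, with $u\cdot v$ the ordinary dot product mod 2.

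The key step is to express the symplectic form as an ordinary dot product after a fixed relabelling. Over $\mathbb F_2$ the minus sign is irrelevant, so $\langle i,j\rangle = i^x\cdot j^z + i^z\cdot j^x = i\cdot Jj$, where $J$ is the permutation swapping the $x$- and $z$-halves of a $2n$-bit string. Hence
$$(\mathbf 1_n - 2\mathbf A_n)_{ij} = (-1)^{i\cdot (Jj)} = (\mathbf H_{2n})_{i,Jj}.$$
This shows that $\mathbf 1_n-2\mathbf A_n = \mathbf H_{2n}\Pi$ for the permutation matrix $\Pi$ induced by $J$.

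To obtain literal equality, we choose the ordering of the Pauli basis so that this column permutation is absorbed. Order rows by $(i^x,i^z)$ and columns by $(j^z,j^x)$. Equivalently, note that the paper enumerates Paulis only up to a labelling, so the proposition holds under the convention that column labels are written in swapped order. Alternatively, one can argue qubit-by-qubit with an induction on $n$. For $n=1$, the $4\times4$ commutation sign table of $\{I,X,Y,Z\}$ is checked directly to equal $\mathbf H_1\otimes\mathbf H_1$ up to this relabelling. For the inductive step, the sign $(-1)^{\langle i,j\rangle}$ factorizes over qubits because the symplectic form is a sum of per-qubit forms. This gives $\mathbf 1_n-2\mathbf A_n = (\mathbf 1_1-2\mathbf A_1)^{\otimes n}$ (in qubit-interleaved ordering), and hence $\mathbf H_2^{\otimes n}=\mathbf H_{2n}$.

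The main obstacle is this bookkeeping of index orderings. Taken literally, $(-1)^{\langle i,j\rangle}$ equals $\mathbf H_{2n}$ only up to a row or column permutation (the $x/z$ swap, together with the interleaving of qubits in the tensor product). I would therefore fix the ordering convention explicitly at the start of the proof. I would also remark that for the downstream RIP argument the permutation is harmless: the RIP of a row-subsampled matrix is invariant under permuting its columns, so the Haviv--Regev subsampled-Hadamard result applies unchanged.
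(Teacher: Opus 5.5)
Your argument is correct, and its main line differs from the paper's. The paper argues by induction on the number of qubits. It displays the $4\times 4$ commutation table of $\{I,X,Y,Z\}$ as the base case. It then uses the qubit-wise factorization of the commutation sign to write $\mathbf B_k = \mathbf B_{k-1}\otimes\mathbf B_1 = \mathbf H_{2(k-1)}\otimes\mathbf H_2$.

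You instead observe that over $\mathbb F_2$ the symplectic form equals $i\cdot Jj$. This gives $\mathbf 1_n - 2\mathbf A_n = \mathbf H_{2n}\Pi_J$ for every $n$ at once, directly in the paper's $(b^x,b^z)$ labelling. The tensor-product induction, by contrast, silently switches to a qubit-interleaved ordering. Your fallback induction is essentially the paper's proof.

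What your route buys is that it exposes a point the paper gets wrong. The matrix the paper displays for $\mathbf B_1$ is the correct commutation table. However, it is $\mathbf H_1\otimes\mathbf H_1$ with its two middle rows swapped, not $\mathbf H_2$ itself. So the asserted base case $\mathbf B_1=\mathbf H_2$ is false as written, and the induction only yields $\mathbf B_k=\mathbf H_{2k}$ up to a column permutation.

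In fact, no ordering of the Paulis used for both rows and columns can give the literal identity. The matrix $\mathbf 1_n-2\mathbf A_n$ has all-ones diagonal, since every Pauli commutes with itself. The diagonal of $\mathbf H_{2n}$ is $(-1)^{u\cdot u}$, which equals $-1$ at every odd-weight $u$. A simultaneous row/column permutation only permutes the diagonal. Adding this one-line remark would show that your mismatched row/column convention is forced, not merely convenient.

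Your closing observation also matters: column permutations preserve RIP of a row-subsampled matrix. That is exactly what keeps the paper's subsequent appeal to the Haviv--Regev subsampling theorem valid.
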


\begin{proof}
    We begin by observing that defining
    $$\mathbf B_n = \mathbf 1_n - 2\mathbf A_n,$$
    which gives us a matrix $\mathbf B_n$, where
    $$(\mathbf B_n)_{ij} = \begin{cases}
        +1 & [P_i, P_j] = 0 \\
        -1 & \{P_i, P_j\} = 0
    \end{cases}$$
    
    We then continue via induction. Consider the base case $n = 1$. Here, we have that
    $$\mathbf B_1 = \mathbf H_2 = \begin{pmatrix}
        1 & 1 & 1 & 1 \\
        1 & 1 & -1 & -1 \\
        1 & -1 & 1 & -1 \\
        1 & -1 & -1 & 1
    \end{pmatrix}$$
    Consequently, we trivially have that the base case holds. 

    For the induction step, we assume that this statement holds for $n = k - 1$. We can write any $k$-qubit Pauli in the form $P_i = P_i^{(-k)} \otimes P_i^{(k)}$, where $P_i^{(-k)}$ is an $n$-qubit Pauli. Let us define the bitstring corresponding to the $k - 1$-qubit Pauli as $i_{-k}$ and that for the Pauli on the last qubit as $i_k$. Consequently, we have that
    $$P_iP_j = (P_i^{(-k)} \otimes P_i^{(k)})(P_j^{(-k)} \otimes P_j^{(k)}) = (\mathbf B_{k - 1})_{i_{-k}j_{-k}}(\mathbf B_{1})_{i_{k}j_{k}} P_jP_i$$
    Since $(\mathbf B_{k})_{ij} = (\mathbf B_{k - 1})_{i_{-k}j_{-k}}(\mathbf B_{1})_{i_{k}j_{k}}$, we then have by our induction hypothesis that
    $$\mathbf B_k = \mathbf B_{k - 1} \otimes \mathbf B_1 = \mathbf H_{2(k - 1)} \otimes \mathbf H_2 = \mathbf H_{2k}$$
    Hence, since these pattern matrices have the same tensor product relation, we have that they are identical.
\end{proof}
Furthermore, recall that $\sum_P \gamma_P = 0$ by the definition of a Pauli Lindbladian. Consequently, we have that $\mathbf 1_{n}\boldsymbol \gamma = 0$. Consequently, if we scale each $\langle Q\rangle$ by $\frac{2}{t}$, we will obtain a linear system with measurement matrix given by $\mathbf H_{2n}$. Consequently, sampling random Pauli probes can be understood to effectively be sampling from rows of this Walsh-Hadamard transform.

We can demonstrate that a measurement matrix composed by subsampling from the Walsh-Hadamard transform satisfies RIP with high probability using the following theorem:
\begin{theorem}[Unitary Sub-Sampling \cite{haviv_restricted_2017}]\label{thm:sub_sample}
    For sufficiently large $N$ and $k$, a unitary matrix $U$ satisfying $\|U\|_\infty \le \mathcal O(1/\sqrt N)$, where $\| \cdot \|_\infty$ denotes the $\ell_\infty$ norm on the elements of $U$, and a sufficiently small $\delta > 0$, the following holds. For some $K = \mathcal O(\log^2 (1/\delta)\delta^{-2} \cdot M \cdot \log^2 (M/\delta) \cdot \log N)$, let $\mathbf A \in \mathbb C^{K \times N}$ be a matrix whose $K$ rows are chosen uniformly and independently from the rows of $U$, multiplied by $\sqrt{N/K}$. Then, with probability $1 - 2^{-\Omega(\log N \cdot \log(M/\delta))}$, the matrix $\mathbf A$ is $(M, \delta)$-RIP.
\end{theorem}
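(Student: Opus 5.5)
Since Theorem~\ref{thm:sub_sample} is imported from \cite{haviv_restricted_2017}, the plan is to reconstruct its proof along the standard Rudelson--Vershynin route for bounded orthonormal systems. Haviv and Regev refine the chaining step of that route, and the refinement will follow the original paper.

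\textbf{Setup.} Let $\mathbf u_1,\dots,\mathbf u_K$ be the sampled rows of $\sqrt N\, U$. The hypothesis $\|U\|_\infty \le \mathcal O(1/\sqrt N)$ gives $\|\mathbf u_i\|_\infty \le \mathcal O(1)$, and unitarity gives $\mathbb E[\mathbf u_i\mathbf u_i^\dag] = \mathbf I$. Let $T=\{\mathbf v : \|\mathbf v\|_2=1,\ \|\mathbf v\|_0\le M\}$, and note that $T\subseteq \sqrt M B_1$. The RIP constant is then
$$\delta_M=\sup_{\mathbf v\in T}\Big|\frac1K\sum_{i=1}^K|\langle \mathbf u_i,\mathbf v\rangle|^2-1\Big|.$$
The goal is to bound $\mathbb E\,\delta_M$ and then establish concentration of $\delta_M$ around its mean.

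\textbf{Expectation bound.} Symmetrization with Rademacher signs $\varepsilon_i$ reduces the problem to bounding
$$\mathbb E\sup_{\mathbf v\in T}\Big|\sum_i\varepsilon_i|\langle\mathbf u_i,\mathbf v\rangle|^2\Big|.$$
Conditioned on the $\mathbf u_i$, this is a subgaussian process. Its increments are controlled by the product of $\big(\sup_{\mathbf v\in T}\sum_i|\langle\mathbf u_i,\mathbf v\rangle|^2\big)^{1/2}$ and the seminorm $\|\mathbf v\|_X=\max_i|\langle\mathbf u_i,\mathbf v\rangle|$.

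To estimate covering numbers of $T$ in $\|\cdot\|_X$, we use Maurey's empirical method. Because $T\subseteq\sqrt M B_1$ and the $\mathbf u_i$ have bounded entries, each $\mathbf v$ can be approximated by an average of $\mathcal O(M\log K/\eta^2)$ signed scaled basis vectors. This yields $\log\mathcal N(T,\|\cdot\|_X,\eta)\lesssim M\log N\log K/\eta^2$.

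Feeding these covering numbers into Dudley's integral gives a self-bounding inequality,
$$\mathbb E\,\delta_M\lesssim \sqrt{\tfrac{M\log^2 M\log N\log K}{K}}\,\sqrt{1+\mathbb E\,\delta_M},$$
so that $\mathbb E\,\delta_M\le\delta/2$ once $K$ is of the stated order, up to logarithmic factors.

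\textbf{Main obstacle: the logarithmic factors.} Dudley's integral loses an extra logarithmic factor. To match the stated $K=\mathcal O(\delta^{-2}\log^2(1/\delta)\,M\log^2(M/\delta)\log N)$, the plan is to follow Haviv--Regev and replace the chaining argument with the following steps.
\begin{enumerate}
\item Discretize $T$ by Maurey approximations at geometric scales $2^{-j}$.
\item For each scale, take a union bound over the (at most $N^{\mathcal O(M 4^{j})}$) approximating vectors.
\item Apply a Chernoff-type bound to each fixed vector. The variance at scale $j$ is small, because the increments have small $\|\cdot\|_X$ norm.
\item Sum the resulting bounds over scales, which needs only $\mathcal O(\log(M/\delta))$ levels.
\end{enumerate}
Balancing the union-bound entropy against the Chernoff exponent at each scale is the delicate point; it is also where the factor $\log^2(1/\delta)\log^2(M/\delta)$ originates.

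\textbf{Concentration.} The union-bound argument above already gives a tail bound directly, with failure probability $2^{-\Omega(\log N\cdot\log(M/\delta))}$. Alternatively, one can apply Talagrand's concentration inequality for suprema of empirical processes, using the bounded-entry condition to control the individual summands. Either route shows that $\delta_M\le\delta$ with the probability asserted in Theorem~\ref{thm:sub_sample}.
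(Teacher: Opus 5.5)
The thesis does not prove this statement. It is quoted essentially verbatim from Haviv--Regev and used as a black box (for $U=2^{-n}\mathbf H_{2n}$ with constant $\delta$), so a citation is the whole argument there. Judged as a self-contained proof, your proposal has a genuine gap, and it sits exactly where the cited paper does its work.

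The Rudelson--Vershynin half is correct: symmetrization, the increment bound through $\|\cdot\|_X$, Maurey's estimate $\log\mathcal N(T,\|\cdot\|_X,\eta)=\mathcal O(M\log N\log K/\eta^2)$, Dudley, and the self-bounding inequality. As you note, though, it only shows that $K$ of order $\delta^{-2}M\log^2M\log N\log K$ suffices. Your Talagrand alternative cannot repair this. Talagrand only concentrates $\delta_M$ around $\mathbb E\,\delta_M$, so it yields the theorem only at a $K$ for which the expectation is already controlled. Everything that distinguishes the stated $K$ and failure probability therefore rests on your four-step refinement.

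That refinement, carried out as written, does not give the stated bound. There are two problems.

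\begin{enumerate}
\item \emph{The steps are mutually inconsistent.} Nets of size $N^{\mathcal O(M4^j)}$ correspond to Maurey approximations with $m_j\sim M4^j$ terms. These are accurate only in $\ell_2$, i.e.\ on most rows. Uniform smallness of $y_j-y_{j-1}$ in $\|\cdot\|_X$ needs an extra $\log K$ in $m_j$. Even then it constrains only the sampled rows, whereas the Chernoff bound for a fixed net element is over a random row of $U$. You would therefore need a truncation whose bias you control.

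\item \emph{Heavy rows break the Chernoff step.} The summand $|\langle a,y_j\rangle|^2-|\langle a,y_{j-1}\rangle|^2$ is bounded by $|\langle a,y_j-y_{j-1}\rangle|\,|\langle a,y_j+y_{j-1}\rangle|$. The second factor can be of order $\sqrt M$ on a $1/M$ fraction of rows. For $U=2^{-n}\mathbf H_{2n}$ as in the thesis, the normalized indicator $x$ of a subgroup of size $M$ has $|\langle a,x\rangle|^2=M$ on its annihilator. Even granting $|\langle a,y_j-y_{j-1}\rangle|=\mathcal O(2^{-j})$ on every row, the Bernstein range is $R_j\approx 2^{-j}\sqrt M$. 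A literal Bernstein-plus-union-bound over $N^{\Omega(M4^j)}$ pairs then forces $K=\Omega(M^{3/2}2^j\log N/\varepsilon_j)$. At the finest scales this is $K=\Omega(M^{3/2}\log N/\delta^2)$, a polynomial loss in $M$.
\end{enumerate}

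Avoiding these losses requires two ingredients: handling the few heavy rows separately, and working with approximations that are only accurate on most rows. That is the real content of the cited argument, and your outline contains neither.

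For the thesis you should simply cite the result. Alternatively, your completed Rudelson--Vershynin route gives $K=\mathcal O(Mn\log^2M\log(Mn))$ probes. That is enough for the polynomial-time claims, but not for the exact count $K=\mathcal O(Mn\log^2M)$ used in the paper.
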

Since $\frac{1}{2^n}\mathbf H_{2n}$ is unitary and satisfies the $\|\cdot\|_\infty$ bound in Theorem \ref{thm:sub_sample}, we have that assuming a small constant $\delta \sim \mathcal O(1)$, we can construct a measurement matrix that is $(M, \delta)$-RIP with high probability using $K = \mathcal O(Mn\log^2 M)$ Pauli probes. Consequently, after rescaling the coefficient matrix, this means that if $\mathbf A \in \mathbb C^{K \times N}$ represents our subsampled Walsh-Hadamard Matrix and $\mathbf y \in \mathbb C^K$ is the vector of learned $\sum_{P} (1 - 2\langle P, Q\rangle) \gamma_P$ values,
$$\left\|\frac{1}{\sqrt K}(\mathbf A \boldsymbol \gamma - \mathbf y)\right\|_2 \le \eta \implies \|\hat {\boldsymbol \gamma} - \boldsymbol \gamma\|_2 \le C \eta,$$
where $\hat {\boldsymbol \gamma}$ is the learned signal. Consequently, choosing $\eta = \mathcal O(\epsilon)$ means that we have to learn $\mathbf y$ s.t.
$$\|\mathbf A \boldsymbol\gamma - \mathbf y \|_2 \le \mathcal O(\epsilon \sqrt K),$$
which can be accomplished by simply learning the result from each experiment to accuracy $\mathcal O(\epsilon)$. Hence, learning each exponent in the experiment to accuracy $\mathcal O(\epsilon)$ is sufficient to learn the coefficients to accuracy on the order of $\epsilon$.

\section{Compressed Sensing with Known Support}

Lastly, we analyze the compressed sensing setup when the support of the signal is known. Here, we assume that the signal has sparsity $M$, and the measurement matrix is given by $\mathbf A \in \mathbb R^{K \times N}$ and satisfies $(k, \delta)$-RIP, we have that the truncated matrix $\mathbf A' \in \mathbb R^{K \times M}$, where only the columns in the true support are kept, we have that for a given error vector $\mathbf e$ (where $\|\mathbf e\|_2 \le \eta$), we have that
$$\|(\mathbf A'^T\mathbf A')^{-1}\mathbf A'^T\mathbf e\|_2 \le \|(\mathbf A'^T\mathbf A')^{-1}\|_\text{op}\|\mathbf A'^T\mathbf e\|_2 \le \frac{\sqrt{1 + \delta}}{1 - \delta}\eta$$
Consequently, just as before, for $\eta = \mathcal O(\epsilon)$ sufficiently small and $\delta$ a small constant, we can estimate the solution to $\epsilon$-accuracy. Furthermore, since $\mathbf A' \in \mathbb R^{K \times M}$, the time complexity for this scales as $\mathcal O(KM^2) = \mathcal O(M^3n \log^2 M)$.

For our purposes, because we have to perform this for $\mathcal O(M^2)$ coefficients, this gives an overall post-processing time of $\mathcal O(M^5n \log^2 M)$.





\end{document}